\definecolor{TUMBlau}{RGB}{0,101,189} 
\definecolor{TUMBlauDunkel}{RGB}{0,82,147} 
\definecolor{TUMBlauHell}{RGB}{152,198,234} 
\definecolor{TUMBlauMittel}{RGB}{100,160,200} 
\definecolor{TUMElfenbein}{RGB}{218,215,203} 
\definecolor{TUMGruen}{RGB}{162,173,0} 
\definecolor{TUMOrange}{RGB}{227,114,34} 
\definecolor{TUMGrau}{gray}{0.6} 
\definecolor{TUMGruenDunkel}{RGB}{0,124,48} 
\definecolor{TUMRot}{RGB}{196,7,27} 
\definecolor{TUMBlue}{RGB}{0,101,189} 
\definecolor{TUMBlueDark}{RGB}{0,82,147} 
\definecolor{TUMBlueLight}{RGB}{152,198,234} 
\definecolor{TUMBlueMedium}{RGB}{100,160,200} 
\definecolor{TUMIvory}{RGB}{218,215,203} 
\definecolor{TUMGreen}{RGB}{162,173,0} 
\definecolor{TUMGray}{gray}{0.6} 
\definecolor{TUMGrayDark}{gray}{0.3} 
\definecolor{TUMGreenDark}{RGB}{0,124,48} 
\definecolor{TUMRed}{RGB}{196,7,27} 
\definecolor{plotColor1}{RGB}{0,101,189} 
\definecolor{plotColor2}{RGB}{0,124,48} 
\definecolor{plotColor3}{RGB}{196,7,27} 
\definecolor{plotColor4}{RGB}{227,114,34} 
\definecolor{plotColor5}{RGB}{0,82,147} 
\definecolor{plotColor6}{RGB}{162,173,0} 
\definecolor{plotColor7}{gray}{0.3} 
\definecolor{plotColorIA1}{RGB}{0,124,48} 
\definecolor{plotColorIA2}{RGB}{0,82,147} 
\definecolor{plotColorIA3}{RGB}{100,160,200} 
\definecolor{plotColorIA4}{RGB}{152,198,234} 
\definecolor{plotColorIA5}{gray}{0.3} 
\definecolor{plotColorIA6}{RGB}{227,114,34} 
\definecolor{plotColorIA7}{RGB}{196,7,27} 
\newif\ifshowComment
\newcolumntype{C}{>{$}c<{$}} 
\pgfplotsset{compat=newest}
\newcommand{\footnoteref}[1]{\protected@xdef\@thefnmark{\ref{#1}}\@footnotemark}
\newcommand{\cD}{\mathcal{D}}
\newcommand{\cK}{\mathcal{K}}
\newcommand{\cL}{\mathcal{L}}
\newcommand{\cM}{\mathcal{M}}
\newcommand{\cN}{\mathcal{N}}
\newcommand{\cR}{\mathcal{R}}
\newcommand{\cS}{\mathcal{S}}
\newcommand{\cW}{\mathcal{W}}
\newcommand{\cX}{\mathcal{X}}
\newcommand{\cY}{\mathcal{Y}}
\newcommand{\cZ}{\mathcal{Z}}
\DeclareMathOperator{\supp}{supp}
\newcommand*\diff{\mathop{}\!\mathrm{d}}
\tikzset{%
	dots/.style args={#1per #2}{%
		line cap=round,
		dash pattern=on 0 off #2/#1
	}
}
\definecolor{light-gray}{gray}{0.7}
\definecolor{dark-gray}{gray}{0.3}
\renewcommand{\epsilon}{\varepsilon}
\definecolor{darkgray}{RGB}{64,64,64}
\definecolor{litegray}{RGB}{192,192,192}
\tikzstyle{block}=[draw, rectangle, minimum height=1cm, text width=1.5cm, text centered, draw=darkgray, font=\small]
\tikzstyle{block_medium}=[draw, rectangle, minimum height=1.5cm, text width=2cm, text centered, draw=darkgray, font=\small]
\tikzstyle{block_large}=[draw, rectangle, minimum height=1.75cm, text width=3cm, text centered, draw=darkgray, font=\small]
\tikzstyle{line} = [draw, -latex]
\definecolor{mycolor1}{rgb}{0.00000,0.44700,0.74100}
\definecolor{mycolor2}{rgb}{0.85000,0.32500,0.09800}
\definecolor{mycolor3}{rgb}{0.4660, 0.6740, 0.1880}
\renewcommand{\epsilon}{\varepsilon}
\tikzstyle{line} = [draw, -latex]
\newcommand{\defeq}{\vcentcolon=}
\tikzstyle{block_medium}=[draw, rectangle, minimum height=1.5cm, text width=2cm, text centered, draw=darkgray, font=\small]
\tikzstyle{block_large}=[draw, rectangle, minimum height=2cm, text width=2cm, text centered, draw=darkgray, font=\small]
\tikzstyle{decision} = [diamond, draw, fill=blue!20, 
\tikzstyle{block} = [rectangle, draw, fill=blue!20,  text centered, rounded corners, minimum height=4em]
\tikzstyle{line} = [draw, -latex']
\tikzstyle{cloud} = [draw, ellipse,fill=red!20, node distance=6.6cm,
\tikzstyle{algorithm} = [rectangle, draw, fill=green!20,  text centered, rounded corners, minimum height=4em, minimum width =6em]
\tikzstyle{initialization} = [rectangle, draw,   text centered, minimum height=4em, minimum width =6em]
\DeclareMathOperator{\erfc}{erfc}
\newcommand*\fooA{\mathrel{-\mkern-3mu{\circ}\mkern-3mu-}}
\begin{document}
\title{Information Theoretic Analysis of \\PUF-Based Tamper Protection}

\author{Georg Maringer\inst{1,2} \and Matthias Hiller\inst{2}}
\institute{
    Technical University of Munich, Munich, Germany, \email{georg.maringer@tum.de}
    \and
    Fraunhofer AISEC, Garching, Germany, \email{{georg.johannes.maringer, matthias.hiller}@aisec.fraunhofer.de}
}

\maketitle

\keywords{Physical Unclonable Functions \and Tamper Protection \and Error Correction \and Wiretap Channel \and Secret Sharing \and Physical Layer Security}

\begin{abstract}
   Physical Unclonable Functions (PUFs) enable physical tamper protection for high-assurance devices without needing a continuous power supply that is active over the entire lifetime of the device. Several methods for PUF-based tamper protection have been proposed together with practical quantization and error correction schemes. In this work we take a step back from the implementation to analyze theoretical properties and limits. We apply zero leakage output quantization to existing quantization schemes and minimize the reconstruction error probability under zero leakage. We apply wiretap coding within a helper data algorithm to enable a reliable key reconstruction for the legitimate user while  guaranteeing a selectable reconstruction complexity for an attacker, analogously to the security level for a cryptographic algorithm for the attacker models considered in this work. We present lower bounds on the achievable key rates depending on the attacker's capabilities in the asymptotic and finite blocklength regime to give fundamental security guarantees even if the attacker gets partial information about the PUF response and the helper data. Furthermore, we present converse bounds on the number of PUF cells. Our results show for example that for a practical scenario one needs at least 459 PUF cells using 3 bit quantization to achieve a security level of 128 bit.
\end{abstract}

\section{Introduction}

Physical Unclonable Functions (PUFs) evaluate physical properties of devices to obtain unique identifiers of electronic devices and provide physical roots of trust for cryptographic keys. Furthermore, PUFs can serve as a foundation for tamper protection technology that facilitates to validate the physical integrity of an embedded system after its power-up. All approaches have in common that minuscule manufacturing variations within physical objects, or mostly electronic components, are evaluated to generate an internal device-unique output. While there are several works on the assessment of the entropy or randomness of PUFs \cite{maiti2013systematic,wilde2018spatial,frisch2023practical} and the leakage through the published helper data necessary within the reconstruction phase, e.g. \cite{DGSV15,DGV+16}, we are currently lacking a theoretical model to quantify the security in the light of a physical attacker who destroys parts of the PUF response to read out the remainder.

Going from silicon PUFs, e.g. SRAM, ring oscillator or arbiter PUF \cite{HYKD14}, to system-level PUFs facilitates to incorporate tamper protection capabilities to protect an entire embedded device with components that cannot resist advanced physical attacks on their own, such as processors, FPGAs, or external memories and their communication, or discrete components that are susceptible e.g. to side-channel attacks. This reduces the attack surface from several individual vulnerabilities, e.g., against laser or EM fault injection, EM or optical side channel analysis, and analysis of digital communication interfaces between components to the attack resistance of the surrounding barrier. We also consider it infeasible to perform power side-channels since the attacker can only access the power supply of the entire printed circuit board and has no direct access to the power supply of the chip or individual discrete components.

For the remainder of this work, we will refer to the PUF-based tamper protection foil proposed by Immler \emph{et al.} \cite{IOK+18,ION+19}. However, the generic results of this work can be adapted and applied to other PUF types as well. Examples are the coating PUF \cite{tuyls2006read} and the polymer waveguide PUF \cite{vai2016secure,geis2017}. The tamper protection is based on a foil that is wrapped around an entire Printed Circuit Board (PCB), or a cover that is placed on top and bottom. This foil consists of a mesh of electrodes, leading to a large number of capacitances that can be measured by a mixed-signal circuit from within the protected area. It evaluates the capacitive coupling between electrodes to derive the cryptographic key and to validate the physical integrity of the system, and performs run-time tamper detection during operation to protect the system.

One of the critical attack vectors, considered during the evaluation of hardware devices with security boxes  \cite{jilHDSB}, is that an attacker penetrates the foil with a small needle or drill and accesses internal signals. If the required drill diameter is sufficiently large, major changes occur in the capacitance measurements  of a significant portion of the foil, leading to an incorrectly reconstructed PUF response during the reconstruction phase. Therefore, the secret cannot be uncovered by an attacker, as discussed, e.g., in \cite{GXKF22}. As the foil's PUF values may also change over time due to noise, aging, and varying environmental conditions such as temperature or humidity, an error-correcting code is implemented in the system to compensate for those effects to ensure that the correct cryptographic key is derived with a probability $> 1-10^{-6}$ or even $>1-10^{-9}$ so that the PUF does not have significant impact on the reliability of the overall system.

Our goal is to analyze the resulting wiretap channel \cite{wyner1975wire,csiszar1978broadcast} between the enrollment and reconstruction phase of the legitimate user as well as the reconstruction phase of the attacker from an information theoretical point of view. We establish lower bounds on the secrecy capacities of the resulting channels as well as finite blocklength achievability and converse bounds on the maximal achievable secrecy rate, making our results relevant in practice as they provide benchmarks for implementations by quantifying the distance of a practical implementation to the theoretical limit.

\subsection{Related Works}
For a survey on information and coding theoretic techniques covering enrollment and reconstruction without tamper protection see \cite{gunlu2020optimality}. We also mention literature in the context of biometric secrecy as this field is closely related to PUFs. \cite{gunlu2019code} for example code constructions for both biometric secrecy systems as well as PUFs are given. Achievable rate regions of biometric secrecy systems under security and privacy constraints are presented in \cite{ignatenko2009biometric}. Approaches to achieve biometric secrecy using Slepian-Wolf distributed source coding techniques are presented in \cite{vetro2009securing,draper2007using}.

\subsection{Main Results}

The main results presented in this work are:

\begin{itemize}
	\item Information theoretical channel model including zero leakage helper data generation for physical tampering with PUFs
	\item Asymptotic results for lower bounds on the channel capacity of the resulting PUF-channel under different attack scenarios
	\item Finite blocklength achievability and converse results on the number of required capacitances to achieve a predefined security level in two attack scenarios
	\item Proof that previously used helper data schemes do not achieve required security levels without leaking information about the secret via the helper data
	\item Quantitative results that demonstrate that a $128$ bit security level is achievable with $1400$ PUF cells for $18\%$ and $36\%$ erasure probability for digital and analog attacker, respectively
        \item Proof that an existing converse bound on finite blocklength wiretap codes cannot be tight for all channels
\end{itemize}

\subsection{Outline}

Section~\ref{sec:sota} gives a brief overview over related work. Section~\ref{sec:preliminaries} introduces the notation used throughout this work and recaps known results in the field of information theory, in particular for finite blocklength that are used to proof the main results presented in this work. Furthermore, helper data algorithms with an emphasis on zero leakage helper data are recapped and their connection between secret sharing using common randomness is examined. Section~\ref{sec:channel_model} gives some background information on the foil PUF and introduces the resulting channel model. In Section~\ref{sec:securing_foil}, we obtain results on the secret key capacity of the HDA for digital and analog attacker. Section~\ref{sec:secret_sharing_oneway} investigates secret sharing using common randomness using one-way communication for finite blocklength. It serves as a foundation to analyze the HDA performance with respect to the required amount of capacitive PUF cells for the foil PUF presented in Section~\ref{sec:finite}. In Section~\ref{sec:converse_application} we use the converse result on the necessary amount of PUF cells to show that a given implementation has to either leak via the helper data or be insecure by other means. Section~\ref{sec:conclusion_pufs} sums up the results and states open problems.

\section{State of the Art}\label{sec:sota}

While some silicon PUFs such as the SRAM or arbiter PUF directly output digital information, other silicon PUFs, like the Ring-Oscillator or TERO PUF, and in particular non-silicon PUFs output analog, or finely quantized digital data. They all have in common, that they undergo a processing chain involving helper data and error correction, until a cryptographic key is output (see Section~\ref{subsec:HDA}).

\subsection{PUF-Based Tamper Protection}\label{subsec:puf-basedtampprot}

In the past, tamper protection was implemented through a continuously powered detection mechanism, e.g. in tamper-responsive envelopes and covers that wrap or cover the structure to be protected \cite{IMFC13,OI18b}. Within the protective structure, a physical measure such as electrical resistance is measured and triggers an alarm as soon as the measured value exceeds a threshold to erase sensitive information, stored e.g. in battery-backed memory. While the device is only active during a fraction of the time, the protective measures need to be active for the entire life-time of the device, after it leaves a trusted manufacturing site. PUF-based tamper protection promises to increase sensitivity and to ease the handling during operation of the devices, as the device can be fully powered off. Thus PUFs, can contribute to an easy-to-handle and future-proof tamper protection technology.

Over the last 20 years, different measures were taken to combine PUFs and tamper protection. An early type is the coating PUF, where a protective coating is spread over an IC and evaluated from its inside to detect physical tampering when the coating is penetrated \cite{SMKT06,TSS+06}. In addition to electrical measurements, also optical approaches \cite{esbach2012new,vai2015secure}, mechanical properties \cite{GS22} or propagation behavior of radio waves \cite{STZP22} were proposed.

In this work, we use PUF-based tamper protection realized by foils and covers, as also proposed in \cite{IOK+18,Imm19,ION+19,GOFK21} as reference. This scenario is depicted in Figure~\ref{fig:pufenv}. The foil is wrapped around a PCB with mounted electronics components to be protected and consists of two structured electrode layers (blue and red) and two electrical shielding layers (black). The capacitive coupling between the electrode layers is measured to obtain the PUF response \cite{OIHS18}. In addition, faster mechanisms for run-time tamper protection are included.

\begin{figure}
	\centering
	\includegraphics[width=0.5\linewidth]{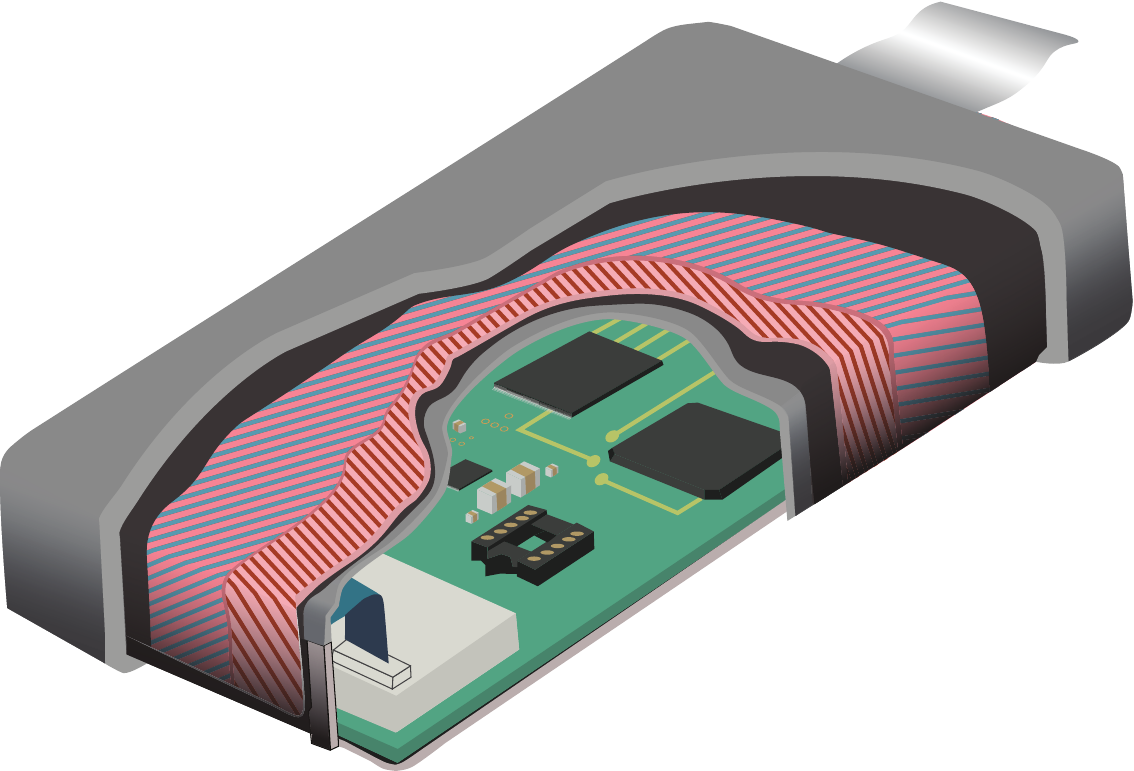}
	\caption{Sketch of PUF-based envelope \cite{IOK+18} with two conducting electrode layers in red and blue, and two shielding layers in black}
	\label{fig:pufenv}
\end{figure}

The capacitance values are subject to measurement noise, temperature and other environmental effects, as discussed based on measured values, e.g. in \cite{GXKF22}. A broad range of deterministic effects can be compensated with linear or also higher order reference points of fits \cite{OIHS18,GXKF22,riehm2023structured,gunlu2015reliable}, whereas Gaussian measurement noise remains in all cases. We will focus on this fundamental noise issue using a wiretap scenario in this work and refer to the compound case for including multiple environmental conditions \cite{liang2009compound,boche2013secret}.

\subsection{Quantization}

Analog PUFs evaluated on embedded devices need to quantize the digitized PUF data into a finite alphabet that is input into the error correcting code (see Section~\ref{subsec:hda_analog}). Typically, public helper data that references the distance and direction to the next interval border is stored to move a measured data point away from decision borders right into the middle of the quantization interval. So far, work on PUFs typically either uses equiprobable or equidistant quantization \cite{IHKS16}. As shown in Section~\ref{subsec:zero_leakage_hds} it is possible though to use arbitrary input quantization while still generating helper data that is not leaking any information about the secret (zero-leakage helper data).

\paragraph{Equiprobable Quantization} The range of possible output values of the PUF is split into $d$ intervals with the same probability such that all indices as sampled with the same probability. This is favorable from a security point of view, as iid PUF values are mapped into uniform data in the finite alphabet. However, this comes with two downsides: 

First, the common helper data generation bringing the expected value during reconstruction into the center of the interval leaks information about the secret, as the helper data pointers of different intervals have different distributions \cite{IHKS16}. Later, we will introduce a method to obtain zero leakage helper data such that this problem is mitigated. 

Second, the decision borders in the center are rather narrow, so that the values in this intervals are subject to a higher error rate. The wide intervals in the tails of the Gaussian distribution considered in our model also have a decreased sensitivity for physical tampering.

Equiprobable quantization can be applied if the requirement for uniformity is more substantial than the requirement for tamper detection sensitivity \cite{GXKF22}.

\paragraph{Equidistant Quantization} In contrast, equidistant quantization samples the analog values by mapping them to intervals of the same size. This has the advantage that additive noise effects all values in the same way and error probabilities are constant with respect to the PUF values. Also, only a negligible leakage can be observed \cite{IHKS16} through the aforementioned helper data pointers. With the later on introduced zero leakage helper data algorithms, this is less of a factor. As a downside, this comes at the expense of a heavy bias as the intervals differ in probability. This can be mitigated through variable-length encoding at the expense of a limited selection of code classes for the later ECC \cite{IHL+18}.
As shown in \cite{BDH+10}, also higher-dimensional structures can be used for embedding secret data, which adds more degrees of freedom.

\subsection{Error Correction}

Aside from the helper data used to perform better output quantization during reconstruction (denoted by $W^n$ later in Section~\ref{subsec:HDA}) additional helper data $\widetilde{W}$ is generated to link the PUF response to a codeword of an error correcting code. 
This can be done either by linear schemes that generate the link through linear dependencies such as syndromes or XORs or pointers that refer to parts of the PUF response \cite{HKS20}.

In any case, an ECC is used to reduce the key error probability e.g. down to $10^{-6}$ or $10^{-9}$ to generate reliable keys. This can be achieved with standard codes such as BCH, Reed-Solomon or Convolutional codes \cite{MS77}, or newer code classes such as limited magnitude codes \cite{IU19} or Polar codes \cite{CIW+17,GXKF22}.

In addition, wiretap codes can be used either for leakage prevention \cite{HO17} or attack prevention \cite{GXKF22}.

Over the last decades, several schemes have been proposed and implemented to address the design of the error correcting codes (ECCs) and helper data, see e.g. \cite{DGSV15,HKS20}. In the following, we focus on the generation of input and output quantizers as well as fundamental theoretical limits on the amount of required PUF cells. Designers can then benchmark their implementations against the fundamental limits. The construction of a practical wiretap code and considerations for the leakage of the helper data connecting the PUF responses and the coding scheme are therefore out of the scope of this work.

\section{Preliminaries} \label{sec:preliminaries}
\subsection{Notation}
We denote scalars by lowercase letters and vectors by lowercase bold letters. Matrices are denoted by uppercase bold letters. We denote the $i$-th column of the matrix $\mathbf{A}$ by $\mathbf{a}_i$.

We denote random variables (RVs) by uppercase letters and their realizations by lowercase letters, i.e. the realization of a RV $X$ is denoted by $x$. Furthermore, we denote the probability mass function of a discrete RV $X$ by $P_X(x)$, the probability density function of a continuous RV $X$ by $f_X(x)$ and the cumulative distribution by $F_X(x)$ in both cases. We denote the expectation operator by $\mathbb{E}[\cdot]$. If three random variables $X,Y,Z$ form a Markov chain, i.e. $P_{XYZ}(x,y,z) = P_X(x) P_{Y|X}(y|x) P_{Z|Y}(z|y)$, we write $X \fooA Y \fooA Z$.

We denote the Gaussian distribution with mean $\mu$ and variance $\sigma^2$ by $\cN(\mu,\sigma^2)$ and we write that a RV $X$ is distributed according to a Gaussian by $X \sim \cN(\mu,\sigma^2)$.

We denote the Gaussian Q-function by
\begin{equation*}
	Q(x) \defeq \int_x^\infty \frac{1}{\sqrt{2\pi}} \exp\left(-\frac{z^2}{2}\right) \diff z
\end{equation*}
and the complementary error function by
\begin{equation*}
	\erfc(x) \defeq \frac{2}{\sqrt{\pi}} \int_x^\infty \exp(-z^2) \diff z \enspace .
\end{equation*}

We define for a functions $f(n),g(n)$ the Landau symbols 
\begin{equation*}
	f(n)\in o(g(n)) \Leftrightarrow \lim_{n \to \infty} \left\lvert \tfrac{f(n)}{g(n)}\right\rvert=0
\end{equation*}	
and 
\begin{equation*}
	 f(n) \in \mathcal{O}(g(n)) \Leftrightarrow  \limsup_{n \to \infty} \left\lvert\tfrac{f(n)}{g(n)}\right\rvert < \infty \enspace .
\end{equation*}

We denote sets by caligraphic letters, e.g., a set $\cS$ and its cardinality by $|\cS|$.

To keep the paper self-contained, we provide selected basics in information and coding theory used in the following sections in Appendix~\ref{app:basics_info_coding}.

\subsection{The Wiretap Channel}\label{subsec:wiretap_channel}
In \cite{wyner1975wire} Wyner introduced a channel model, in the following referred to as the Degraded Wiretap Channel (DWTC) (see also \cite{bloch2011physical}). The channel has one input, in the following denoted by the random variable $A$, and two outputs $B$ and $E$ accessed by Bob and Eve, respectively. This is illustrated in the upper part of Fig.~\ref{fig:WT_Channel}. The goal of Alice is to use this channel to reliably transmit a message $m$ to Bob, i.e. $m=\hat{m}$ with high probability while keeping any information about the message secret from an eavesdropper called Eve. For the degraded case it is obvious that the channel from Alice to Bob has higher channel capacity compared to the one from Alice to Eve. The joint distribution of the channel input and its output is given as $P_{ABE}(a,b,e) = P_A(a) P_{B|A}(b|a) P_{E|B}(e|b)$. The same does not necessarily hold for the more general wiretap channel (WTC) studied by Csiszar and Körner in \cite{csiszar1978broadcast} and shown in the lower part in Fig.~\ref{fig:WT_Channel}. Here, the eavesdropper's channel output $E$ is not a degraded version of $B$ rather it is directly generated from $A$ through a noisy channel, i.e. $P_{ABE}(a,b,e) = P_A(a) P_{B|A}(b|a) P_{E|A}(e|a)$ in this case.

\begin{figure}[t]
	\begin{center}
		\includegraphics{./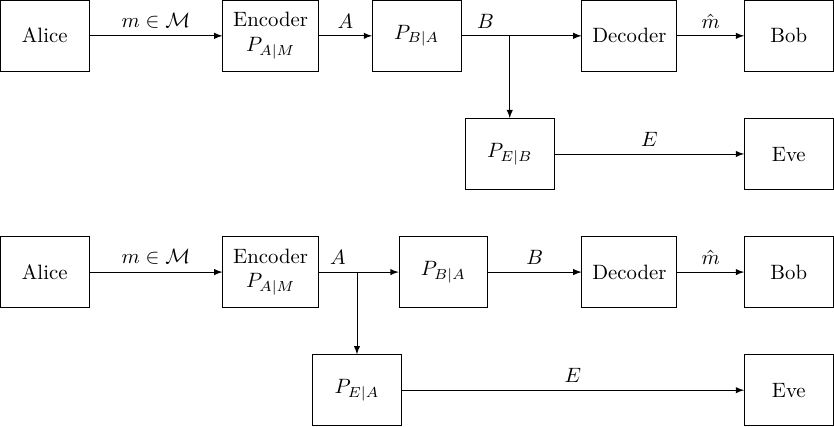}
	\end{center}
	\caption{Degraded wiretap channel (top), wiretap channel (bottom)}
	\label{fig:WT_Channel}
\end{figure}

The next definition is a slightly adapted version of the definition of secrecy codes in \cite{yang2019wiretap}. 
\begin{definition}\label{def:secrecy_coding_strategy}
	An $(|\mathcal{M}|,\varepsilon,\delta)$ secrecy coding strategy (also referred to as a wiretap code) for a (degraded) wiretap channel $(\mathcal{A},P_{B,E|A},\mathcal{B}\times \mathcal{E})$ consists of
	\begin{itemize}
		\item a set of possible messages $\mathcal{M}:=\{1,\ldots,|\mathcal{M}|\}$ from which a message $M=m$ is selected,
		\item a randomized encoder that generates a codeword $A(m)$ for $m \in \mathcal{M}$ according to a pdf $P_{A|M=m}$ and
		\item a decoder $Dec:\mathcal{B} \to \mathcal{M}$ that assigns an estimate $\hat{M}$ to each received signal $B \in \mathcal{B}$.
	\end{itemize}
	Encoder and decoder satisfy the average error probability (averaging performed by uniformly sampling the input message and over the randomness induced by the stochastic encoder)
	\begin{equation*}
		Pr(Dec(B) \neq M) \leq \varepsilon
	\end{equation*}
	where for $B$ it holds that $B$ is distributed according to $P_{B|M}(b|m) = \sum_{a \in \mathcal{A}} P_{B|A}(b|a) P_{A|M}(a|m)$. We distinguish average secrecy and maximum secrecy in the following way. For average secrecy with security parameter $\delta$ we require that
	\begin{equation}\label{eq:avg_security}
		d(P_{ME},P_M^{unif} P_E) \leq \delta \enspace .
	\end{equation}
	where $P_M^{unif}$ denotes the uniform distribution over the space of possible messages. In contrast for maximum secrecy it has to hold that
	\begin{equation}\label{eq:max_security}
		\max_{m\in\mathcal{M}} d(P_{E|M=m},Q_E) \leq \delta \enspace ,
	\end{equation}
	where $Q_E$ is the marginal distribution of $E$ if uniformly distributed messages are transmitted over the channel.
	If we make the number of channel uses specific, we call an $(|\mathcal{M}|,\varepsilon,\delta)_{avg}$ average secrecy code for the channel $P_{B^nE^n|A^n}$ an $(n,|\mathcal{M}|,\varepsilon,\delta)_{avg}$ secrecy code. Codes for maximum secrecy are denoted by $(n,|\mathcal{M}|,\varepsilon,\delta)_{max}$. In the following we frequently omit specifying whether we are interested in the average or maximum secrecy setting. The definitions work analogously for both in those cases and when it is not clear from context we specify average or maximum secrecy in the index. We define the maximal achievable secrecy rate by
	\begin{equation}
		R^*(n,\varepsilon,\delta):= \max\left\{\frac{\log(|\mathcal{M}|)}{n}: \exists (n,|\mathcal{M}|,\varepsilon,\delta) \text{ secrecy code}\right\} \enspace .
	\end{equation}
\end{definition}
It makes intuitively sense that the security conditions in equations~\eqref{eq:avg_security} and \eqref{eq:max_security} make it hard for an attacker to obtain information about the message $m$. The following Theorem quantifies this statement for average secrecy.
\begin{theorem}[\cite{yang2016finite} Thm. 8]\label{th:listsize}
	Let the output of an arbitrary list decoder $\cL$ given Eve's observation $E$ be denoted by $\cL(E)$. Let $\delta$ be the secrecy parameter of the implemented secrecy code for the respective wiretap channel. Then the probability that the transmitted codeword is not in the output list of Eve's list decoder having listsize $L$ is lower bounded by
	\begin{equation}
		P_{ME}(M \notin \mathcal{L}(E)) \geq 1 - \delta - \frac{L}{|\mathcal{M}|} \enspace .
	\end{equation}
\end{theorem}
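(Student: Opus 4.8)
The plan is to play the true joint law $P_{ME}$ of message and eavesdropper observation off against the ``ideal'' product law $P_M^{unif}P_E$ appearing in the average-secrecy condition~\eqref{eq:avg_security}, exploiting that a list decoder is a function of Eve's observation $E$ alone. The crucial structural fact is: under the ideal law the message is uniform on $\mathcal{M}$ and statistically independent of $E$, so for every fixed realisation $e$ the output list $\mathcal{L}(e)\subseteq\mathcal{M}$ is a fixed set with $|\mathcal{L}(e)|\le L$, and a uniform message lands in it with probability $|\mathcal{L}(e)|/|\mathcal{M}|\le L/|\mathcal{M}|$. Averaging over $e$ shows that under $P_M^{unif}P_E$ the ``success'' event $\{M\in\mathcal{L}(E)\}$ has probability at most $L/|\mathcal{M}|$.

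The second step is to transport this estimate to the true law. Taking $d(\cdot,\cdot)$ to be the total variation distance, and noting that $\{(m,e):m\in\mathcal{L}(e)\}$ is a measurable subset of $\mathcal{M}\times\mathcal{E}$, the defining property of total variation gives
\begin{equation*}
	\bigl| P_{ME}(M\in\mathcal{L}(E)) - (P_M^{unif}P_E)(M\in\mathcal{L}(E)) \bigr| \le d(P_{ME},P_M^{unif}P_E) \le \delta .
\end{equation*}
Combining the two bounds yields $P_{ME}(M\in\mathcal{L}(E))\le L/|\mathcal{M}|+\delta$, and passing to the complement gives $P_{ME}(M\notin\mathcal{L}(E))\ge 1-\delta-L/|\mathcal{M}|$, which is the claim.

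If the list decoder is randomised, I would first condition on its internal coins $R$ (drawn independently of $(M,E)$): for each fixed value of $R$ the decoder is deterministic, the argument above applies verbatim, and the bound survives averaging over $R$, since independence keeps the ideal law a product and $E\mapsto\mathcal{L}(E,R)$ cannot increase statistical distance. The only point needing a little care is that the event entering the total-variation step must be the \emph{same} subset of $\mathcal{M}\times\mathcal{E}$ under both laws, i.e. the decoder may not depend on which law governs the transmission; this is automatic, as a decoder is fixed before any codeword is sent. I expect no real obstacle here: the theorem is essentially the operational reading of total variation distance combined with the elementary counting bound $|\mathcal{L}(E)|\le L$.
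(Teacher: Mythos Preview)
Your argument is correct and is the standard two-step proof: bound the list-hit probability under the ideal product law $P_M^{unif}P_E$ by $L/|\mathcal{M}|$ via independence and the list-size constraint, then transfer the bound to $P_{ME}$ via the characterisation $d(P,Q)=\sup_A|P(A)-Q(A)|$ of total variation together with the secrecy condition~\eqref{eq:avg_security}. The present paper does not supply its own proof of this statement---it is merely quoted from \cite{yang2016finite}---and the argument there is precisely the one you give; your extra paragraph on randomised decoders is a harmless generalisation that the cited result does not spell out but which follows exactly as you indicate.
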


\begin{definition}
	The secrecy capacity $C_S$ of a wiretap channel is defined by
	\begin{equation}
		C_S := \frac{1}{n} \limsup_{n\to \infty} R^*(n,\varepsilon,\delta)
	\end{equation}
	for arbitrarily small values $\varepsilon>0$ and $\delta>0$.
\end{definition}
The secrecy capacity for both channels is well known and given in the following Theorems.
\begin{remark}
	Notice that we did not distinguish the secrecy capacity for average and maximal secrecy because their value is the same. However for fixed $n,\varepsilon,\delta$, the values $R^*_{avg}(n,\varepsilon,\delta)$ and $R^*_{max}(n,\varepsilon,\delta)$ can be different.
\end{remark}

\begin{theorem}[\cite{wyner1975wire}]\label{th:cap_dwtc}
	For the degraded wiretap channel with input $A$ and outputs $B$ and $E$ for legitimate and eavesdropper, respectively, the secrecy capacity is equal to
	\begin{equation}\label{eq:degraded_wiretap_channel}
		C_S = \max_{P_A} I(A;B)-I(A;E) \enspace .
	\end{equation}
\end{theorem}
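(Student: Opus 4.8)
The plan is to establish the two bounds $C_S \ge \max_{P_A}[I(A;B)-I(A;E)]$ (achievability) and $C_S \le \max_{P_A}[I(A;B)-I(A;E)]$ (converse) separately, using a random wiretap code for the first and a single-letterization argument exploiting degradedness for the second.

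\textbf{Achievability.} Fix an input law $P_A$ and a rate $R < I(A;B) - I(A;E)$. I would use a stochastic bin-based encoder: draw $\lceil 2^{n(I(A;B)-\gamma)}\rceil$ codewords i.i.d.\ from $P_A^{\otimes n}$, partition them into $2^{nR}$ bins of equal size (each of size $\approx 2^{n(I(A;B)-\gamma-R)} > 2^{nI(A;E)}$ once $\gamma$ is small), and to transmit $M=m$ pick a codeword uniformly from bin $m$. Bob uses a joint-typicality decoder; since the total codebook size stays below $2^{nI(A;B)}$, the expected error probability over the random codebook vanishes, giving $\varepsilon_n\to 0$. For secrecy one shows that, averaged over the random codebook, $P_{E^n\mid M=m}$ is close in variational distance to the fixed product law $P_E^{\otimes n}$ for every $m$, because each bin, having rate above $I(A;E)$, ``soft-covers'' Eve's output; invoking a channel-resolvability (soft-covering) lemma then yields $d(P_{ME^n},P_M^{unif}P_{E^n})\to 0$. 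A final expurgation step removes the dependence on the random codebook while preserving both the reliability and the secrecy guarantees, so $(n,2^{nR},\varepsilon_n,\delta_n)$ codes with $\varepsilon_n,\delta_n\to 0$ exist for every $R<I(A;B)-I(A;E)$; maximizing over $P_A$ gives the claim.

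\textbf{Converse.} Let $(n,|\mathcal{M}_n|,\varepsilon_n,\delta_n)$ secrecy codes with $\varepsilon_n,\delta_n\to 0$ and $\tfrac1n\log|\mathcal{M}_n|\to R$ be given. Fano's inequality gives $H(M\mid B^n)\le 1+\varepsilon_n\log|\mathcal{M}_n|=:n\alpha_n$ with $\alpha_n\to0$, and the secrecy condition~\eqref{eq:avg_security} translates, by continuity of conditional entropy and finiteness of $\mathcal{M}$, into $I(M;E^n)\le n\beta_n$ with $\beta_n\to0$. Hence
\begin{equation*}
	nR = H(M) = I(M;B^n)+H(M\mid B^n) \le I(M;B^n)-I(M;E^n)+n(\alpha_n+\beta_n).
\end{equation*}
The memoryless degraded channel induces $M\fooA A^n\fooA B^n\fooA E^n$; writing $I(A^n;B^n)=I(M;B^n)+I(A^n;B^n\mid M)$ and likewise for $E^n$, and using that degradedness survives conditioning on $M$, data processing gives $I(A^n;E^n\mid M)\le I(A^n;B^n\mid M)$, so $I(M;B^n)-I(M;E^n)\le I(A^n;B^n)-I(A^n;E^n)$. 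Now $I(A^n;B^n)-I(A^n;E^n)=I(A^n;B^n\mid E^n)$ (since $I(A^n;E^n\mid B^n)=0$), and expanding this per coordinate and using the scalar degradedness $A_i\fooA B_i\fooA E_i$ yields $I(A^n;B^n\mid E^n)\le\sum_{i=1}^n I(A_i;B_i\mid E_i)=\sum_{i=1}^n[I(A_i;B_i)-I(A_i;E_i)]$. Since each summand is at most $\max_{P_A}[I(A;B)-I(A;E)]$ (time-sharing/convexity), dividing by $n$ and letting $n\to\infty$ gives $R\le\max_{P_A}[I(A;B)-I(A;E)]$.

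\textbf{Main obstacle.} I expect the delicate point to be the secrecy half of the achievability proof: obtaining vanishing \emph{total} leakage $d(P_{ME^n},P_M^{unif}P_{E^n})\to 0$ (rather than merely the normalized $\tfrac1n I(M;E^n)\to0$ of Wyner's original argument) requires a resolvability/soft-covering estimate and a double use of the random codebook — once for Bob's decoder, once for Eve's equivocation — followed by an expurgation that must preserve both properties at once. A minor technicality, since the channel model here has Gaussian noise, is that Eve's output $E$ may be continuous; the converse entropy-continuity step still goes through because $\mathcal{M}$ is finite, but one should state the soft-covering lemma in a form valid for general (non-discrete) output alphabets, or quantize $E$ and take limits.
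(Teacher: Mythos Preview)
The paper does not prove this theorem; it is stated as a classical result cited from Wyner~\cite{wyner1975wire} and used as background. Your proposal is therefore not comparable to a proof in the paper, but it is a correct and standard modern treatment of the result.

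A few remarks on the details. Your achievability argument via random binning with a soft-covering/resolvability step is the right way to obtain the secrecy guarantee in the form required by Definition~\ref{def:secrecy_coding_strategy} (vanishing total variation rather than Wyner's original normalized leakage $\tfrac{1}{n}I(M;E^n)\to 0$); you correctly flag this as the main technical point. Your converse is also sound: the step $I(A^n;E^n\mid M)\le I(A^n;B^n\mid M)$ holds because the channel law $P_{B^nE^n\mid A^n}$ is unchanged when conditioning on $M$, so $A^n\fooA B^n\fooA E^n$ given $M=m$; and the single-letterization of $I(A^n;B^n\mid E^n)$ works because for a memoryless channel $H(B_i\mid A^n,E^n,B^{i-1})=H(B_i\mid A_i,E_i)$, while $H(B_i\mid E^n,B^{i-1})\le H(B_i\mid E_i)$. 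One small point: when you pass from $d(P_{ME^n},P_M^{unif}P_{E^n})\le\delta_n$ to $I(M;E^n)\le n\beta_n$, the continuity bound you invoke scales like $\delta_n\log|\mathcal{M}_n|\approx nR\delta_n$, which is indeed $o(n)$ since $\delta_n\to 0$; it is worth making this explicit, as $|\mathcal{M}_n|$ grows with $n$.
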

\begin{theorem}[\cite{csiszar1978broadcast}]\label{th:cap_wtc}
	For the wiretap channel with input $A$ and outputs $B$ and $E$ for legitimate and eavesdropper, respectively, the secrecy capacity is equal to
	\begin{equation}\label{eq:wiretap_channel}
		C_S = \max_{P_{V,A}} I(V;B)-I(V;E) \enspace ,
	\end{equation}
	where $V$ serves as an auxiliary random variable and it holds that $V \fooA A \fooA (B,E)$.
\end{theorem}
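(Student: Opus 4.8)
The statement is the Csisz\'ar--K\"orner characterization of the wiretap-channel secrecy capacity, so the plan is to prove it in two halves. For \emph{achievability} one exhibits, for every admissible pair $P_{V,A}$ (one with $V \fooA A \fooA (B,E)$), a sequence of secrecy codes in the sense of Definition~\ref{def:secrecy_coding_strategy} whose rate approaches $I(V;B)-I(V;E)$. For the \emph{converse} one shows that no sequence of codes with vanishing error probability and vanishing leakage can have rate exceeding $\max_{P_{V,A}} I(V;B)-I(V;E)$. It is convenient to work throughout with the mutual-information leakage metric $\tfrac1n I(M;E^n)$; the passage to and from the total-variation metric of \eqref{eq:avg_security}--\eqref{eq:max_security} costs only $o(1)$ in rate (Pinsker's inequality in one direction, a continuity-of-entropy estimate with $|\mathcal M|\le 2^{nR}$ in the other), and by the remark above it is immaterial whether one targets average or maximum secrecy.

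For achievability, fix $P_{V,A}$ attaining the maximum and a small $\gamma>0$, and split the rate as $R_0 = I(V;B)-\gamma$, $R_1 = I(V;E)+\gamma$, $R = R_0-R_1$. Generate a random codebook of $2^{nR_0}$ sequences $v^n(m,k)$, $m\in\{1,\dots,2^{nR}\}$, $k\in\{1,\dots,2^{nR_1}\}$, each drawn i.i.d.\ from $P_V$; together with the channel $P_{A|V}$ this is the stochastic encoder required by Definition~\ref{def:secrecy_coding_strategy}: to send $m$, draw $k$ uniformly and transmit $A^n$ with $A_i \sim P_{A|V}(\cdot \mid v_i(m,k))$. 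Bob runs a joint-typicality decoder for the pair $(m,k)$; since $R_0 < I(V;B)$ its codebook-averaged error probability vanishes, which in particular delivers the reliability requirement of Definition~\ref{def:secrecy_coding_strategy}. For the leakage one writes $I(M;E^n) = I(MK;E^n) - I(K;E^n\mid M)$ and shows, using that a uniformly random bin of $2^{nR_1}$ i.i.d.\ $V^n$-codewords pushed through $P_{E|V}$ approximates the product output law $P_E^{\otimes n}$ as soon as $R_1 > I(V;E)$ (a soft-covering / channel-resolvability estimate, equivalently a typicality-counting bound on a conditional entropy), that $\tfrac1n I(M;E^n)\to 0$. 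Expurgating the worst codebooks gives a deterministic code with both properties, and letting $\gamma\downarrow 0$ along the sequence yields every rate below $I(V;B)-I(V;E)$.

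For the converse, take any sequence of $(n,|\mathcal M|,\varepsilon_n,\delta_n)$ codes with $\varepsilon_n,\delta_n\to 0$ and $R=\tfrac1n\log|\mathcal M|$. Fano's inequality gives $H(M\mid B^n)\le n\varepsilon_n'$ with $\varepsilon_n'\to 0$, whence $nR = H(M) \le I(M;B^n)+n\varepsilon_n' \le \big(I(M;B^n)-I(M;E^n)\big)+n\varepsilon_n'+n\delta_n'$, the last step absorbing the small leakage $I(M;E^n)$. The key step is to single-letterize $I(M;B^n)-I(M;E^n)$: set $U_i=(B^{i-1},E_{i+1}^n)$ and $V_i=(M,U_i)$, expand $I(M;B^n)=\sum_i I(M;B_i\mid B^{i-1})$ and $I(M;E^n)=\sum_i I(M;E_i\mid E_{i+1}^n)$ by the chain rule, and apply the Csisz\'ar sum identity twice---once conditioned on $M$ and once unconditioned---so that the telescoped cross terms cancel, leaving $I(M;B^n)-I(M;E^n)=\sum_{i=1}^n\big[I(V_i;B_i\mid U_i)-I(V_i;E_i\mid U_i)\big]$. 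Memorylessness of $P_{B,E\mid A}$ (the factorization $P_{ABE}=P_A P_{B|A}P_{E|A}$ assumed in the excerpt) gives $V_i\fooA A_i\fooA(B_i,E_i)$, and this persists after conditioning on any realization of $U_i$, so each conditional summand is at most $\max_{P_{V,A}}I(V;B)-I(V;E)$; summing and dividing by $n$ gives $R\le\max_{P_{V,A}}I(V;B)-I(V;E)+o(1)$. Finally a support-lemma (Carath\'eodory) argument caps $|\mathcal V|\le|\mathcal A|+1$, so the maximum is attained; specializing to a degraded channel, where $V=A$ is optimal, recovers Theorem~\ref{th:cap_dwtc}.

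The step I expect to be the main obstacle is the leakage (equivocation) analysis in achievability: bounding $I(M;E^n)$ uniformly is precisely where the subtraction $-I(V;E)$ and the need for the auxiliary $V$ together with the extra randomization index $k$ genuinely enter, and making it rigorous requires a resolvability-style soft-covering lemma (or a careful second-moment/typicality argument over the random binning) rather than a one-line estimate. On the converse side the only real ingenuity is choosing the auxiliary $V_i=(M,B^{i-1},E_{i+1}^n)$ and recognizing that the Csisz\'ar sum identity annihilates the unwanted terms; everything after that is bookkeeping. I would also be careful to verify that the no-feedback/no-help factorization is used consistently---it is what makes both the coding construction and the sum-identity manipulation go through---and that the two secrecy metrics are indeed interchangeable at the level of capacity.
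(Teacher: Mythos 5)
The paper does not prove Theorem~\ref{th:cap_wtc}; it is imported verbatim from \cite{csiszar1978broadcast} as a known result, so there is no in-paper argument to compare against. Your outline is a correct sketch of the classical proof from that literature---rate splitting with a randomization index $k$ plus a soft-covering/resolvability bound for achievability (which indeed yields the total-variation secrecy metric of Definition~\ref{def:secrecy_coding_strategy} directly), and Fano's inequality combined with the Csisz\'ar sum identity and the auxiliary choice $V_i=(M,B^{i-1},E_{i+1}^n)$ for the converse---and you correctly identify the equivocation analysis as the only genuinely delicate step.
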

\begin{remark}
	Notice that in both cases the capacity does not depend on $\varepsilon$ and $\delta$. This changes for the task of determining $R^*(n,\varepsilon,\delta)$ for finite $n$. We introduce bounds on $R_{avg}^*(n,\varepsilon,\delta)$ and $R_{max}^*(n,\varepsilon,\delta)$ for finite $n$ later in Section~\ref{subsec:finite_blocklength_wiretap}.
\end{remark}

\subsection{Secret Sharing using Common Randomness}\label{subsec:secret_sharing}

\begin{figure}[t!]
	\begin{centering}
		\resizebox{\textwidth}{!}{
			\includegraphics{./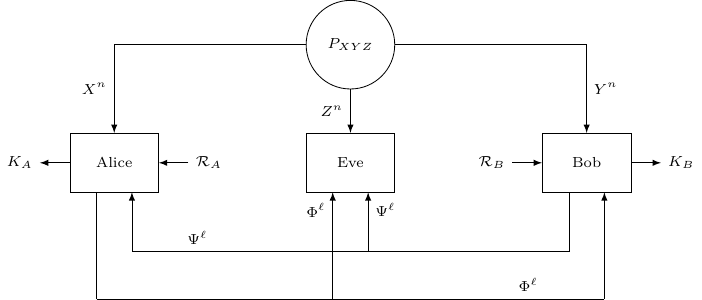}
		}
		\caption{Secret Sharing using Common Randomness}
		\label{fig:secret_sharing}
	\end{centering}
\end{figure}

Secret sharing using common randomness has been investigated by Maurer in \cite{maurer1993} and by Ahlswede and Csiszàr in \cite{ahlswede1993common}. The problem is graphically illustrated in Fig.~\ref{fig:secret_sharing}. In this section we explain the known results which will later be used in Section~\ref{sec:securing_foil}. In the following we describe the problem of deriving a secret key that is shared between two terminals, in the following called Alice and Bob, when the terminals have access to common randomness and are able to send messages to each other over a public channel. The source of common randomness is specified by a joint probability mass function $P_{XYZ}$. We denote the respective random variables specifying its outputs by $X,Y,Z$. This source is iid sampled $n$ times and we denote the random variables specifying the output of this process by $(X^n,Y^n,Z^n)$. The first terminal, in the following denoted by Alice, receives the sequence $X^n = (X_1,\ldots,X_n)$, while the second terminal (Bob) gets the sequence $Y^n = (Y_1,\ldots,Y_n)$. The third terminal (Eve), which is an adversary trying to obtain knowledge about the secret key that Alice and Bob shall agree on, is provided with the sequence $Z^n = (Z_1,\ldots,Z_n)$. The probability mass function $P_{XYZ}$ is publicly known, in particular by Alice, Bob and Eve. The goal of Alice and Bob is to reliably agree on a secret key while leaving Eve oblivious about it. To achieve this goal, Alice and Bob send messages to each other over the public channel that depend on their apriori knowledge of $P_{XYZ}$ and their respective shares $X^n$ or $Y^n$. Subsequent messages may also depend on previously received messages over the public channel coming from the other terminal. Eve is able to eavesdrop those messages but is unable to alter them or to insert additional messages into the public channel. We denote the $i$-th message sent from Alice to Bob by $\Phi_i$ and the $i$-th message sent from Bob to Alice by $\Psi_i$. Furthermore, we define $\Phi^i \defeq (\Phi_1,\ldots,\Phi_i)$ and $\Psi^i \defeq (\Psi_1,\ldots,\Psi_i)$. After Alice and Bob are finished with exchanging messages over the public channel, say after $\ell$ steps, Alice computes a key $K_A$ and Bob computes a key $K_B$ that are both within the same keyspace denoted by $\cK$. For the case that both keys are equal we simply denote the key by $K$.

As for transmitting data securely over a wiretap channel, it is essential for Alice and Bob to have access to local randomness to randomize the encoding function for the messages sent over the public channel. Hence, we assume that Alice and Bob have access to local sources of randomness $\cR_A$ and $\cR_B$, respectively. In order to generate the messages to be transmitted over the public channel they make use of those such that it holds that
\begin{align}
	\Phi_1 = \Phi_1(\cR_A,x^n), \quad \Psi_1 = \Psi_1(\cR_B,y^n)\\
	\Phi_i = \Phi_i(\cR_A,x^n,\Psi^{i-1}), \quad \Psi_i = \Psi_i(\cR_B,y^n,\Phi^{i-1}) \enspace ,
\end{align}
where $\cR_A$ and $\cR_B$ are independent from the jointly distributed random variables $X^n$ and $Y^n$ corresponding the source of common randomness.

We next formalize the secret key rate which is the figure of merit that we would like to maximize for this problem.

\begin{definition} \label{def:secret_key_rate}
	A secret key rate $R$ is called \textbf{achievable} if for every $\varepsilon>0$ and sufficiently large $n$ there exists a secret key agreement scheme such that
	\begin{enumerate}
		\item $Pr(K_A \neq K_B) < \varepsilon$
		\item $I(Z^n,\Phi^\ell,\Psi^\ell;K) < \varepsilon$
		\item $H(K) > R-\varepsilon$
		\item $\log_2(|\cK|) < H(K) + \varepsilon$.
	\end{enumerate}
\end{definition}

We next give some interpretation to the properties that an achievable secret key rate has according to Definition~\ref{def:secret_key_rate}.

The first property basically states that for sufficiently large $n$ the probability that the key at the Alice terminal is unequal to the key at Bob's terminal is arbitrarily small. The second property states that no information can be deduced from the messages shared over the public channel and the source component $Z^n$ about the key $K$. We recall at this point that random variables are stochastically independent if and only if their mutual information is zero and the second property says that we are able to approach this arbitrarily closely.
The third property states that the entropy of $K$ is basically at least $R$ because $\varepsilon$ is arbitrarily small. The fourth property states that the key is almost uniform over the keyspace $\cK$.

The natural question of finding the maximal achievable secret key rate, in the following referred to as the \emph{secret key capacity} as a function of $P_{XYZ}$ has been answered in \cite{ahlswede1993common} and \cite{maurer1993}.

\begin{theorem}\label{th:secret_key_capacity}
	The \textbf{secret key capacity} $\widetilde{C}_S$ denotes the maximal achievable secret key rate for a source of common randomness $P_{XYZ}$ and is bounded by
	\begin{equation} \label{eq:secret_key_cap}
		I(X;Y) - I(X;Z) \leq \widetilde{C}_S \leq I(X;Y|Z)  \enspace .
	\end{equation}
	Furthermore, the secret key capacity is achievable even if we only allow a single transmission over the public channel from Alice to Bob or from Bob to Alice.
\end{theorem}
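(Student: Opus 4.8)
The plan is to prove the two inequalities separately, following the classical arguments of Maurer~\cite{maurer1993} and Ahlswede--Csisz\'ar~\cite{ahlswede1993common}; the achievability half will simultaneously establish the one-way claim.

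\textbf{Achievability of the lower bound.} I would exhibit an explicit one-way scheme. Alice observes $X^n$ and, by a Slepian--Wolf random binning argument, sends Bob over the public channel the index $\Phi_1$ of the bin containing $X^n$, where the number of bins is $2^{n(H(X|Y)+\varepsilon')}$; standard joint-typicality decoding shows Bob recovers $\hat{X}^n = X^n$ from $(Y^n,\Phi_1)$ with probability tending to $1$, which gives property~1 of Definition~\ref{def:secret_key_rate}. Now Alice and Bob both hold $X^n$, while Eve has only $(Z^n,\Phi_1)$, and a short computation with the iid source gives $H(X^n\mid Z^n,\Phi_1)\ge n(H(X|Z)-H(X|Y)-\varepsilon')=n(I(X;Y)-I(X;Z)-\varepsilon')$ (the same holds for smooth min-entropy up to further vanishing terms). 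I would then apply privacy amplification: Alice selects a seed for a universal$_2$ hash family, transmits it as part of the public message so that Bob and Eve also know it, and both parties set $K=g_{\mathrm{seed}}(X^n)$ with output length $\approx n(I(X;Y)-I(X;Z)-2\varepsilon')$. The leftover hash lemma then delivers properties~2--4, namely that $K$ is nearly uniform on $\cK$ and nearly independent of $(Z^n,\Phi^\ell,\Psi^\ell)$. Since only $\Phi$ is ever sent, this scheme uses a single one-way transmission, proving the last sentence of the theorem.

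\textbf{Converse for the upper bound.} Write $F=(\Phi^\ell,\Psi^\ell)$ for the full public transcript and let $\epsilon_n\to 0$ collect the slacks from Definition~\ref{def:secret_key_rate}. Starting from $H(K)\le I(K;Y^nF)+H(K\mid Y^nF)$ and bounding $H(K\mid Y^nF)\le n\epsilon_n$ by Fano (property~1, since $K_A\approx K_B$) and $I(K;Z^nF)\le\varepsilon$ by property~2, I would reduce the problem to controlling $I(K;Y^nF)-I(K;Z^nF)=I(K;Y^n\mid Z^nF)-I(K;Z^n\mid Y^nF)\le I(K;Y^n\mid Z^nF)$. Using that Alice's key is a function $K_A(\cR_A,X^n,F)$ with $\cR_A$ independent of $(X^n,Y^n,Z^n)$, this is further bounded by $I(X^n;Y^n\mid Z^nF)$. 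The remaining ingredient is the \emph{transcript lemma}: because $F$ is generated round by round, each message depending only on the sender's data and the prior transcript, one has $I(X^n;F\mid Y^nZ^n)\le I(X^n;F\mid Z^n)$, equivalently $I(X^n;Y^n\mid Z^nF)\le I(X^n;Y^n\mid Z^n)$; this is proved by telescoping over rounds and using the Markov relation ``message $\perp$ non-sender's data given sender's data and previous transcript.'' Finally the iid structure single-letterizes $I(X^n;Y^n\mid Z^n)=nI(X;Y|Z)$, so dividing by $n$ and letting $\varepsilon\downarrow 0$ yields $\widetilde{C}_S\le I(X;Y|Z)$.

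\textbf{Main obstacle.} The delicate point is the converse transcript lemma: one must show that interactive public discussion cannot increase the achievable secret key rate beyond $I(X;Y|Z)$, carefully accounting for the contribution of each message $\Phi_i$ or $\Psi_i$ and for the local randomness, so that the relevant conditional mutual informations telescope rather than accumulate. A secondary technical issue, on the achievability side, is making reconciliation and privacy amplification coexist cleanly — in particular quantifying the leakage of the bin index $\Phi_1$ to Eve and verifying that publishing the hash seed does not compromise secrecy — but this is routine once the entropy bookkeeping above is in place.
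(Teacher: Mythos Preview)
Your proposal is correct but the achievability argument follows a \emph{different} route from the paper. You use the classical information-reconciliation plus privacy-amplification pipeline: Slepian--Wolf binning of $X^n$ followed by universal hashing. The paper instead builds an explicit \emph{wiretap-code} construction: Alice picks a codeword $u^n$ of a secrecy code (iid uniform over $\cX$), transmits $u^n+x^n$ over the public channel (a one-time pad), and observes that this turns the source model into a wiretap channel with input $U$, legitimate output $(U\oplus X,Y)$ and eavesdropper output $(U\oplus X,Z)$; the achievable secrecy rate of that channel, evaluated at $P_U$ uniform, collapses to $I(X;Y)-I(X;Z)$.

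Both approaches are standard and equally valid for the asymptotic statement, but they buy different things downstream. Your binning/hashing route is arguably cleaner conceptually and gives the converse and achievability in one self-contained package. The paper's wiretap-code route is chosen deliberately: it is exactly the mechanism implemented by the helper data algorithm for PUFs (Section~\ref{subsec:connection_hda_secret_sharing}), and---crucially for the rest of the paper---it lets one import the finite-blocklength wiretap bounds of Theorem~\ref{th:finite_wiretap} verbatim to obtain Theorems~\ref{th:finite_secret_ach} and~\ref{th:finite_secrecy_conv}. With your construction that transfer would require separate finite-blocklength control of the Slepian--Wolf error and the leftover-hash secrecy loss, which is doable but does not plug directly into the paper's later analysis.

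On the converse side the paper gives nothing beyond citing \cite{maurer1993,ahlswede1993common}, so your sketch is an addition rather than a comparison. Your outline is the right one; the only place to be careful is the step $I(K;Y^n\mid Z^nF)\le I(X^n;Y^n\mid Z^nF)$: since $K=K_A(\cR_A,X^n,F)$ with $\cR_A$ independent of $(X^n,Y^n,Z^n)$ but \emph{not} independent of $F$, you should first pass to $I(\cR_A,X^n;Y^n\mid Z^nF)$ and then argue, via the round-by-round structure of $F$, that the extra $\cR_A$ contributes nothing---this is the same induction that underlies the transcript lemma you already flagged as the main obstacle.
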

\begin{corollary}[\cite{bloch2011physical}, Corollary 4.1]
	If $X \fooA Y \fooA Z$ it holds that
	\begin{equation}
		\widetilde{C}_S = I(X;Y) - I(X;Z) = I(X;Y|Z)
	\end{equation}
	and hence upper and lower bound in \eqref{eq:secret_key_cap} are matching.
\end{corollary}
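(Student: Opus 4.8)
The plan is to reduce the corollary to a single mutual-information identity: under the Markov assumption $X \fooA Y \fooA Z$ one has
\begin{equation*}
I(X;Y) - I(X;Z) = I(X;Y|Z) \enspace .
\end{equation*}
Once this is established the corollary is immediate, because Theorem~\ref{th:secret_key_capacity} sandwiches $\widetilde{C}_S$ between $I(X;Y)-I(X;Z)$ and $I(X;Y|Z)$, so if these two quantities coincide then $\widetilde{C}_S$ is forced to equal their common value.

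To prove the identity I would apply the chain rule for mutual information to $I(X;Y,Z)$ in its two possible expansions,
\begin{equation*}
I(X;Y) + I(X;Z|Y) = I(X;Y,Z) = I(X;Z) + I(X;Y|Z) \enspace .
\end{equation*}
The Markov chain $X \fooA Y \fooA Z$ means precisely that, conditioned on $Y$, the variables $X$ and $Z$ are independent, i.e.\ $I(X;Z|Y) = 0$. Substituting this into the displayed equation and rearranging gives $I(X;Y) - I(X;Z) = I(X;Y|Z)$. Denoting this common value by $R^\star$, Theorem~\ref{th:secret_key_capacity} yields $R^\star \le \widetilde{C}_S \le R^\star$, hence $\widetilde{C}_S = R^\star$, which is exactly the assertion of the corollary.

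I do not anticipate a genuine obstacle: the statement is a textbook consequence of the chain rule together with the definition of a Markov chain, and is essentially the reason why the degraded source model admits a clean single-letter secret key capacity. The only points deserving a moment's care are bookkeeping ones — checking that it is $I(X;Z|Y)$ (and not $I(X;Y|Z)$) that vanishes under $X \fooA Y \fooA Z$, and noting that the rearrangement of the chain-rule identity is legitimate because all mutual-information terms involved are finite (which holds for the discrete source $P_{XYZ}$ considered here, and more generally whenever the bounds in Theorem~\ref{th:secret_key_capacity} are themselves finite).
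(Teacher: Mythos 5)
Your proposal is correct: the chain-rule expansion $I(X;Y)+I(X;Z|Y)=I(X;Z)+I(X;Y|Z)$ together with $I(X;Z|Y)=0$ under $X \fooA Y \fooA Z$ gives $I(X;Y)-I(X;Z)=I(X;Y|Z)$, and the sandwich from Theorem~\ref{th:secret_key_capacity} then forces equality. The paper does not prove this corollary itself but cites it from \cite{bloch2011physical}, and your argument is precisely the standard one used there, so there is nothing further to add.
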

\begin{remark}
	Notice that the secret key capacity $C_S$ is very similar to the secrecy capacity of a degraded wiretap channel $P_{XYZ}(x,y,z) = P_X(x) P_{Y|X}(y|x) P_{Z|Y}(z|y)$ (see equation~\eqref{eq:degraded_wiretap_channel}) except for the fact that a maximization over the distribution $P_X$ is omitted.
\end{remark}

Before we give a proof sketch for the achievability part of Theorem~\ref{th:secret_key_capacity} we examine some special cases for $P_{XYZ}$. Let us assume that either $Z^n=X^n$ or $Z^n=Y^n$ holds. In this case the secret key capacity is zero. This is intuitively appealing as Eve is able to observe all communication over the public channel and one legitimate party has the same information from the source of common randomness as Eve. On the other hand if $P_{XYZ}(x,y,z) = P_{XY}(x,y) P_Z(z)$, observing $Z^n$ gives no information about the shares $X^n$ or $Y^n$. Hence, $Z^n$ provides no useful information to Eve at all which is reflected by the fact that $I(X;Z)=0$ in that case.

\begin{remark}
	In \cite{ahlswede1993common} it has first been discussed how to perform secret sharing with common randomness if the eavesdropper only has access to the messages transmitted over the public channel, i.e. the source was of the form $P_{XY}$ and only later to examine the more general case (including $Z$). This is essentially equivalent to the case that $Z$ is independent of $X,Y$. In this work, we chose the approach of directly introducing the model in the more general setting (including $Z$) as was also done in \cite{maurer1993}.
\end{remark}

\begin{proof}[Proof sketch]
	In the following we sketch how the secret key rate given in Theorem~\ref{th:secret_key_capacity} can be achieved using secrecy codes designed for secure data transmission over a (potentially degraded) wiretap channel. Furthermore, this construction only requires the transmission of a single block of length $n$ from Alice to Bob over the public channel. We follow the structure of the proof given in \cite[Chapter 4.2.1]{bloch2011physical}.
	
	Let the secret key be encoded into a block of $n$ symbols over $\cX$, labelled by $u^n$.
	Alice would like to securely communicate the $i$-th symbol $u_i \in \cX$ to Bob. The choice of this symbol is stochastically independent of the common randomness outputs $X^n,Y^n,Z^n$. She computes $u_i + x_i$ and sends the result over the public channel, where the addition is taken $\mod |\cX|$. Bob receives his $i$-th dedicated symbol from the source of common randomness $y_i$ and $u_i + x_i$ from the public channel. Eve receives $z_i$ from the common randomness source and is able to eavesdrop $u_i + x_i$. This scenario can be interpreted as a wiretap channel. The channel's input is $U$, while the legitimate user's channel output is formed by the tuple $(U+X,Y)$ and the eavesdropper's output is formed by $(U+X,Z)$.
	
	Assume that the sequence all possible sequences $u^n$ are codewords of a wiretap code. It is possible to construct such a code by sampling the elements of all codewords from a single distribution $P_U$ that can be chosen arbitrarily. According to the standard achievability proofs for secrecy codes of wiretap channels (see for instance \cite[Chapter 3.4.1]{bloch2011physical}) 
	From Theorem~\ref{th:cap_wtc} we know that a secrecy code for this channel with rate
	\begin{equation}
		R^* \defeq I(U;Y,U + X) - I(U;Z,U + X) = H(U|Z,U + X) - H(U|Y,U + X)
	\end{equation}
	is achievable by choosing the auxiliary random variable $V=U$. In case $X \fooA Y \fooA Z$ it follows that $U \fooA (U+X,Y) \fooA (U+X,Z)$ and by Theorem~\ref{th:cap_dwtc} that $R^*$ is achievable.

	For the choice of $U$ being uniformly distributed over $\cX$ and using properties of the one-time pad one is able to show that
	\begin{equation}
		R^* = I(X;Y) - I(X;Z) \enspace .
	\end{equation}
	Hence, we have constructed a secret sharing scheme using common randomness from the source $P_{XYZ}$ achieving the rate $I(X;Y)-I(X;Z)$.
\end{proof}
The reason we provide this proof is that the methodology to perform secret sharing using common randomness using secrecy codes for wiretap channels naturally carries over to the finite blocklength regime. We come back to this point in Section~\ref{sec:secret_sharing_oneway}.

\subsection{Helper Data Algorithm} \label{subsec:HDA}

\begin{figure}[t!]
	\begin{centering}
		\resizebox{\textwidth}{!}{
			\includegraphics{./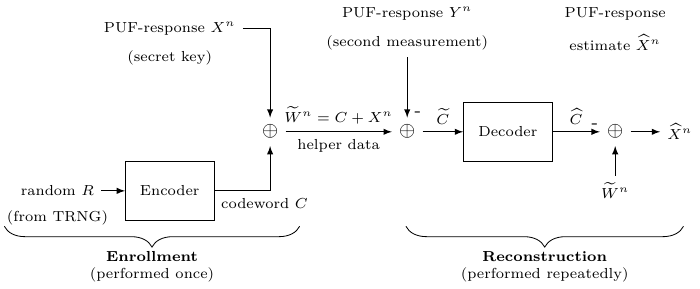}
		}
		\caption{Simplified schematic of a key generation scheme based on a PUF.}
		\label{fig:keyGenOverview}
	\end{centering}
\end{figure}

In the following we present a widely used method to integrate error correction capabilities into the generation of a cryptographic key using PUFs. We refer to this methodology as the helper data algorithm (HDA) \cite{Dodis2004}. A block diagram for the helper data algorithm is shown in Fig.~\ref{fig:keyGenOverview}.

The PUF response $X^n$, in this example the content of multiple SRAM cells after power up, is measured during the manufacturing process of the device, which is referred to as \emph{enrollment}. Furthermore, during the enrollment a random number $R$ is sampled from a True Random Number Generator (TRNG) which is used to select a codeword $C$ at random from an ECC and the helper data $\widetilde{W}^n$ is computed according to $\widetilde{W}^n \defeq C + X^n$. This helper data is published, e.g., in an external storage on the embedded system. Notice that the amount of SRAM cells corresponds to the length of the code in this case. We remark that additions and subtractions in this section are usually performed within the finite field over which the ECC is defined.

The goal of the helper data algorithm is to obtain the PUF response measured during the enrollment at another time when the PUF is measured again. We call this process \emph{reconstruction}. The PUF measurement during the reconstruction phase is denoted by the random variable $Y^n$. Using the helper data $\widetilde{W}^n$ we compute $\widetilde{C}\defeq \widetilde{W}^n-Y^n$. Since $Y^n$ does not have to be equal $X^n$ even if the same device is used during the reconstruction phase we have that $\widetilde{C}=C+E^n$, where $E^n=X^n-Y^n$ denotes the error vector. If $E^n$ is of sufficiently small weight, the codeword is correctly decoded. We denote the decoder's output by $\widehat{C}$. We then estimate the PUF response after applying error correction $\widehat{X}^n$ by computing $\widehat{X}^n = \widetilde{W}^n-\widehat{C}$.

The purpose of the helper data algorithm can be abstracted to the fact that it is not possible to perform coding over the channel from $X^n$ to $Y^n$ specified by the probability mass function $P_{Y^n|X^n}(y^n|x^n)$ as those values are the outputs of the PUF measurement without inherent structure. This channel is a function of the measurement noise and potential temperature dependence or aging effects. The helper data algorithm enables the integration of a structured ECC that can be chosen by the designer. This enables error correction within the reconstruction phase. Notice that the elements of the codeword $C$ and the first estimate before decoding $\widetilde{C}$ are connected by virtually the same channel is $C$ and $X^n$ as well as $\widetilde{C}$ and $Y^n$ differ only additively by the helper data $\widetilde{W}^n$. From the construction it also becomes obvious that the security level of the helper data scheme is upper bounded by the code dimension of the ECC in bits as this is simply the brute force complexity. The length of the code on the other hand determines the required length of the PUF-response and hence can be directly associated with the hardware complexity of the PUF, i.e. via the required number of SRAM cells for an SRAM PUF.

To remove any bias from the PUF-response, frequently the generated secret key is not equal to the PUF-response $X^n$ but corresponds to the hash-value of $X^n$ using a cryptographically secure hash-function. In general an approximate reconstruction of $X^n$ is insufficient for key derivation because slight changes at the input of cryptographic functions, e.g. via the key, typically already lead to substantial changes at the function's output.

\subsection{Helper Data Algorithm for Analog PUFs} \label{subsec:hda_analog}

In contrast to a PUF outputting digital values like in the SRAM PUF, the output of analog PUFs cannot be directly fed into the helper data algorithm introduced in Section~\ref{subsec:HDA}. Therefore, during the enrollment phase the analog PUF output $X^n$ needs to be quantized by an input quantizer and $S^n=Q(X^n)$. Furthermore, additional helper data is generated from $X^n$ for the quantization required in the reconstruction phase. We denote this function by $g$ and the resulting quantization helper data by $W^n$, in particular $W^n=g(X^n)$. A block diagram illustrating the additional quantization and helper data generation steps is given in Fig.~\ref{fig:HDA_analog}. Input quantization and helper data generation are performed elementwise. With slight inaccuracy in notation, we sometimes also write $Q(X)$ for the quantization of a single PUF value. $g(X)$ is treated the same way. We denote the codomain of $Q$ by $\cS^n$ and the codomain of $g$ by $\cW^n$. The quantization helper data $W^n$ is published while $S^n$ forms the key or is the preimage of the key via a cryptographically secure hash function and hence is kept secret. Like in the discrete case, the ECC is fixed and a random number $R$ sampled from a TRNG to determine a codeword $C$. Similar to digital PUFs, the helper data algorithm computes and publishes $\widetilde{W}^n = C + S^n$.

\begin{figure}[t!]
	\begin{centering}
		\resizebox{\textwidth}{!}{
			\includegraphics{./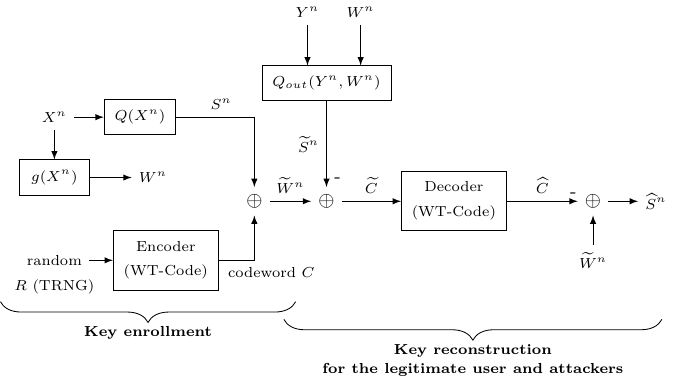}
		}
		\caption{Key enrollment and key reconstruction for analog PUFs}
		\label{fig:HDA_analog}
	\end{centering}
\end{figure}

During the reconstruction phase, the user measures the PUF again. We denote the outcome of this measurement by $Y^n$. From the quantization helper data $W^n$ an output quantizer is derived which is then used to output an estimate $\widetilde{S}^n$ of $S^n$. The remaining steps are analogous to the discrete case. The decoder basically decodes the erroneous codeword $\widetilde{C} = \widetilde{W}^n - \widetilde{S}^n$. Then the decoder outputs $\widehat{C}$ and $\widehat{S}^n = \widetilde{W}^n-\widehat{C}$ is computed. The key will be correctly recovered if $S^n = \widehat{S}^n$.

Since the quantization helper data $W^n$ is derived directly from the PUF measurement during the enrollment phase $X^n$ and so is the secret $S^n$, it has to be assured that from the public $W^n$ it is impossible to derive information about $S^n$. 
The next section therefore deals with establishing quantization helper data that does not leak information about the secret $S^n$.

\subsection{Zero Leakage Helper Data}\label{subsec:zero_leakage_hds}
Next, we give the definition of zero leakage for helper data algorithms. This is the notion that we are aiming at for the generation of the quantization helper data in analog PUFs.

\begin{definition}[\cite{GSV+16}]\label{def:zero_leakage}
	A helper data algorithm is defined to have \emph{zero leakage} if the PUF response (after quantization) $S^n$ and the quantization helper data $W^n$ are stochastically independent, i.e.
	\begin{equation*}
		P_{S^n|W^n}(s^n|w^n) = P_{S^n}(s^n), \; \forall s^n \in \mathcal{S}^n, w^n \in \mathcal{W}^n \enspace .
	\end{equation*}
\end{definition}
As in \cite{GSV+16} we use a slightly stricter definition of zero leakage compared to the more standard definition in \cite{verbitskiy2010key} to avoid pathological cases because we have continuous values for the helper data $W^n$.

Sufficient and necessary conditions for helper data featuring zero information leakage about the secret are given in \cite{GSV+16}. In this work it was shown that it is possible to construct a zero leakage helper data scheme using a function $g$ on the PUF response to generate the helper data scheme, having the following properties:
\begin{enumerate}
	\item $g$ is strictly monotonous function (and therefore an injective) function from each quantization interval to the domain of the helper data $\mathcal{W}$.
	\item Any other function $g^*$ generating the helper data cannot lead to a better reconstruction performance or does not have the zero leakage property.
\end{enumerate}

The main result that we are using is given below:

\begin{theorem}[Thm. 4.8~\cite{GSV+16}]\label{th:helpder_data}
	Let $g$ be monotonously increasing on each quantization interval $A_t$, with $g(A_0) = \ldots = g(A_{N-1})=\mathcal{W}$, where $N$ denotes the number of quantization levels. Let $x_t$ and $x_u$ be from different intervals with $g(x_t)=g(x_u)$. Then in order to satisfy zero leakage the following condition is sufficient and necessary:
	\begin{equation}
		\frac{F_X(x_t)-F_X(q_t)}{p_t} = \frac{F_X(x_u)-F_X(q_u)}{p_u} \enspace ,
	\end{equation}
	where $q_t$ denotes the left border of the interval $A_t$ and $p_t$ denotes the probability that $X$ is sampled to be in $A_t$, i.e. $p_t = Pr(X \in A_t)$. Analogous statements hold for $q_u$, $A_u$ and $p_u$.
\end{theorem}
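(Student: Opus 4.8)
The plan is to reduce the statement to a single PUF coordinate, rewrite the zero-leakage requirement as equality of the conditional distributions $F_{W\mid S=t}$ of the helper data $W=g(X)$ given the quantization index $S=Q(X)$, and then evaluate those conditional distributions through the piecewise inverse of $g$. First I would note that, since $Q$ and $g$ act elementwise on the i.i.d.\ response $X^n$, the pairs $(S_i,W_i)$ are i.i.d., so zero leakage for $(S^n,W^n)$ in the sense of Definition~\ref{def:zero_leakage} is equivalent to $P_{S\mid W=w}(s)=P_S(s)$ for a single pair $(S,W)$, for every $s$ and every $w\in\mathcal{W}$. By disintegration, $\Pr(S=s,\,W\le w)=\int_{-\infty}^{w}P_{S\mid W=w'}(s)\,\mathrm{d}F_W(w')$, so this is in turn equivalent to the conditional CDF $F_{W\mid S=t}(w)$ being the \emph{same} function of $w$ for every level $t$ (its common value then being $F_W(w)$). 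This reframing is exactly the place where the slightly stricter zero-leakage notion of \cite{GSV+16} is needed: the identity must hold for every $w\in\mathcal{W}$, not merely almost every $w$.

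Next I would compute $F_{W\mid S=t}$ explicitly. By hypothesis $g$ restricted to $A_t$ is a strictly increasing bijection onto $\mathcal{W}$; write $x_t(\cdot)$ for its inverse. Since $g$ is increasing on $A_t$ and $q_t$ is the left border of $A_t$, for $w\in\mathcal{W}$ one has $\{S=t\}\cap\{W\le w\}=\{X\in A_t,\ g(X)\le w\}=\{q_t< X\le x_t(w)\}$ (boundary points are immaterial because $X$ is continuous), hence
\[
F_{W\mid S=t}(w)=\frac{F_X\bigl(x_t(w)\bigr)-F_X(q_t)}{p_t}\enspace .
\]
As $w$ sweeps $\mathcal{W}$, the point $x_t(w)$ sweeps $A_t$, and for two levels $t\neq u$ the points $x_t(w)$ and $x_u(w)$ are precisely the unique elements of $A_t$ and $A_u$ sharing a common helper-data value $g(x_t(w))=g(x_u(w))=w$. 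Combining with the previous paragraph, zero leakage holds iff $F_{W\mid S=t}(w)=F_{W\mid S=u}(w)$ for all $t,u$ and all $w\in\mathcal{W}$, i.e.
\[
\frac{F_X(x_t)-F_X(q_t)}{p_t}=\frac{F_X(x_u)-F_X(q_u)}{p_u}
\]
for every pair $x_t\in A_t$, $x_u\in A_u$ with $g(x_t)=g(x_u)$, which is exactly the claimed sufficient-and-necessary condition.

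I expect the main obstacle to be the first paragraph rather than the change of variables: namely, being careful about the equivalence between independence of $(S,W)$ and equality of the conditional CDFs when $W$ is continuous, and invoking the strict (non-pathological) zero-leakage definition so that the characterization is stated pointwise in $w$. Once that is set up, the identification of $x_t(w),x_u(w)$ as the points of different intervals with a common $g$-value, and the substitution into the formula for $F_{W\mid S=t}$, is routine.
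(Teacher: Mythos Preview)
The paper does not prove this theorem; it is quoted from \cite{GSV+16} (as Thm.~4.8 there) and used as a black box, so there is no in-paper argument to compare against. Your proof is correct and is essentially the natural one: reduce to a single coordinate by the i.i.d.\ structure, reformulate zero leakage as equality of the conditional laws $F_{W\mid S=t}$ across all quantization levels, and then compute $F_{W\mid S=t}(w)=\bigl(F_X(x_t(w))-F_X(q_t)\bigr)/p_t$ via the monotone inverse of $g$ on $A_t$. The identification of sibling points $x_t,x_u$ with $g(x_t)=g(x_u)=w$ as the images of the piecewise inverses at the common value $w$ is exactly right, and your care about the pointwise-in-$w$ requirement (the ``slightly stricter'' notion of Definition~\ref{def:zero_leakage}) is well placed; it is what distinguishes the necessary direction from an almost-everywhere statement.
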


We refer to points $x_t, x_u$ from different quantization intervals leading to the same helper data as \emph{sibling points}.
Theorem~\ref{th:helpder_data} also leads to a natural way of computing helper data via
\begin{equation}\label{eq:helper_data}
	w = g(x_{t,w}) := \frac{F_X(x_{t,w})-F_X(q_t)}{p_t} \enspace ,
\end{equation}
where $x_{t,w}$ denotes the point within the interval $A_t$ leading to a helper data value of $w$.
This is not the only optimal way to define the helper data but there is no way that leads to better performance during reconstruction while keeping the zero leakage property. Therefore, we take this approach for computing helper data throughout this work.

\begin{lemma}[\cite{stanko2017optimized} Lem. 1]
	The distribution of the helper data defined in equation~\eqref{eq:helper_data} is the uniform distribution over the interval $[0,1]$, i.e.
	\begin{equation}
		f_W(w) = \begin{cases}
			1 \quad \text{for } w \in [0,1]\\
			0 \quad \text{otherwise} \enspace .
		\end{cases}
	\end{equation}
\end{lemma}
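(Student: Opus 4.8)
The plan is to show that the random variable $W = g(X_{t,W})$ defined via \eqref{eq:helper_data} is uniform on $[0,1]$ by computing its cumulative distribution function directly and showing $F_W(w) = w$ for all $w \in [0,1]$. The key observation is that the helper data is generated elementwise, so it suffices to analyze a single PUF value $X$ with its quantization intervals $A_0, \ldots, A_{N-1}$; one may then verify the claimed density $f_W$ by differentiation.

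First I would fix $w \in [0,1]$ and describe the event $\{W \le w\}$. Within each interval $A_t = [q_t, q_{t+1})$, the map $x \mapsto g(x) = (F_X(x) - F_X(q_t))/p_t$ is strictly increasing and continuous (it is an affine image of the restriction of $F_X$ to $A_t$), ranging over $[0,1)$ as $x$ ranges over $A_t$. Hence for each $t$ there is a unique threshold $x_{t,w} \in A_t$ with $g(x_{t,w}) = w$, characterized by $F_X(x_{t,w}) = F_X(q_t) + w\, p_t$; and $g(x) \le w$ for $x \in A_t$ exactly when $x \le x_{t,w}$. Therefore
\begin{equation*}
	F_W(w) = \Pr(W \le w) = \sum_{t=0}^{N-1} \Pr\big(X \in A_t,\ X \le x_{t,w}\big) = \sum_{t=0}^{N-1} \big(F_X(x_{t,w}) - F_X(q_t)\big) = \sum_{t=0}^{N-1} w\, p_t = w,
\end{equation*}
using $\sum_t p_t = 1$. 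Since $F_W(w) = w$ on $[0,1]$ with $F_W(w) = 0$ for $w < 0$ and $F_W(w) = 1$ for $w > 1$, the density is $f_W(w) = 1$ on $[0,1]$ and $0$ otherwise, which is the claim.

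The main obstacle is purely a matter of careful bookkeeping rather than any deep step: one must make sure the intervals $A_t$ genuinely partition the support of $X$ (so that the events $\{X \in A_t\}$ are disjoint and exhaust the probability space) and that the edge cases — the leftmost and rightmost intervals, which may be unbounded under the Gaussian model, and the handling of interval endpoints — do not spoil the telescoping identity $\Pr(X \in A_t,\ X \le x_{t,w}) = F_X(x_{t,w}) - F_X(q_t)$. Once the partition property and the strict monotonicity of $g$ on each piece (guaranteed by Theorem~\ref{th:helpder_data} and the construction \eqref{eq:helper_data}) are in place, the computation above is immediate. I would also note that continuity of $F_X$ makes the distinction between open and closed interval endpoints immaterial for the probabilities, so no measure-zero corrections are needed.
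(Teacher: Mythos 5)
Your argument is correct. The paper itself states this lemma without proof, importing it from the cited reference, so there is nothing to compare against in the text; your CDF computation is the standard argument (equivalently: conditioned on $X \in A_t$, the map $x \mapsto (F_X(x)-F_X(q_t))/p_t$ is the probability integral transform of the conditional law of $X$, so $W \mid \{X \in A_t\}$ is uniform on $[0,1]$ for every $t$, hence so is the mixture). Your handling of the edge cases — continuity of $F_X$ making endpoint conventions immaterial, and the partition property of the $A_t$ — is exactly the bookkeeping needed, and the telescoping identity $\Pr(X \in A_t,\ X \le x_{t,w}) = w\,p_t$ is valid since $F_X$ is strictly increasing on each quantization interval of the Gaussian.
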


Note that this construction facilitates to use the previously discussed equiprobable and equidistant input quantizations into zero leakage helper data algorithms.

In order to obtain the estimate $\widetilde{S}$ from the measurement during reconstruction $Y$ the helper data is used to generate another quantizer. Its interval borders depend on $W$. For an equiprobable quantization, meaning that the input quantizer is formed such that $S$ is uniformly distributed over its range, the construction of the output quantizer is also given in \cite{GSV+16}.

The more general case for arbitrary input distributions has been investigated in \cite{stanko2017optimized}. The authors present an output quantizer for zero leakage helper data that we are also using in the following. The computation of the interval borders of this quantizer is given in Theorem~\ref{th:output_intervals}. We refer to Remark~\ref{rem:tau} for a more specific statement on what is meant by $p_{t-1} \not \ll p_t$.

\begin{theorem}[\cite{stanko2017optimized} Thm. 1]\label{th:output_intervals}
	Let the values $\tau_0,\ldots,\tau_N$ denote the interval borders of the output quantizer used to obtain $\widetilde{S}$ from $Y$ and let $p_{t-1} \not \ll p_t$. Let $\tau_0 = -\infty$ and $\tau_N = \infty$. Let $g_{t}^{-1}(w)$ be the unique value $x$ in quantization interval $t$ such that $g(x)=w$. Then choosing $\tau_t$ iteratively starting from following equation~\eqref{eq:output_intervals} gives the best reconstruction estimate for zero leakage helper data.
	\begin{equation}\label{eq:output_intervals}
		\tau_s = \frac{\ln \left(\frac{p_{t-1}}{p_t}\right)}{g_t^{-1}(w)-g_{t-1}^{-1}(w)} \sigma_N^2 + \frac{g_{t-1}^{-1}(w)+g_t^{-1}(w)}{2}
	\end{equation}
\end{theorem}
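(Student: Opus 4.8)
\medskip
\noindent\textbf{Proof proposal.}
The plan is to identify the output quantizer as the maximum a posteriori (MAP) symbol estimator of $S$ from the pair $(Y,W)$, and then to locate its decision boundaries by a one-line computation. Since the input quantization, the helper data map $g$, and the measurement noise all act coordinatewise and i.i.d., it suffices to analyse a single coordinate, so I drop the superscript $n$ throughout; minimising the per-symbol error probability is what we take ``best reconstruction estimate'' to mean here.

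First I would pin down the conditional law of $X$ given $W=w$. As $g$ is strictly monotone, hence injective, on every interval $A_t$, the event $\{S=t,\,W=w\}$ determines $X=g_t^{-1}(w)$; combined with the zero-leakage property (Definition~\ref{def:zero_leakage}, which holds for the construction of Theorem~\ref{th:helpder_data}), i.e. $P_{S\mid W}(t\mid w)=p_t$, this shows that, conditioned on $W=w$, $X$ is discrete, supported on the $N$ sibling points $g_0^{-1}(w)<\dots<g_{N-1}^{-1}(w)$ with masses $p_0,\dots,p_{N-1}$. Writing the reconstruction measurement as $Y=X+Z$ with $Z\sim\cN(0,\sigma_N^2)$ independent of $X$, the estimator minimising $\Pr(\widetilde S\neq S\mid W=w)$ is
\begin{equation*}
	\widetilde S(y,w)=\argmax_{t\in\{0,\dots,N-1\}}\; p_t\,\exp\!\left(-\frac{\bigl(y-g_t^{-1}(w)\bigr)^2}{2\sigma_N^2}\right) .
\end{equation*}

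Next I would pass to the log-scores $\ell_t(y)=\ln p_t-(y-g_t^{-1}(w))^2/(2\sigma_N^2)$, each a concave parabola of common curvature with vertex at the sibling point $g_t^{-1}(w)$. Equating $\ell_{t-1}(y)=\ell_t(y)$ cancels the $y^2$-term and leaves a linear equation in $y$ whose unique root is exactly the $\tau_t$ in~\eqref{eq:output_intervals}; this is the routine part. What remains is to check that the pairwise crossings glue into a legitimate partition: $\tau_0\le\tau_1\le\dots\le\tau_N$ (with $\tau_0=-\infty$ and $\tau_N=\infty$, since there is no interval to the left of $A_0$ nor to the right of $A_{N-1}$), so that $\widetilde S(y,w)=t$ iff $y\in[\tau_t,\tau_{t+1})$, and that on this segment $\ell_t$ beats \emph{all} the other $\ell_j$, not only its two neighbours. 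The latter follows from the equal-curvature picture --- the upper envelope of finitely many such parabolas is traversed by their arcs in increasing order of vertices --- so only monotonicity of the vertices $g_t^{-1}(w)$ and of the thresholds is needed; this is where the hypothesis $p_{t-1}\not\ll p_t$ (see Remark~\ref{rem:tau}) enters, since it prevents a low-probability interval from being skipped (its MAP region becoming empty) and thereby keeps the $\tau_t$ ordered. Finally, by the coordinatewise i.i.d. structure, applying this per-symbol rule in every coordinate yields the optimal block reconstruction; the iterative phrasing in the statement just reflects sweeping $t$ upward and reading off $\tau_t$ from the two adjacent sibling points.

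The step I expect to be the real obstacle is this ordering/no-skipping claim: for general probabilities a single $p_{t-1}$ that is tiny relative to its neighbours can push $\tau_{t-1}$ past $\tau_t$ and collapse the region of interval $t-1$, and it is precisely this degeneracy that the condition $p_{t-1}\not\ll p_t$ has to exclude quantitatively. Turning ``not much smaller than'' into the sharp inequality guaranteeing $\tau_{t-1}\le\tau_t$ for every $t$, and verifying it, is the delicate point; the sibling-point reduction, the MAP characterisation, and solving the quadratic are all immediate.
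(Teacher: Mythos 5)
The paper does not prove this statement; it is imported verbatim from \cite{stanko2017optimized} (Thm.~1), so there is no internal proof to compare against. Your argument is the standard one and matches the cited source: by zero leakage, conditioning on $W=w$ collapses $X$ onto the $N$ sibling points $g_t^{-1}(w)$ with prior masses $p_t$, the optimal $\widetilde S$ is the MAP detector for an equal-variance Gaussian mixture, and equating adjacent log-posteriors cancels the quadratic term and yields exactly \eqref{eq:output_intervals}; your computation checks out. The one delicate point you flag --- that a small $p_{t-1}$ can invert the threshold order and empty a decision region --- is precisely what the hypothesis $p_{t-1}\not\ll p_t$ excludes and what Remark~\ref{rem:tau} handles by iteratively merging regions, so your proposal is correct and consistent with how the paper treats the degenerate case.
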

\begin{remark}
	Notice that the choice for the output intervals given in Theorem~\ref{th:output_intervals} is consistent with Theorem 5.2 in \cite{GSV+16} for uniform $S$.
\end{remark}
\begin{remark}\label{rem:tau}
	If $p_s \ll p_{s-1}$ holds the symbol $s\in \mathcal{S}$ may be suboptimal irrespective of the channel output. This happens because the a-priori probability $p_{s}$ of the symbol $s$ is so large that for the equality point $\tau_s$ splitting the decision regions between the symbols $s-1$ and $s$ it holds that $\tau_s<\tau_{s-1}$. If this happens the output quantizer has only $|\mathcal{S}|-1$ symbols and we compute
	\begin{equation}
		\tau^* = \frac{\ln \left (\frac{p_{s-2}}{p_s} \right )}{g_s^{-1}(w) - g_{s-1}^{-1}(w)} \sigma_N^2 + \frac{g_{s-1}^{-1}(w)+g_s^{-1}(w)}{2} \enspace .
	\end{equation}
	In case $\tau^*<\tau_{s-2}$ we repeat the procedure. In this case the quantizer has only $|\mathcal{S}|-2$ symbols.
\end{remark}

The computation of the output intervals can either be done during the reconstruction phase or within the enrolment phase. In the latter case the helper data is not a single number $w \in [0,1)$ but rather the set of all interval borders $\tau_s$. This means that one can trade computational complexity during the reconstruction phase against storage consumption within the device, where the helper data needs to be stored.

\begin{remark}\label{rem:zero_leakage}
	Observe that the reconstruction quantizer depends on the distribution of $S$. This distribution in turn depends on the input quantizer and since $\widetilde{S}$ depends on the reconstruction quantizer we have the peculiar case that the conditional pmf $P_{\widetilde{S}|S,W}$ in general depends on the input distribution $P_S$.
\end{remark}

\subsection{Relation between Secret Sharing with Common Randomness and PUFs} \label{subsec:connection_hda_secret_sharing}
Recalling the secret sharing problem using common randomness presented in Section~\ref{subsec:secret_sharing} and comparing it to the problem of reconstructing the value of a PUF during the reconstruction phase that has previously been measured throughout the enrollment phase, we observe that the problems are almost equivalent. This equivalence is outlined in the following.

The PUF responses during enrollment $X^n$ and reconstruction $Y^n$ can be interpreted as $n$ samples from a source of common randomness specified by a joint distribution $P_{XY}$. Notice that in this model there is no adversarial output $Z$ for the moment. The helper data for reconstruction $\widetilde{W}^n$ which is published by the helper data algorithm can be interpreted as data being sent over the public channel by the terminal having the enrollment data $X^n$ (Alice in the secret sharing scenario). The reconstruction measurement data $Y^n$ (Bob's share of the source of common randomness) is then used together with the message $\widetilde{W}^n$ received from the public channel. For a random code with codewords that are iid sampled from the uniform distribution over the range of the RV $X$, the achievability proof sketch of Theorem~\ref{th:secret_key_capacity} shows that the helper data scheme achieves the secret key capacity for the secret sharing problem using the resulting correlated source emanating from the PUF reconstruction problem. Notice that it is not possible to perform forward and backward transmissions over the public channel in this equivalence between secret sharing using common randomness and the PUF reconstruction problem. In the asymptotic setting it has been shown that one-way communication suffices to achieve the secret key capacity. In the finite blocklength regime this is unclear or even suggested not to hold (see Section~\ref{subsec:secret_sharing}). The construction achieving the secret key rate in Theorem~\ref{th:hayashi_secret_key} requires two-way communication over the public channel. Hence, it is unclear whether it is possible to generate a (possibly different) helper data scheme such that the rate in Equation~\eqref{eq:rate_hayashi} can be achieved for the coderate of the ECC. In fact, it is not obvious whether the helper data algorithm is optimal in maximizing the achievable rate if we restrict ourselves to one-way communication over the public channel. This question however is out of scope of this work.

\section{Finite Blocklength Information Theory}\label{sec:finite_blocklength}
In the previous sections we summarized results that make claims in the asymptotic setting as $n$ goes to infinity. However, in practice we have to limit ourselves to some finite value for the blocklength $n$. For data transmission over point to point channels, natural questions to be asked are which rate $R$ can be achieved for a given block length $n$ and fixed block error probability $P_e$ or which $P_e$ can be achieved for fixed $R$ and $n$.

\subsection{Degraded Wiretap and Wiretap Channels}\label{subsec:finite_blocklength_wiretap}
The finite blocklength behaviour of degraded wiretap and wiretap channels has already been investigated in \cite{yang2016finite,yang2019wiretap}. In the finite blocklength regime the achievable rates depend on the concrete values for block error probability of the legitimate user $\varepsilon$ and the security parameter $\delta$. In contrast, similar to DMCs those parameters can be made arbitrarily small in the asymptotic regime as long as the rate is below the secrecy capacity (see Section~\ref{subsec:wiretap_channel}).

The theorems dealing with the asymptotic behaviour (Theorem~\ref{th:cap_dwtc} and Theorem~\ref{th:cap_wtc}) merely provide the information that
there exists a secrecy code of message cardinality
\begin{equation}\label{eq:asymp_cardinality}
	|\mathcal{M}| = 2^{nC_S + o(n)} \enspace .
\end{equation}
$C_S$ is only an approximation of $R^*(n,\varepsilon,\delta)$ and estimating $R^*(n,\varepsilon,\delta)$ by $C_S$ is only reasonable for very large $n$. For small and moderate $n$ it is essential to further analyze the $o(n)$ term in Equation~\eqref{eq:asymp_cardinality}.
The following theorem established in \cite{yang2019wiretap} gives more precise upper and lower bounds on the exponent in Equation~\eqref{eq:asymp_cardinality}. The error term for these bounds is in the order of $\mathcal{O}(\log(n)/n)$.

\begin{theorem}[\cite{yang2019wiretap} Thm. 13]\label{th:finite_wiretap}
	For a discrete memoryless (degraded or general) wiretap channel $P_{BE|A}$ with secrecy capacity $C_S$ and for $\varepsilon+\delta < 1$ it holds that
	\begin{equation}
		R^*_{max}(n,\varepsilon,\delta) \geq C_S  - \sqrt{\frac{V_1}{n}} Q^{-1}(\varepsilon) - \sqrt{\frac{V_2}{n}} Q^{-1}(\delta) + \mathcal{O}\left(\frac{\log(n)}{n}\right)
	\end{equation}
	and
	\begin{equation}
		R^*_{avg}(n,\varepsilon,\delta) \leq C_S - \sqrt{\frac{V_c}{n}} Q^{-1}(\varepsilon+\delta) + \mathcal{O}\left(\frac{\log(n)}{n}\right) \enspace ,
	\end{equation}
	where
	\begin{equation}
		V_1 := \sum_{a\in\mathcal{A}} P_A(a) \left( \sum_{b\in \mathcal{B}} P_{B|A}(b|a) \log_2^2 \left( \frac{P_{B|A}(b|a)}{P_B(b)} \right) - D(P_{B|A=a}||P_B)^2 \right) \enspace ,
	\end{equation}
	\begin{equation}
		V_2 := \sum_{a \in \mathcal{A}} P_A(a) \left( \sum_{e\in \mathcal{E}} P_{E|A}(e|a) \log_2^2 \left( \frac{P_{E|A}(e|a)}{P_E(e)} \right) - D(P_{E|A=a}||P_E)^2 \right) \enspace ,
	\end{equation}
	and
	\begin{align}
		V_c := \sum_{a \in \mathcal{A}} P_A(a) \Bigg( &\sum_{b\in\mathcal{B},e\in\mathcal{E}} P_{BE|A}(b,e|a) \log^2\left(\frac{P_{BE|A}(b,e|a)}{P_{E|A}(e|a)P_{B|E}(b|e)}\right)\\ \nonumber 
		&- D(P_{BE|A=a}||P_{B|E}P_{E|A=a})^2 \Bigg) \enspace .
	\end{align}
	$Q^{-1}$ denotes the inverse of $Q(x) = \int_x^\infty \tfrac{1}{\sqrt{2\pi}} \exp\left(-\tfrac{z^2}{2}\right) \diff z$.
\end{theorem}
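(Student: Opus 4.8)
The plan is to prove the two inequalities separately: the lower bound on $R^*_{max}(n,\varepsilon,\delta)$ by a random-coding achievability argument and the upper bound on $R^*_{avg}(n,\varepsilon,\delta)$ by a one-shot meta-converse, in both cases extracting the second-order term through a Berry--Esseen expansion of an i.i.d.\ sum of information densities. For the general (non-degraded) wiretap channel one first reduces to the degraded case by prefixing an auxiliary channel, i.e.\ running the whole argument on the ``super-channel'' from the auxiliary variable $V$ with $V \fooA A \fooA (B,E)$; it then suffices to treat the case where $A$ is the coding variable and afterwards optimize over $P_A$ (and over the prefix) so that $C_S$ appears on the right-hand side.

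For achievability I would use a two-layer random code: messages $m \in \mathcal{M}$ together with a local randomization index $j \in \mathcal{J}$, where each pair $(m,j)$ is assigned a codeword $A^n$ drawn i.i.d.\ from $P_A^{\times n}$ for the optimizing $P_A$. Reliability for Bob is handled by a threshold (dependence-testing) decoder on the information density $\imath(a^n;b^n) = \log \frac{P_{B^n|A^n}(b^n|a^n)}{P_{B^n}(b^n)}$: its error is below $\varepsilon$ as soon as the total rate $\tfrac1n\log(|\mathcal{M}||\mathcal{J}|)$ stays below $I(A;B)$ by roughly $\sqrt{V_1/n}\,Q^{-1}(\varepsilon)$, which is exactly what Berry--Esseen applied to the i.i.d.\ sum defining $\imath(A^n;B^n)$ gives, its per-letter variance being $V_1$. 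Secrecy is handled by a soft-covering / channel-resolvability argument: for fixed $m$ the induced output $P_{E^n|M=m}$ at Eve is close in total variation to $Q_{E^n}=P_E^{\times n}$ provided the randomization rate $\tfrac1n\log|\mathcal{J}|$ exceeds $I(A;E)$ by about $\sqrt{V_2/n}\,Q^{-1}(\delta)$, the second-order resolvability penalty being governed by the variance $V_2$ of $\imath(A^n;E^n)$; the exponential concentration in the soft-covering lemma, together with a union bound over $m$, upgrades this to the maximum-secrecy guarantee \eqref{eq:max_security}. Subtracting the two rate budgets yields a net message rate $I(A;B)-I(A;E)-\sqrt{V_1/n}\,Q^{-1}(\varepsilon)-\sqrt{V_2/n}\,Q^{-1}(\delta)$ up to $\mathcal{O}(\log(n)/n)$, and maximizing over the input (and prefix) gives the claimed bound.

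For the converse I would use a one-shot hypothesis-testing bound: any $(n,|\mathcal{M}|,\varepsilon,\delta)_{avg}$ code must satisfy $\log|\mathcal{M}| \le -\log \beta_{1-\varepsilon-\delta}\big(P_{A^nB^nE^n},\,P_{A^nE^n}P_{B^n|E^n}\big) + \mathcal{O}(\log n)$, obtained by combining the reliability constraint with the average-secrecy constraint \eqref{eq:avg_security} (this is the step in which $\varepsilon$ and $\delta$ add, producing $Q^{-1}(\varepsilon+\delta)$). Upper bounding $\beta$ by a test that thresholds the i.i.d.\ log-likelihood ratio $\sum_i \log\frac{P_{BE|A}(b_i,e_i|a_i)}{P_{E|A}(e_i|a_i)P_{B|E}(b_i|e_i)}$ — whose per-letter mean is $I(A;B|E)$ and whose per-letter variance is exactly $V_c$ — and applying Berry--Esseen yields $\log|\mathcal{M}| \le n\,I(A;B|E) - \sqrt{nV_c}\,Q^{-1}(\varepsilon+\delta) + \mathcal{O}(\log n)$; since at the extremal input distributions the conditional mutual information coincides with the secrecy capacity (for the degraded case this mirrors the identity in the corollary to Theorem~\ref{th:secret_key_capacity}), this is the stated upper bound, after a single-letterization/continuity step controlling how the optimizing $P_{A^n}$ may depart from a product form.

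The main obstacle I anticipate is the second-order secrecy analysis in the achievability part: one needs a soft-covering lemma sharp enough to produce the precise $\sqrt{V_2/n}\,Q^{-1}(\delta)$ penalty and yet strong enough — exponentially small failure probability over the random codebook — to survive the union bound over all $m$ for the maximum-secrecy criterion, all while keeping the decoder's error below $\varepsilon$ on the \emph{same} codebook. Coordinating the reliability and resolvability thresholds on a single random code, and on the converse side choosing the auxiliary conditional $P_{B^n|E^n}$ so that the hypothesis test both respects the data-processing inequality and exhibits the clean i.i.d.\ structure with per-letter variance $V_c$, are the delicate points; the remaining Berry--Esseen bookkeeping and the optimization over $P_A$ are routine.
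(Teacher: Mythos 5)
This theorem is not proved in the paper at all: it is imported verbatim from the cited reference (\cite{yang2019wiretap}, Theorem~13) and used as a black box, so there is no in-paper argument to compare your proposal against. Judged against the original source, your outline does reconstruct the actual strategy: a two-layer random code with a threshold decoder for Bob and a channel-resolvability (soft-covering) argument for Eve on the achievability side, with the exponential concentration of the soft-covering lemma feeding a union bound over messages to upgrade average to maximum secrecy; and a hypothesis-testing (meta-converse) bound against the auxiliary distribution $P_{A^nE^n}P_{B^n|E^n}$ on the converse side, where combining the reliability and average-secrecy constraints is precisely what produces the single $Q^{-1}(\varepsilon+\delta)$ term with the conditional dispersion $V_c$. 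Your own list of ``delicate points'' is accurate and is exactly where the substance of the original proof lies --- in particular, obtaining the sharp $\sqrt{V_2/n}\,Q^{-1}(\delta)$ resolvability penalty with a failure probability small enough to survive the union bound over all $|\mathcal{M}|$ messages on the \emph{same} codebook that achieves reliability $\varepsilon$, and the single-letterization step in the converse; a sketch that names these obstacles without resolving them is not yet a proof, but as a roadmap it is consistent with how the cited result is actually established. One further point worth flagging: for the general (non-degraded) channel the dispersions in the statement are written in terms of $A$ rather than the auxiliary $V$ of Theorem~\ref{th:cap_wtc}, so your remark that one must run the whole argument on the prefixed super-channel and then optimize over the prefix is needed to make the statement and the proof line up.
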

To this point, it is unclear to us whether the construction using wiretap coding is optimal in terms of maximizing the secret key rate in the finite blocklength regime. To the best of our knowledge it is even unknown whether it is possible to achieve the secret key rate in Theorem~\ref{th:hayashi_secret_key} using one-way communication over the public channel or not. For the interested reader, we refer to the discussion on the necessity of two-way communication to achieve the secret key rate determined by Theorem~\ref{th:hayashi_secret_key} in \cite[Section VII]{hayashi2016}.

\subsection{Secret Sharing using Common Randomness}\label{subsec:finite_secret_sharing}
In this section we provide results for secret sharing with common randomness in the finite blocklength regime. This problem has been studied in \cite{hayashi2016}. The authors established the secret key rate up to an error term in the order of $\mathcal{O}(\log(n)/n)$.

For the finite length setting we use the secret key definition given in \cite{hayashi2016} which we present below. Notice that the security condition matches with the one of secrecy codes for wiretap channels given in Definition~\ref{def:secrecy_coding_strategy}.

\begin{definition}
    A secret sharing protocol is defined to achieve $\varepsilon$ reliability and with average secrecy parameter $\delta$ if the probability that the two legitimate partners Alice and Bob fail to agree on the same key with probability less than $\varepsilon$ and it holds that
    \begin{equation}\label{eq:avg_security_secret_sharing}
	d(P_{K,\Phi^{\ell},\Psi^{\ell},Z^n},P_K^{unif} P_{\Phi^{\ell},\Psi^{\ell},Z^n}) \leq \delta \enspace .
    \end{equation}
    where $K$ denotes the secret key, $Z^n$ denotes the share of the source of common randomness that the eavesdropper obtains and $\Phi^{\ell},\Psi^{\ell}$ denote the communication over the public channel.
    For maximum secrecy the security condition is changed to
    \begin{equation}
        \max_{k\in \cK} d(P_{Z^n|K=k},Q_{Z^n,\Phi^{\ell},\Psi^{\ell}}) \leq \delta \enspace ,
    \end{equation}
    where $Q_{Z^n,\Phi^{\ell},\Psi^{\ell}}$ denotes the marginal distribution of $(Z^n,\Phi^{\ell},\Psi^{\ell})$ if uniformly distributed keys are considered.
    We define the maximal achievable secret key rate with blocklength $n$ by
    \begin{equation}
	\widetilde{R}^*(n,\varepsilon,\delta):= \max\left\{\frac{\log(|\mathcal{K}|)}{n}: \exists (n,|\mathcal{K}|,\varepsilon,\delta) \text{ secret sharing protocol}\right\}
    \end{equation}
    and specify by indices whether we mean the average or the maximum secrecy definition.
\end{definition}

\begin{theorem}[\cite{hayashi2016},Thm.15] \label{th:hayashi_secret_key}
	For every $\varepsilon,\delta>0$ such that $\varepsilon+\delta<1$ and iid $(X^n,Y^n,Z^n)$ sampled according to joint pmf $P_{XYZ}$ such that $X \fooA Y \fooA Z$, the maximal secret key rate for $n,\varepsilon,\delta$ is given by
	\begin{equation}\label{eq:rate_hayashi}
		\widetilde{R}_{avg}^*(n,\varepsilon,\delta) = C_S - \sqrt{\frac{V_c'}{n}} Q^{-1} (\varepsilon+\delta) + \mathcal{O}\left(\frac{\log(n)}{n}\right) \enspace ,
	\end{equation}
	where
	\begin{align}
		V_c' := \sum_{x \in \mathcal{X}} P_X(x) \Bigg( &\sum_{y\in\mathcal{Y},z\in\mathcal{Z}} P_{YZ|X}(y,z|x) \log_2^2\left(\frac{P_{YZ|X}(y,z|x)}{P_{Z|X}(z|x)P_{Y|Z}(y|z)}\right)\\ \nonumber 
		&- D(P_{YZ|X=x}||P_{Y|Z}P_{Z|X=x})^2 \Bigg) \enspace .
	\end{align}
\end{theorem}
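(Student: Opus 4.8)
The plan is to prove the claimed second-order expansion by establishing a matching achievability lower bound and converse upper bound, following the information-spectrum / smooth-entropy methodology of \cite{hayashi2016}. The organizing object is the conditional information density $\iota(x;y\mid z) := \log_2 \tfrac{P_{Y\mid XZ}(y\mid x,z)}{P_{Y\mid Z}(y\mid z)}$. A short Bayes computation using $X \fooA Y \fooA Z$ shows that $\iota(x;y\mid z) = \log_2\tfrac{P_{X\mid Y}(x\mid y)}{P_{X\mid Z}(x\mid z)}$, that $\mathbb{E}[\iota(X;Y\mid Z)] = I(X;Y\mid Z) = C_S$, and that $\Var\!\big(\iota(X;Y\mid Z)\,\big|\,X\big)$ averaged over $P_X$ equals exactly $V_c'$. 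The $n$-letter quantity $\sum_{i=1}^n \iota(X_i;Y_i\mid Z_i)$ is an i.i.d.\ sum, so once both directions are phrased in terms of a single threshold on this sum, a one-dimensional Berry--Esseen estimate converts first-order terms into $C_S$ and the fluctuation into $\sqrt{V_c'/n}\,Q^{-1}(\cdot)$, with the $\mathcal{O}(\log n / n)$ remainder absorbing the Berry--Esseen error and the usual $\Theta(\log n)$ slack in reconciliation and privacy amplification.

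For achievability I would use a two-phase one-shot protocol. In the \emph{reconciliation} phase Alice sends Bob a single Slepian--Wolf hash $\Phi$ so that Bob recovers $\widehat X^n = X^n$ from $(Y^n,\Phi)$ except with small probability; in the \emph{privacy-amplification} phase both parties apply a common two-universal hash of $X^n$ to produce $K$, and the leftover-hash lemma bounds the distance in \eqref{eq:avg_security_secret_sharing} in terms of the smooth conditional min-entropy $H_{\min}^{\cdot}(X^n\mid Z^n,\Phi)$. The crucial point in the Markov case is that the pointwise ``cost'' spent in reconciliation, $-\log P_{X\mid Y}(x^n\mid y^n)$, and the pointwise ``secrecy budget'', $-\log P_{X\mid Z}(x^n\mid z^n)$, differ by exactly $\sum_i \iota(x_i;y_i\mid z_i)$; hence both phases are controlled by one threshold device on this single i.i.d.\ sum, and a union bound over the reliability event and the secrecy event produces the additive parameter $\varepsilon+\delta$. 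Conditioning the Berry--Esseen step on $X^n$ (legitimate, since Alice's hashes are functions of $X^n$) is what makes the \emph{conditional} dispersion $V_c'$ — rather than the larger unconditional variance of $\iota(X;Y\mid Z)$ — appear; collecting terms gives $C_S - \sqrt{V_c'/n}\,Q^{-1}(\varepsilon+\delta) + \mathcal{O}(\log n/n)$.

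For the converse I would bound $\log|\cK|$ for an arbitrary, possibly interactive, protocol by a conditional hypothesis-testing quantity: any $(\varepsilon,\delta)$ secret-key protocol induces a test distinguishing $P_{X^nY^n\mid Z^n}$ from $P_{X^n\mid Z^n}P_{Y^n\mid Z^n}$, giving $\log|\cK| \le -\log\beta_{1-\varepsilon-\delta}\big(P_{X^nY^n\mid Z^n},\,P_{X^n\mid Z^n}P_{Y^n\mid Z^n}\big) + \mathcal{O}(\log n)$, where the reference distribution is chosen pointwise in $x$ so that the relevant divergence is $D(P_{YZ\mid X=x}\|P_{Y\mid Z}P_{Z\mid X=x})$ and its second moment is the conditional dispersion. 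Expanding $-\log\beta$ via the Neyman--Pearson lemma and Berry--Esseen on the same i.i.d.\ sum $\sum_i\iota(X_i;Y_i\mid Z_i)$ yields $nC_S - \sqrt{nV_c'}\,Q^{-1}(\varepsilon+\delta) + \mathcal{O}(\log n)$, matching the achievability bound.

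The main obstacle is the converse under \emph{two-way} public communication: the transcript $(\Phi^\ell,\Psi^\ell)$ is not i.i.d.\ side information, so a naive single-round meta-converse does not apply. The resolution is the standard observation that, conditioned on the full transcript and on $Z^n$, the pair $(X^n,Y^n)$ still factorizes as a product over coordinates with the same conditional marginals (the transcript cannot raise the secret-key rate above $I(X;Y\mid Z)$), so the testing bound and its Berry--Esseen expansion are ultimately driven by the underlying i.i.d.\ source. A secondary technical point, on both sides, is justifying that it is the $X$-conditional variance $V_c'$ that governs the dispersion — on the achievability side this uses that Alice's hashing is adapted to $X^n$, and on the converse side it uses the pointwise-in-$x$ choice of reference distribution; getting these two to land on the identical constant is where the bulk of the careful bookkeeping lies, and it is exactly what is carried out in \cite{hayashi2016}.
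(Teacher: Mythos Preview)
The paper does not give its own proof of this statement: Theorem~\ref{th:hayashi_secret_key} is stated as a cited result from \cite{hayashi2016} (Thm.~15) and is followed only by an informal paragraph explaining the structure of \eqref{eq:rate_hayashi} and by Remark~\ref{rem:secret_sharing}. There is therefore no in-paper argument to compare your proposal against; your sketch is essentially an outline of the external Hayashi--Tyagi--Watanabe proof that the paper merely invokes.

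That said, one aspect of your sketch is in tension with what the paper records about the cited proof. You describe achievability via a \emph{one-way} protocol (Alice sends a single Slepian--Wolf hash $\Phi$, then both parties apply a two-universal hash), and you argue that conditioning on $X^n$ suffices to make the conditional dispersion $V_c'$ appear. But Remark~\ref{rem:secret_sharing} in the paper explicitly states that the protocol in \cite{hayashi2016} achieving \eqref{eq:rate_hayashi} \emph{requires two-way communication} over the public channel, and points to \cite[Section~VII]{hayashi2016} for a discussion of whether one-way communication can attain the same second-order rate. The issue is precisely the one you flag as ``where the bulk of the careful bookkeeping lies'': with a purely one-way Slepian--Wolf bin plus leftover hash, the reconciliation rate must be set uniformly (not adapted to the realized type of $(X^n,Y^n)$), and the resulting second-order term is governed by a generally larger dispersion than $V_c'$. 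The mechanism in \cite{hayashi2016} that lands on $V_c'$ uses interaction to couple the reconciliation and amplification thresholds to a single realization-dependent quantity; your sketch asserts this coupling but does not supply the interactive step that makes it legitimate. So while your converse outline and your identification of $\iota(x;y\mid z)$ and its conditional variance are on target, the achievability side as written has a gap relative to what the paper (and \cite{hayashi2016}) actually claim.
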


Examining equation~\eqref{eq:rate_hayashi}, we observe that its structure is similar to the bounds on the maximal secrecy rate in Theorem~\ref{th:finite_wiretap}. The first term is the asymptotic result, i.e. the secret key capacity defined by the source of common randomness. This value is independent of $n,\varepsilon,\delta$ and only depends on $P_{XYZ}$. Another term that depends on $P_{XYZ}$ via the dispersion coefficient $V_c$ but also on $n,\varepsilon,\delta$ is subtracted from this value. Notably, this term decreases in $n$ with speed $1/\sqrt{n}$ and hence Theorem~\ref{th:hayashi_secret_key} is consistent with the asymptotic result presented in Theorem~\ref{th:secret_key_capacity}. Finally, there is an error term in the order of $\mathcal{O}(\log(n)/n)$ that decreases significantly faster than $1/\sqrt{n}$ such that ignoring it approximates reality reasonably well for moderate values of $n$.

\begin{remark}\label{rem:secret_sharing}
The secret sharing protocol achieving the maximal secrecy rate $\widetilde{R}^*_{avg}(n,\varepsilon,\delta)$ presented in \cite{hayashi2016} requires two-way communication over the public channel. This is in contrast to the achievability proof in the asymptotic setting (Theorem~\ref{th:secret_key_capacity}) for which one-way communication suffices.
\end{remark}

This limitation of requiring two-way communication in achievability proof has practical consequences, in particular in the context of applying the result to Physical Unclonable Functions as for this purpose a protocol only requiring one-way communication is necessary.

\section{PUF-Based Tamper Protection Foil}\label{sec:channel_model}

Going from the pure PUF-based key generation scheme to tamper protection brings additional requirements for the PUF. The combination of enrollment phase and reconstruction phase using HDAs as proposed in Section~\ref{subsec:hda_analog} for the combination of legitimate user and a physical attacker is related to secure communication over a discrete memoryless degraded wiretap channel or a discrete memoryless wiretap channel depending on the chosen attacker models that are specified in the following.

\subsection{Attacker Model}\label{subsec:attacker_model}

Aside from the standard attack vectors of key generation schemes with analog PUFs, like helper data leakage, in addition the physical tampering aims at obtaining secret information, which needs to be prevented by the designer.

Considering all attack vectors, the remaining security of the system needs to exceed a specified minimum security level such as 120 bit, as currently recommended by the BSI \cite{keylength}. In this work, we chose to investigate 128, 192 and 256 Bit security levels as they are common in many cryptographic applications.

\paragraph{Leakage through Helper Data}

Since early approaches like the fuzzy commitment or fuzzy extractor \cite{DRS04}, helper data leakage is a topic to be considered when designing secure key generation with noisy secrets, and in particular for PUFs \cite{DGSV15} when imperfections come into play \cite{DGV+16}. For the remainder of the this work, we assume a random number with full entropy for selecting the codeword and refer to the mentioned related works to design helper data schemes that avoid leakage through $\widetilde{W}^n$ by the helper data algorithm.
In addition to $\widetilde{W}^n$, the quantization helper data $W^n$ needs to be considered for analog PUFs. We therefore apply the zero leakage helper data generation proposed in Section~\ref{subsec:zero_leakage_hds} for $W^n$ throughout this work.

\paragraph{Physical Tampering} 

As physical tamper protection aims to withstand attackers being able to use sophisticated tools, a wide range of attacks needs to be considered \cite{weingart2000physical,Imm19,GSHO21,jilHDSB}, where a special emphasis is put to physical drilling, that affects only very small areas.

As reference, for the system proposed in \cite{IOK+18}, drilling with a conventional drill bit with diameter $>300 \mu m$ fully destroys one electrode in both layers, such that 23 out of 128, or $18\%$ of the capacitance values are destroyed. Attacks during operation can be detected with the integrated run-time tamper detection measures to bring the system in a secure state. In contrast, attacks on the powered-off device are more challenging from a theoretical point of view. For identical layouts, we need to assume that the attacker knows the position of the destroyed PUF cells within the PUF response, which we model as erasures in the proposed coding schemes.

A neuralgic point for an attack lies between the measurement circuit and the key generation, as the attacker can obtain the digitized low-noise values from tapping a single wire, before the values are interpreted in the embedded key management system and the attack detection triggers. The attacker can only perform a single measurement of the PUF in this case because after this measurement the system will notice that the foil has been attacked and goes into a secure state, eliminating the possibility for further measurements. In the following, this attacker is referred to as the ``digital attacker`` for brevity.

In another scenario, the attacker performs more advanced analog measurements, i.e. we assume he affords better measurement equipment. We call this kind of attacker the ``analog attacker`` for short. We make the very conservative assumption the attacker can measure with infinite precision, i.e. his equipment has infinitely fine quantization steps and furthermore he can perform an unlimited number of measurements. Due to the unlimited amount of measurements with uncorrelated measurement noise, the attacker is able to apply post-processing to effectively eliminate the measurement noise. However, in this scenario the attacker needs to drill multiple holes to tap multiple analog wires, potentially with a larger wire diameter instead of only one small hole to probe a digital signal. We will show in Section~\ref{sec:securing_foil} that this attack is indeed problematic, especially for larger field sizes and hence propose to apply countermeasures on the hardware level eliminating the possibility for an attacker to perform those advanced measurements. A profound reasoning for the amount of PUF cells being destroyed by this attack is out of scope of this work. It is reasonable though to assume that a significant extra proportion of the foil needs to be destroyed compared to the digital attacker to mount such an attack. In our examples, we used twice the amount of destroyed capacitances compared to the digital attacker.

Within the helper data algorithm the ECC enables recovering the key from an erroneous PUF measurement as long as the number of errors is sufficiently small. Designers therefore need to take care that an attacker cannot retrieve information about the secret key using the redundancy inflicted by the ECC even though the attacker has to destroy a fraction of the PUF cells to measure the PUF. Investigating fundamental limits of this problem is the major contribution of this work.

In addition to the possibility of wiretapping internal signals, the cable connections through the foil that could reveal timing or global power side channels need to be taken into account. This topic will not be considered in the scope of this work and needs to be addressed individually when designing the host system.

\subsection{Enrollment and Reconstruction Phase for the Foil PUF, Legitimate User} \label{subsec:channel_legitimate}
The PUF considered in this work is established by a foil consisting of electrodes forming a mesh of capacitances. The purpose of this PUF is twofold. First it is used for storing a cryptographic key in a secure manner and secondly it shall protect the components within its inside. An attacker trying to perform a side channel attack is required to measure parameters related to the implementation and the foil prevents him from accessing critical hardware. As composition of physical variations, the capacitances can be modelled as Gaussian random variables \cite{IOK+18}. While the work in \cite{GXKF22} focuses on the reliability side with a strong focus on the specific implication and its measurements, this work aims at quantifying the information theoretic security limits of this type of PUF.

The distribution of the differentially evaluated PUF response for a single cell follows a Gaussian of zero mean and variance $\sigma_P^2$, i.e. $X \sim \mathcal{N}(0,\sigma_P^2)$. We assume that the PUF values for the individual cells to be independently identically distributed (iid). We denote the output one PUF cell during the enrollment phase by the random variable $X$. To obtain the secret $S$ for this cell, the measured data is quantized (see Section~\ref{subsec:hda_analog}). The generation of the helper data $W$of a PUF is peculiar because not only shall it be of value within the reconstruction process but have the zero leakage property (Definition~\ref{def:zero_leakage}). Zero-leakage quantization helper data generation has already been discussed in Section~\ref{subsec:zero_leakage_hds} and we follow this approach throughout this work. In Fig.~\ref{fig:enrolment} the pdf of a single PUF cell is depicted. The red dot represents the realization of $X$. In this example we used 2 bit quantization to obtain the value of the secret value of the PUF, denoted by $S$. The dashed lines represent the borders of the quantizer and the numbers inside the intervals represent the respective values of $S$, i.e. in our example $S=3$. The location of $X$ within the respective interval is then used to derive the helper data $W$ which is later on used to find an estimate for $S$ during the reconstruction phase. 

\begin{figure}[t]
	\begin{center}
		\includegraphics{./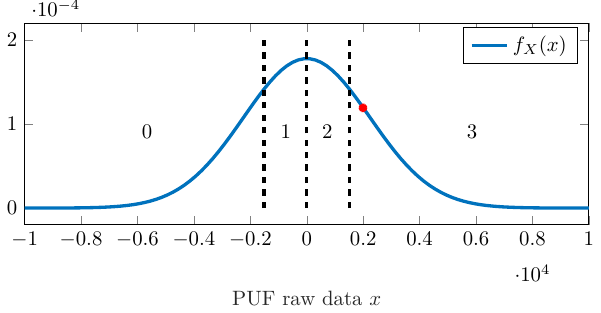}
	\end{center}
	\caption{Enrollment phase}
	\label{fig:enrolment}
\end{figure}
\begin{figure}[t]
	\begin{center}
		\includegraphics{./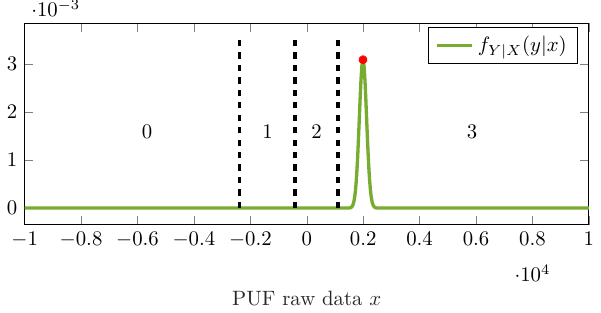}
	\end{center}
	\caption{Reconstruction phase}
	\label{fig:reconstruction}
\end{figure}

The PUF response during the reconstruction phase is modelled as the PUF response during the enrolment phase $X$ perturbed by additive Gaussian noise with variance $\sigma_N^2$. We denote this output by the random variable $Y=X+N$ with $N \sim \mathcal{N}(0,\sigma_N^2)$. During the reconstruction phase, $Y$ is combined with the helper data $W$ to output an estimate for $S$, in the following denoted as $\widetilde{S}$. Throughout this work we use $\sigma_P = 2241$ and $\sigma_N = 129$ which is consistent with the results in \cite{GXKF22}. The output quantizer $Q_{out}$ depends on the quantization helper data $W$ and has been specified in Section~\ref{subsec:zero_leakage_hds}. We denote the output of this quantizer by $\widetilde{S}$.

The conditional distribution of the PUF response during the reconstruction phase for the example in Fig.~\ref{fig:enrolment} is shown in Fig.~\ref{fig:reconstruction}. Notice that the quantization intervals during the reconstruction phase have been shifted compared to the enrollment phase. This is due to the utilization of the helper data $W$. By utilizing the helper data the distance between the value during the enrollment phase (the red dot) and the relevant quantization boundary for the error has increased, leading to a lower reconstruction failure probability.

Our next goal is to secure the HDA for the foil PUF against the attack scenarios mentioned in Section~\ref{subsec:attacker_model}. 
Section~\ref{subsec:connection_hda_secret_sharing} discusses the connection between secret sharing with common randomness and HDAs for PUFs. However, it has also been pointed out that the secret sharing protocol used to establish Theorem~\ref{th:hayashi_secret_key} requires two-way communication, thereby making it unapplicable in the PUF setting. Hence, in the following we investigate secret sharing using common randomness restricted to one-way communication in the finite blocklength regime.

\section{Secret Key Capacities for the Foil PUF}\label{sec:securing_foil}

\begin{figure}[t!]
	\begin{centering}
		\resizebox{\textwidth}{!}{
			\includegraphics{./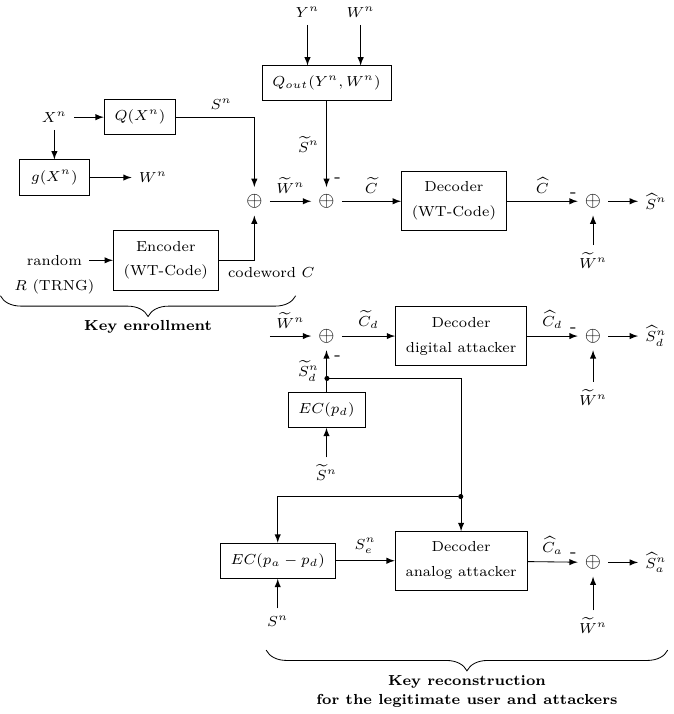}
		}
		\caption{Key enrollment and key reconstruction for legitimate user, digital and analog attacker}
		\label{fig:keyGenOverview_analog}
	\end{centering}
\end{figure}

In this section, we establish the connection between enrollment and reconstruction phase for the legitimate user combined with the attacker models introduced in Subsection~\ref{subsec:attacker_model} and the secret key generation using correlated sources. The enrollment and reconstruction phase during normal operating conditions as well as for the two mentioned attack scenarios are depicted in Fig.~\ref{fig:keyGenOverview_analog}.

We first tackle the problem of determining the maximal code rate of the ECC for a HDA providing security against the attacks described in Section~\ref{subsec:attacker_model}. Essentially, this code rate is related to the hardware complexity and the security level of the PUF as outlined in Section~\ref{subsec:HDA}.

Updating Fig.~\ref{fig:HDA_analog} to also include the attacker models discussed in Section~\ref{subsec:attacker_model} lead to the block diagram shown in Fig.~\ref{fig:keyGenOverview_analog}. Depending on whether we exclude the analog attacker by preventing the necessary measurements on a hardware level as proposed in Section~\ref{subsec:attacker_model} only the variables connected to the decoder for the digital attacker are of concern or also the parts associated with the analog attacker. To keep the analysis of the model simple, we take the approach of examining the scenario where only the digital attacker needs to be considered first. The changes necessary to also integrate the analog measurements of the PUF then build from there.

We observe that for the reconstruction phase the legitimate user measures $Y^n$ and utilizing the quantization helper data $W^n$ obtains an estimate for $S^n$ referred to as $\widetilde{S}^n$. In contrast via the probing the digital measurement line the attacker obtains the same measurement with additional erasures $\widetilde{S}_d^n$. By a similar argument to the one in Section~\ref{subsec:connection_hda_secret_sharing} we have that the resulting helper data scheme's task is to implement a secret sharing algorithm using a source of common randomness. The source of common randomness is formed by the joint probability distribution $P_{S,\widetilde{S},\widetilde{S}_d}$. In the secret sharing problem, the first terminal Alice gets access to $S^n$ and wants to agree on a key with Bob who gets $\widetilde{S}^n$, while the eavesdropper Eve obtains $\widetilde{S}_d^n$. As described in Section~\ref{subsec:secret_sharing}, Alice and Bob next exchange messages over a public channel and aim to agree on a secret key in a secure manner, i.e. in a way such that it is impossible for Eve to obtain information about the key. Interpreting the HDA as a way to integrate a secret sharing protocol, the information shared over the public channel are the quantization helper data $W^n$ and the reconstruction helper data $\widetilde{W}^n$. Alice's and Bob's goal is to agree on a key which is as large as possible, thereby maximizing the entropy of the key for some arbitrary but fixed $n$. Results on the secret key capacity (asymptotic setting as $n\to \infty$) have been recapitulated in Section~\ref{subsec:secret_sharing} (Theorem~\ref{th:secret_key_capacity}) and applying these results to the HDA scheme in Fig.~\ref{fig:keyGenOverview_analog} leads to the result in Theorem~\ref{th:asymptotic_digital}.

\begin{definition}\label{def:key_capacity}
	We consider an HDA with $q$ input quantization levels and an attacker trying to obtain the secret established during the enrollment process.
	The key capacity $C^q_{key,attacker}$ in this setup is the maximal asymptotic code rate for which a wiretap code exists such that the block error probability for the legitimate user is arbitrarily small and the system achieves an arbitrarily high secrecy level as the blocklength $n$ goes to infinity.
\end{definition}
\begin{theorem}\label{th:asymptotic_digital}
	The key capacity $C^q_{key,dig}$ for the foil PUF and the digital attacker is  optimal in the sense that it enables a code rate for the ECC that is equal to the secret key capacity of the secret sharing problem with common randomness specified by the source $P_{S,\widetilde{S},\widetilde{S}_d}$. More specifically, the code rate $C_{key,dig}^q$ is defined by
	\begin{equation}\label{eq:puf_capacity}
		C^q_{key,dig} \defeq \max_{\text{input quantizer}} I(S;\widetilde{S}|W) p_d \enspace .
	\end{equation}
\end{theorem}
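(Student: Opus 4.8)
The plan is to reduce the claim to the known single-letter characterization of the secret key capacity for a source of common randomness (Theorem~\ref{th:secret_key_capacity} together with its degraded corollary) and then to evaluate that expression for the source induced by the foil PUF and the digital attacker; the maximization over input quantizers is appended at the end because the input quantizer is a free design parameter.

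First I would make the equivalence sketched in Section~\ref{subsec:connection_hda_secret_sharing} precise in the presence of an eavesdropper. Fix a $q$-level input quantizer; together with the zero-leakage construction of Section~\ref{subsec:zero_leakage_hds} and the reconstruction quantizer of Theorem~\ref{th:output_intervals} it determines the joint law $P_{S,\widetilde{S},\widetilde{S}_d,W}$, in which $(S,\widetilde{S},\widetilde{S}_d)$ is discrete and $W$ is continuous. In the HDA the quantization helper data $W^n$ is public, hence known to Alice ($\to S^n$), Bob ($\to \widetilde{S}^n$) and Eve ($\to \widetilde{S}_d^n$), while the reconstruction helper data $\widetilde{W}^n = C + S^n$, with $C$ drawn i.i.d.\ uniform over $\mathcal{S}$, realizes a single one-way message over the public channel. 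Thus, once $W^n$ has been released, every reliable zero-leakage HDA is a one-way secret-sharing protocol for the source $(S^n,\widetilde{S}^n,\widetilde{S}_d^n)$, so its ECC rate is at most the corresponding secret key capacity; conversely, the one-time-pad/wiretap-code argument from the proof sketch of Theorem~\ref{th:secret_key_capacity}, carried out conditionally on $W^n$, attains that rate using only one-way communication. This identifies $C^q_{key,dig}$ for the fixed quantizer with the secret key capacity of $P_{S,\widetilde{S},\widetilde{S}_d}$ with $W$ as common side information.

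Next I would verify the degraded structure and evaluate the expression. The erased observation $\widetilde{S}_d$ is obtained from $\widetilde{S}$ through an erasure channel whose erasure indicator $B$ is independent of $(S,\widetilde{S},W)$ and, since the erasure symbol lies outside $\mathcal{S}$, is a deterministic function of $\widetilde{S}_d$; hence $S \fooA \widetilde{S} \fooA \widetilde{S}_d$ holds, also after conditioning on $W$, and the upper and lower bounds of Theorem~\ref{th:secret_key_capacity} coincide. Treating $W$ as common side information (equivalently, applying the degraded formula for each value of $W$ and averaging) gives
\begin{equation*}
	\widetilde{C}_S = I(S;\widetilde{S}\mid W) - I(S;\widetilde{S}_d\mid W) \enspace .
\end{equation*}
Writing $p_d$ for the per-cell erasure probability and conditioning on $B$, independence of $B$ from $(S,W)$ together with $H(B\mid\widetilde{S}_d)=0$ yields $I(S;\widetilde{S}_d\mid W) = (1-p_d)\,I(S;\widetilde{S}\mid W)$, so $\widetilde{C}_S = p_d\,I(S;\widetilde{S}\mid W)$. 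Maximizing over all admissible $q$-level input quantizers gives $C^q_{key,dig} = \max_{\text{input quantizer}} I(S;\widetilde{S}\mid W)\,p_d$, as claimed.

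I expect the main obstacle to be the converse half of the second step: arguing rigorously that the fixed helper-data architecture cannot beat the secret key capacity of the induced source. This means casting an arbitrary reliable zero-leakage HDA as a bona fide one-way secret-sharing protocol on $(S^n,\widetilde{S}^n,\widetilde{S}_d^n)$ and then invoking the converse part of Theorem~\ref{th:secret_key_capacity}. Some care is needed because $W^n$ is a deterministic function of Alice's observation and is continuous, so the discrete secret-key-capacity theory has to be applied conditionally on $W^n$ and then averaged, and because the reliability and secrecy requirements of Definition~\ref{def:secret_key_rate} must be matched to those the HDA implicitly satisfies (for instance hashing $C$ to remove residual bias). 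The remaining ingredients—absorbing $W$ into the capacity expression and the erasure computation—are routine.
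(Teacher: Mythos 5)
Your proposal is correct and follows essentially the same route as the paper: reduce to the secret key capacity of the source $P_{S,\widetilde{S},\widetilde{S}_d}$ via Theorem~\ref{th:secret_key_capacity}, use the one-time-pad/wiretap achievability construction realized by $\widetilde{W}^n = C + S^n$, compute $I(S;\widetilde{S}_d\mid W) = (1-p_d)\,I(S;\widetilde{S}\mid W)$ for the erasure channel, and maximize over input quantizers at the end. You are in fact somewhat more careful than the paper, which asserts the converse as obvious and does not explicitly verify the Markov chain $S \fooA \widetilde{S} \fooA \widetilde{S}_d$ or discuss conditioning on the continuous $W$.
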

\begin{proof}
	First of all it is obvious that it is impossible to achieve a rate higher than the secret key capacity of the secret sharing problem using common randomness specified by the joint distribution $P_{S,\widetilde{S},\widetilde{S}_d}$.
	
	For the achievability consider the codewords of the ECC to be sampled randomly and iid from the uniform distribution over the range of $S$, denoted by $\cS$. The HDA  in Fig.~\ref{fig:keyGenOverview_analog} precisely implements the construction outlined in the achievability proof sketch of Theorem~\ref{th:secret_key_capacity}. Hence, due to the fact that $\widetilde{S}_e$ is defined to be the output of an erasure channel having erasure probability $p_d$.
	
	By the proof sketch for the achievability of Theorem~\ref{th:secret_key_capacity} and by identifying $S\equiv X$, $Y \equiv \widetilde{S}$ and $Z \equiv \widetilde{S}_d$, we have according to Theorem~\ref{th:secret_key_capacity}
	\begin{align}\label{eq:achievability_hda_adversary}
		\widetilde{C}_S &= I(X;Y) - I(X;Z) = I(S;\widetilde{S}|W) - I(S;\widetilde{S}_e|W) \nonumber \\
		&= I(S;\widetilde{S}|W) - I(S;\widetilde{S}|W)(1-p_d) = I(S;\widetilde{S}|W) p_d \enspace .
	\end{align}
	Furthermore, compared to the secret sharing using common randomness problem introduced in Section~\ref{subsec:secret_sharing} we have the additional freedom of choosing the input quantizer. This extra degree of freedom can be exploited since Equation~\eqref{eq:achievability_hda_adversary} is valid for any input quantizer Equation~\eqref{eq:puf_capacity} follows, completing the proof.
\end{proof}

In order to compute this capacity the first step is to analyze the relation between $S$ and $\widetilde{S}$ conditioned on the knowledge of the helper data $W$. The distribution of $S$ follows from the distribution of the PUF during the enrollment phase, i.e. from the distribution of $X$, and from the choice of the input quantizer. It is complicated to perform the maximization in Theorem~\ref{th:asymptotic_digital} because the channel matrix is changing according to the input quantizer as its choice influences the output quantizer (see Remark~\ref{rem:zero_leakage}). The following observation is helpful though.

\begin{lemma}\label{lem:simplification}
	For the channel resulting in the concatenation of enrolment and reconstruction phase at the legitimate user it holds that
	\begin{equation}
		I(S;\widetilde{S}|W) = I(X;\widetilde{S}|W)
	\end{equation}
\end{lemma}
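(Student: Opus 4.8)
The plan is to show that $S$ is a deterministic function of $X$ given $W$, which immediately implies the claimed equality of mutual informations via the data-processing inequality applied in both directions. Concretely, I would first recall that the input quantizer partitions the range of $X$ into intervals $A_0, \ldots, A_{N-1}$, and that $S = Q(X)$ simply records the index $t$ of the interval $A_t$ containing $X$. Thus $S$ is already a (deterministic) function of $X$ alone, so the Markov chain $S \fooA X \fooA \widetilde{S}$ holds when we additionally condition on $W$; more carefully, $(S, W)$ is a deterministic function of $X$, and conversely $\widetilde{S}$ depends on $(X, W)$ only through $(S, W)$ — no, wait: $\widetilde{S}$ is obtained from $Y$ and $W$, and $Y = X + N$, so the relevant chain is the one I set up next.

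The key observation is the following chain structure. Conditioned on $W = w$, the secret $S$ is a function of $X$ (namely the interval index), so $H(S \mid X, W) = 0$. Therefore
\begin{equation}
	I(X;\widetilde{S}\mid W) = I(X, S;\widetilde{S}\mid W) = I(S;\widetilde{S}\mid W) + I(X;\widetilde{S}\mid S, W).
\end{equation}
It remains to argue that the last term vanishes, i.e. that $X \fooA (S, W) \fooA \widetilde{S}$ forms a Markov chain. For this I would trace the generative process: given the enrollment value $X = x$, the pair $(S, W)$ is computed deterministically as $S = t$ (the index with $x \in A_t$) and $W = g(x)$ via Equation~\eqref{eq:helper_data}; then $Y = x + N$ with $N \sim \cN(0,\sigma_N^2)$ independent of everything; and finally $\widetilde{S} = Q_{out,W}(Y)$ is a deterministic function of $Y$ and $W$. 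So
\begin{equation}
	P_{\widetilde{S}\mid X, S, W}(\widetilde{s}\mid x, t, w) = \int P_{\widetilde{S}\mid Y, W}(\widetilde{s}\mid y, w)\, f_{N}(y - x)\, \diff y,
\end{equation}
and the question is whether this depends on $x$ beyond what is already carried by $(t, w)$. Here is where the zero-leakage construction does the work: by Theorem~\ref{th:helpder_data} and Equation~\eqref{eq:helper_data}, knowing $S = t$ and $W = w$ pins down $x$ uniquely as $x = g_t^{-1}(w)$, the unique point in interval $A_t$ with $g(x) = w$. In other words, $(S, W)$ jointly determine $X$, so the "Markov chain" $X \fooA (S,W) \fooA \widetilde{S}$ is trivial — $X$ is itself a function of $(S,W)$ — and $I(X;\widetilde{S}\mid S, W) = 0$. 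Combined with the display above this gives $I(X;\widetilde{S}\mid W) = I(S;\widetilde{S}\mid W)$.

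The main obstacle — really the only subtlety — is the direction $I(X;\widetilde S \mid W) \le I(S;\widetilde S\mid W)$, i.e. establishing that the extra information in $X$ over $(S,W)$ is genuinely null. One has to be careful that this uses the specific zero-leakage helper data of Equation~\eqref{eq:helper_data}: for a generic $W$ that does not satisfy $g(A_0) = \cdots = g(A_{N-1}) = \cW$ with $g$ injective on each interval, knowing $(S,W)$ would not determine $X$, and the identity could fail. Under the standing assumption of the paper that we use exactly this construction, $g_t^{-1}$ is well-defined, so $X = g_S^{-1}(W)$ and the claim follows. The reverse inequality $I(S;\widetilde S\mid W)\le I(X;\widetilde S\mid W)$ is immediate since $S = Q(X)$ is a function of $X$. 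I would present the argument in the "both inequalities" form, or equivalently via the single identity $I(X;\widetilde S\mid W) = I(S;\widetilde S\mid W) + I(X;\widetilde S\mid S,W)$ with the second term shown to be zero, whichever reads more cleanly in context.
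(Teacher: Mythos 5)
Your argument is correct and is essentially the paper's own proof: both expand $I(X,S;\widetilde{S}\mid W)$ via the chain rule and observe that the two residual terms vanish because $X$ determines $S$ and, thanks to the injectivity of $g$ on each quantization interval, $(S,W)$ determines $X$. Your added remark that the second direction genuinely relies on the zero-leakage construction of Equation~\eqref{eq:helper_data} is a correct and slightly more explicit justification of the step the paper states in one line.
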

\begin{proof}
	By using the chain rule of mutual information it holds that
	\begin{align}
		I(X,S;\widetilde{S}|W) &= I(S;\widetilde{S}|W) + I(X;\widetilde{S}|W,S)\\
		&= I(X;\widetilde{S}|W) + I(S;\widetilde{S}|W,X) \enspace .
	\end{align}
	It holds that $I(X;\widetilde{S}|W,S) = 0$ and $I(S;\widetilde{S}|W,X) = 0$ because $(W,S)$ uniquely determines $X$ and $X$ determines $S$ and the statement of the Lemma follows.
\end{proof}

Since Lemma~\ref{lem:simplification} shows that we can focus on $I(X;\widetilde{S}|W)$ and furthermore
\begin{align}
	I(X;\widetilde{S}|W) &= H(\widetilde{S}|W) - H(\widetilde{S}|W,X)\\
	&= H(\widetilde{S}|W) - H(\widetilde{S}|X) \enspace,
\end{align}
where the last equality follows because $W$ is a function of $X$.
We observe that the quantity that we are interested in is the difference between the uncertainty of $\widetilde{S}$ given the publicly available helper data $W$ compared to the uncertainty of $\widetilde{S}$ given the private PUF response during the enrollment phase $X$.

Next we deal with the additional analog attacker. This attacker not only gets access to the digitized measurement of the internal circuitry of the PUF after the erasure channel $EC(p_d)$ but also to an analog measurement of the foil PUF passed through another erasure channel. This channel inflicts an additional fraction of $(p_a-p_d)$ errors into the positions that have not yet been erased by the hole necessary for the digital measurement. Hence, this erasure channel needs to access which cells have been erased by the digital attacker and this is reflected in the additional input $\widetilde{S}_d^n$. Its only purpose is to declare which cells have already been erased in this context.

The attacker's decoder however can utilize the information that $\widetilde{S}_d^n$ provides. This information is only useful for reconstructing the values that have been erased by the additional erasures because the number of measurements for the analog attacker is unlimited and hence unerased symbols are perfect reconstructions of the enrollment value $S^n$.

\begin{theorem}\label{th:asymptotic_analog}
	For the key capacity $C_{key,ana}^q$ of the foil PUF and the analog attacker it holds that
	\begin{equation}
		\max_{\text{input quantizer}} I(S;\widetilde{S}|W) (1-p_a+p_d) - H(S)(1-p_a) \leq C^q_{key,ana} \leq \max_{\text{input quantizer}} I(S;\widetilde{S}|W) p_d \enspace .
	\end{equation}
\end{theorem}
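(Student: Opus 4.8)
The plan is to treat the analog attacker as an eavesdropper in a secret-sharing-with-common-randomness problem and to sandwich the key capacity between two applications of Theorem~\ref{th:secret_key_capacity}. The upper bound is immediate: the analog attacker is strictly more powerful than the digital attacker (he additionally receives an analog side channel and knows which cells the digital probing erased), so any key that is secure against the analog attacker is secure against the digital one, and hence $C^q_{key,ana} \le C^q_{key,dig} = \max_{\text{input quantizer}} I(S;\widetilde S|W)\,p_d$ by Theorem~\ref{th:asymptotic_digital}. The work is in the lower bound, which I would obtain from the lower bound $I(X;Y)-I(X;Z) \le \widetilde C_S$ in Theorem~\ref{th:secret_key_capacity} after correctly modelling the analog attacker's observation $Z$.

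First I would fix an input quantizer and identify the source of common randomness as $P_{S,\widetilde S, Z_{ana}}$ with the roles $X \equiv S$, $Y \equiv \widetilde S$ (conditioned throughout on the public helper data $W$), and $Z \equiv Z_{ana}$. The key modelling step is to describe $Z_{ana}$: a uniformly random fraction $1-p_a$ of the coordinates survive all erasures, and on those the analog attacker — having unlimited noiseless measurements — recovers $S$ exactly; a fraction $p_a - p_d$ are erased only by the extra analog hole, but for \emph{those} coordinates the attacker still has the digital value $\widetilde S_d$ (these were not erased by the digital probe), which is distributed as $\widetilde S$; and a fraction $p_d$ are erased in both channels, giving the attacker nothing. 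I would then compute $I(S;Z_{ana}|W)$ by conditioning on the erasure pattern (which is independent of everything else and known to all parties): on the $(1-p_a)$-fraction the attacker learns $S$, contributing $H(S)(1-p_a)$; on the $(p_a-p_d)$-fraction the attacker learns $\widetilde S$, contributing $I(S;\widetilde S|W)(p_a-p_d)$; on the $p_d$-fraction nothing is learned. Hence
\begin{equation*}
	I(S;Z_{ana}|W) = H(S)(1-p_a) + I(S;\widetilde S|W)(p_a - p_d).
\end{equation*}

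Next I would invoke the achievability construction from the proof sketch of Theorem~\ref{th:secret_key_capacity}, which the HDA of Fig.~\ref{fig:keyGenOverview_analog} implements verbatim (codewords sampled iid uniform over $\cS$, helper data $\widetilde W^n = C + S^n$ sent over the public channel), to conclude that the rate
\begin{equation*}
	I(S;\widetilde S|W) - I(S;Z_{ana}|W) = I(S;\widetilde S|W)\bigl(1 - p_a + p_d\bigr) - H(S)(1-p_a)
\end{equation*}
is achievable for that fixed quantizer; maximising over the input quantizer gives the claimed lower bound. The main obstacle I anticipate is justifying the decomposition of $I(S;Z_{ana}|W)$ rigorously — in particular that on the merely-analog-erased positions the attacker's best observation really is $\widetilde S$ (the digital value) and that using $\widetilde S_d$ as a ``pointer'' to which cells are erased leaks nothing beyond the erasure pattern itself, so that conditioning on that pattern makes the per-coordinate contributions add up cleanly. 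A secondary subtlety is that the lower and upper bounds do not coincide (the source is not degraded in the sense $S \fooA \widetilde S \fooA Z_{ana}$, since $Z_{ana}$ contains coordinates of $S$ that are \emph{cleaner} than $\widetilde S$), so Theorem~\ref{th:secret_key_capacity} genuinely only pins the capacity down to an interval; I would note this explicitly rather than attempting to close the gap.
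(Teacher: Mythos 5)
Your proposal is correct and follows essentially the same route as the paper: the lower bound comes from the achievability direction of Theorem~\ref{th:secret_key_capacity} applied to the source $P_{S,\widetilde S,(\widetilde S_d,\widetilde S_a)}$ with the eavesdropper's information decomposed over the three erasure events (the paper phrases this as a case split of $H(S\,|\,\widetilde S_d,\widetilde S_a,W)$ rather than of $I(S;Z_{ana}\,|\,W)$, which is algebraically equivalent), and the upper bound is the same domination-by-the-digital-attacker argument. The only content of the paper's proof you omit is a side remark showing that the upper bound cannot be trivially sharpened via $I(X;Y|Z)$, which is not needed for the stated theorem.
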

\begin{proof}
	We first prove the lower bound. Using Theorem~\ref{th:secret_key_capacity} we have that
	\begin{equation}\label{eq:analog_attacker}
		C^q_{key}\geq I(S;\widetilde{S}|W) - I(S;\widetilde{S}_d,\widetilde{S}_a|W)\enspace .
	\end{equation}
	By the definition of mutual information it holds that
	\begin{equation}
		I(S;\widetilde{S}_d,\widetilde{S}_a|W) = H(S) - H(S|\widetilde{S}_d,\widetilde{S}_a,W)
	\end{equation}
	because $H(S|W)=H(S)$ due to the zero leakage condition and furthermore we observe that
	\begin{align}
		H(S|\widetilde{S}_d,\widetilde{S}_a,W) &=  H(S|\widetilde{S}_d=0,\widetilde{S}_a=E,W) P_{\widetilde{S}_d}(0)(p_a-p_d) \nonumber \\
			&+ \ldots \nonumber \\
			&+ H(S|\widetilde{S}_d=|\cS|,\widetilde{S}_a=E,W) P_{\widetilde{S}_d}(|\cS|)(p_a-p_d) \nonumber \\
			&+ H(S|\widetilde{S}_d=E,\widetilde{S}_a=E,W) \, p_d \nonumber \\
			&= H(S|\widetilde{S},W) (p_a-p_d) + H(S) p_d \enspace .
	\end{align}
	Plugging these simplifications into the right hand side of \eqref{eq:analog_attacker}, we have
	\begin{equation}
		H(S) p_d - H(S|\widetilde{S},W) (1-p_a+p_d)
	\end{equation}
	which is equivalent to the lower bound in the theorem after some simple algebraic steps and maximizing over the choice of the input quantizer.
	
	The upper bound follows trivially because the analog attacker is stronger than the digital attacker. However, we also show that the upper bound cannot be trivially improved by the well known upper bound
	\begin{equation}\label{eq:upper_analog}
		I(S;\widetilde{S}|\widetilde{S}_d,\widetilde{S}_a,W) = H(S|\widetilde{S}_d,\widetilde{S}_a,W) - H(S|\widetilde{S},\widetilde{S}_d,\widetilde{S}_a,W) \enspace .
	\end{equation}
	The first entropy in Equation~\eqref{eq:upper_analog} has already been simplified within this proof. Hence, we only need to simplify $H(S|\widetilde{S},\widetilde{S}_d,\widetilde{S}_a,W)$. Observe that
	\begin{equation}
		H(S|\widetilde{S},\widetilde{S}_d,\widetilde{S}_a,W) = H(S|\widetilde{S},\widetilde{S}_a=E,W)p_a = H(S|\widetilde{S})p_a
	\end{equation}
	and therefore
	\begin{equation}
		I(S;\widetilde{S}|\widetilde{S}_d,\widetilde{S}_a,W) = I(S;\widetilde{S}|W)p_d \enspace .
	\end{equation}
\end{proof}

\begin{table}
	\caption{Achievable asymptotic rates for the PUF-channel for weak and strong attackers, different input quantization alphabets and quantization strategies, erasure probability $p_d=0.1$ for the digital attacker and $p_a=0.2$ for the analog attacker, $\sigma_P = 2241, \sigma_N = 129$}
	\centering
	\begin{tabular}{|c|c|c|c|c|c|c|}
		\hline
		quantizer & equidist. & equidist. & equiprob. & equiprob. & optimized & optimized  \\
		levels & digital & analog & digital & analog & digital & analog\\
		\hline
		2 & 0.1 & 0.1 & 0.1 & 0.1 & 0.1 & 0.1\\
		4 & 0.117 & 0.117 & 0.2 & 0.2 & 0.2 & 0.2\\
		8 & 0.196 & 0.196 & 0.298 & 0.282 & 0.299 & 0.294\\
		16 & 0.291 & 0.291 & 0.356 & 0 & 0.363 & 0.327\\
		32 & 0.376 & 0.269 & 0.382 & 0 & 0.382 & 0.311\\
		64 & 0.389 & 0 & 0.398 & 0 & 0.398 & 0\\
		128 & 0.404 & 0 & 0.406 & 0 & 0.406 & 0\\
		256 & 0.410 & 0 & 0.409 & 0 & 0.410 & 0\\
		\hline
	\end{tabular}
	\label{tab:achievable_rates_0.1_asymptotic}
\end{table}

\begin{table}
	\caption{Achievable asymptotic rates for the PUF-channel for weak and strong attackers, different input quantization alphabets and quantization strategies, erasure probability $p_d=0.18$ for the digital attacker and $p_a = 0.36$ for the analog attacker, $\sigma_P = 2241, \sigma_N = 129$}
	\centering
	\begin{tabular}{|c|c|c|c|c|c|c|}
		\hline
		quantizer & equidist. & equidist. & equiprob. & equiprob. & optimized & optimized  \\
		levels & digital & analog & digital & analog & digital & analog\\
		\hline
		2 & 0.18 & 0.18 & 0.18 & 0.18 & 0.18 & 0.18\\
		4 & 0.211 & 0.211 & 0.36 & 0.36 & 0.36 & 0.36\\
		8 & 0.353 & 0.353 & 0.536 & 0.524 & 0.537 & 0.533\\
		16 & 0.523 & 0.523 & 0.640 & 0.356 & 0.654 & 0.600\\
		32 & 0.677 & 0.591 & 0.687 & 0 & 0.688 & 0.613\\
		64 & 0.700 & 0.061 & 0.716 & 0 & 0.716 & 0.061\\
		128 & 0.727 & 0 & 0.731 & 0 & 0.731 & 0\\
		256 & 0.738 & 0 & 0.737 & 0 & 0.738 & 0\\
		\hline
	\end{tabular}
	\label{tab:achievable_rates_0.18_asymptotic}
\end{table}
Evaluation of Theorem~\ref{th:asymptotic_digital} and Theorem~\ref{th:asymptotic_analog} leads to the achievability results presented in Tables~\ref{tab:achievable_rates_0.1_asymptotic} and \ref{tab:achievable_rates_0.18_asymptotic}. We assumed that the analog attacker has to destroy twice as much of the foil as the digital attacker has to. In Table~\ref{tab:achievable_rates_0.1_asymptotic} we took the conservative approach of assuming that the digital attacker destroys $10\%$ of the PUF cells, i.e. $p_d=0.1, p_a=0.2$. Table~\ref{tab:achievable_rates_0.18_asymptotic} shows the results for $p_d=0.18, p_a=0.36$, which is the value corresponding to destroying one electrode in both layers of the foil as discussed in \cite{IOK+18} considering a hole diameter $>300\mu m$.

In both tables we considered various numbers of quantization levels and input quantizer strategies. From the proofs of Theorem~\ref{th:asymptotic_digital} and Theorem~\ref{th:asymptotic_analog} it is easy to see that choosing a suboptimal input quantizer still gives an achievable lower bound on the secret key rate.

In this work, we consider equidistant and equiprobable input quantization. Furthermore, we use an optimization algorithm aiming to find the input quantizer maximizing the achievable secret key rate according to Theorem~\ref{th:asymptotic_digital} and Theorem~\ref{th:asymptotic_analog}.

The results show that in case we have to cope with the digital attacker that the equiprobable input quantization performs better than equidistant quantization. For the analog attacker the opposite is the case as we increase the number of quantization levels. For a small number of quantization levels equiprobable quantization still performs better.

The reason for that behaviour is that the analog attacker obtains a perfect duplicate of the PUF cell measured during enrollment in case the respective cell is not erased by the holes he is required to drill to perform measurements. Especially if the measurement during the reconstruction by the legitimate user is unreliable, the analog measurements provide valuable additional information to the analog attacker that the legitimate user does not have. The input quantization has a substantial effect on the reliability of the measurement during reconstruction. Once an interval length at the output quantizer (influenced by the input quantizer) is below a certain threshold a measurement result of this value becomes unreliable. In this case the additional information provided by the analog measurement substantially increases due to the insecurity of the reconstruction value for the legitimate user.

This explains why equidistant input quantization performs better for a larger number of quantization levels if the analog attacker is considered. For a small amount of quantization levels the intervals are anyway large enough and the dominant factor is the entropy of the input, which is obviously maximized for the uniform input distribution achieved by the equiprobable input quantizer. For the digital attacker we do not have this trade-off and hence equiprobable quantization always performs better than equidistant quantization.

\section{Secret Sharing using One-Way Communication for Finite Lengths}\label{sec:secret_sharing_oneway}
For this section we stick to the notation used in Section~\ref{subsec:secret_sharing} that is commonly used in secret sharing with common randomness. In particular $X,Y$ and $Z$ are not related to the foil PUF in particular but rather form a general source of randomness via the pmf $P_{XYZ}$. After this section $X$ shall again be considered the analog measurement during enrollment and $Y$ shall again be considered the analog measurement during reconstruction.

Because of this inherent limitation of the construction used in \cite{hayashi2016}, we take the approach of utilizing the construction in the achievability proof sketch of Theorem~\ref{th:secret_key_capacity} also in the finite blocklength case.

\begin{theorem}\label{th:finite_secret_ach}
	For the secret key rate with maximum secrecy $\delta$ and average error probability $\varepsilon$ it holds that
	\begin{equation}
		\widetilde{R}_{max}^*(n,\varepsilon,\delta) \geq \widetilde{C}_S - \sqrt{\frac{V_1}{n}} Q^{-1}(\varepsilon) - \sqrt{\frac{V_2}{n}}Q^{-1}(\delta) + \mathcal{O}\left(\frac{\log(n)}{n}\right) \enspace ,
	\end{equation}
	where
	\begin{equation}\label{eq:v_1_sec_key_rate}
		V_1 = \sum_{\substack{x\in \cX \\ y \in \cY}} P_{XY}(x,y) \log_2^2\left( \frac{P_{XY}(x,y)}{\frac{1}{|\cX|}P_Y(y)} \right) - D\left(P_{XY} || P^{unif}_{\cX} P_Y\right)^2
	\end{equation}
	and
	\begin{equation}\label{eq:v_2_sec_key_rate}
		V_2 = \sum_{\substack{x\in \cX \\ z \in \cZ}} P_{XZ}(x,z) \log_2^2\left( \frac{P_{XZ}(x,z)}{\frac{1}{|\cX|}P_Z(z)} \right) - D\left(P_{XZ} || P^{unif}_{\cX}P_Z\right)^2
	\end{equation}
	with $P^{unif}_{\cX}$ denoting the uniform distribution over the input alphabet $\cX$.
	This rate can be achieved using one-way communication over the public channel.
\end{theorem}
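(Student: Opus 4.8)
The plan is to reduce the one-way secret key agreement problem to a wiretap coding problem, using the very one-time-pad construction that appears in the proof sketch of Theorem~\ref{th:secret_key_capacity}, and then to invoke the finite blocklength achievability bound of Theorem~\ref{th:finite_wiretap}. Concretely: encode the secret key $K \in \cK$ into a block $U^n$ over $\cX$ by means of a stochastic binning encoder driven by Alice's local randomness $\cR_A$; let Alice send the single public message $\Phi^n = U^n + X^n$, the addition being componentwise modulo $|\cX|$; and let Bob form $\widehat U^n$ from $(\Phi^n,Y^n)$ and output its bin index as $K_B$. This is precisely transmission over the discrete memoryless wiretap channel with input $U$, legitimate output $B=(U+X,Y)$ and eavesdropper output $E=(U+X,Z)$, with $(X,Y,Z)$ drawn i.i.d.\ from $P_{XYZ}$; Eve's total information $(Z^n,\Phi^n)$ is exactly the channel output $E^n$, and — as already noted in Section~\ref{subsec:finite_secret_sharing} — the reliability and maximum-secrecy requirements of the secret sharing protocol coincide with those of the corresponding wiretap code. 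Hence an $(n,|\mathcal{M}|,\varepsilon,\delta)_{max}$ wiretap code of rate $R$ yields a one-way $(n,|\cK|,\varepsilon,\delta)_{max}$ secret sharing protocol of the same rate, and it uses a single forward transmission by construction.

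I would then apply the achievability part of Theorem~\ref{th:finite_wiretap} to this constructed channel with the input distribution $P_U$ chosen uniform on $\cX$. By the one-time-pad property this input law is capacity-achieving for the constructed channel, and the standard manipulation from the proof sketch of Theorem~\ref{th:secret_key_capacity} gives, for the source $X \fooA Y \fooA Z$, that $I(U;U+X,Y)-I(U;U+X,Z)=I(X;Y)-I(X;Z)=\widetilde{C}_S$. It remains to evaluate the two dispersions of Theorem~\ref{th:finite_wiretap} for this channel and this input. Using that $U$ is uniform on $\cX$ and independent of $(X,Y,Z)$, the public message $\Phi=U+X$ is itself uniform on $\cX$ and independent of $(X,Y,Z)$, and writing $x=\phi-u$ one gets $P_{(U+X,Y)\mid U}\big((\phi,y)\mid u\big)=P_{XY}(x,y)$ and $P_{(U+X,Y)}(\phi,y)=\tfrac{1}{|\cX|}P_Y(y)$; therefore the information density $\log\frac{P_{B\mid U}(b\mid u)}{P_B(b)}$ equals $\log\frac{P_{XY}(x,y)}{\frac{1}{|\cX|}P_Y(y)}$, and under the conditional law $B\mid U=u$ the pair $(X,Y)$ is $P_{XY}$-distributed independently of $u$. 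Consequently its conditional mean is $D\big(P_{XY}\,\|\,P^{unif}_{\cX}P_Y\big)$ and its conditional variance is
\begin{equation*}
	\sum_{x\in\cX,\,y\in\cY} P_{XY}(x,y)\log_2^2\!\left(\frac{P_{XY}(x,y)}{\frac{1}{|\cX|}P_Y(y)}\right) - D\big(P_{XY}\,\|\,P^{unif}_{\cX}P_Y\big)^2,
\end{equation*}
independently of $u$, which is exactly $V_1$ in \eqref{eq:v_1_sec_key_rate}; the identical computation with $Z$ in place of $Y$ yields $V_2$ as in \eqref{eq:v_2_sec_key_rate}. The $\mathcal{O}(\log n/n)$ term carries over unchanged because the reduction preserves both blocklength and rate.

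The step I expect to require the most care is the interface with Theorem~\ref{th:finite_wiretap}: that theorem is stated with the optimized quantity $C_S$ in the leading term while its dispersions are tied to a fixed input law, so one must argue that the uniform choice of $P_U$ is simultaneously capacity-achieving for the constructed channel (so that the leading constant is indeed $\widetilde{C}_S$, which is where the degradedness $X \fooA Y \fooA Z$ enters, via Theorem~\ref{th:secret_key_capacity} and its corollary) and that its dispersions equal the $V_1,V_2$ displayed above; one should also confirm that the discreteness/finiteness hypotheses of Theorem~\ref{th:finite_wiretap} are met, which holds because $P_{XYZ}$ is a discrete source. The remaining work — the modular-addition simplifications and the identification of the mean of the information density with $\log|\cX|-H(X|Y)$ — is routine.
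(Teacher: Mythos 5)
Your proposal is correct and follows essentially the same route as the paper: the one-time-pad reduction to the wiretap channel with input $U$ uniform on $\cX$ and outputs $(U+X,Y)$, $(U+X,Z)$, followed by an application of Theorem~\ref{th:finite_wiretap} and the change of variables $x=\phi-u$ showing that the conditional information densities (hence the dispersions) do not depend on $u$. Your explicit remark that identifying the leading constant with $\widetilde{C}_S$ uses the degradedness $X \fooA Y \fooA Z$ is a point the paper leaves implicit, but otherwise the arguments coincide.
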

\begin{proof}
	As in the achievability proof sketch for Theorem~\ref{th:secret_key_capacity}, we use a secrecy codebook for the degraded wiretap channel sampled according to a distribution $P_U$, where in this particular case we define $P_U$ to be uniform over the alphabet $\cX$. We mask the codeword symbolwise by computing $u_i + x_i$ using the common randomness provided by the source $P_{XYZ}$ and send the result over the public channel to Bob. Again the artificially created wiretap channel with input $U$ and outputs $(U+X,Y)$ and $(U+X,Z)$ can be used to show achievability results for secret sharing using common randomness, this time in the finite blocklength regime. To compute the achievable secret key rates we apply Theorem~\ref{th:finite_wiretap}. Notice that the secrecy definitions for (degraded) wiretap channels and secret sharing using common randomness match each other. Hence, the secrecy condition of the secrecy code implies the secrecy definition for secret sharing using common randomness.
	
	For the first channel dispersion term $V_1$ the legitimate user channel from $U$ to $(U+X,Y)$ is relevant. Hence, we have
	\begin{align}
		V_1 = \sum_{u\in \cX} P_U(u) \Bigg(&\sum_{\substack{\tilde{x} \in \cX \\ y \in \cY}} P_{U \oplus X,Y|U}(\tilde{x},y|u)\log_2^2\left(\frac{P_{U\oplus X,Y|U}(\tilde{x},y|u)}{P_{U \oplus X,Y}(\tilde{x},y)} \right) \nonumber \\&- D\left(P_{U \oplus X,Y|U=u}||P_{U\oplus X,Y}\right)^2\Bigg) \enspace .
	\end{align}
	Next we analyze the terms $P_{U\oplus X,Y|U}(\tilde{x},y|u)$ and $P_{U\oplus X,Y}(\tilde{x},y)$.
	We have
	\begin{equation}
		P_{U\oplus X,Y|U}(\tilde{x},y|u) = P_{U\oplus X|U}(\tilde{x}|u) P_{Y|U\oplus X,U}(y|\tilde{x},u) = P_X(\tilde{x}-u) P_{Y|X}(y|\tilde{x}-u)
	\end{equation}
	because $U$ is uniformly distributed over $\cX$ and independent of $X,Y$.
	
	Furthermore, it holds that
	\begin{equation}
		P_{U\oplus X,Y}(\tilde{x},y) = P_{U\oplus X}(\tilde{x}) P_{Y|U\oplus X}(y|\tilde{x}) = \frac{1}{|\cX|} P_Y(y)
	\end{equation}
	again because $U$ is uniformly distributed over $\cX$ and independent of $X,Y$. Notice that $U\oplus X$ is independent of $Y$ even though $X$ and $Y$ may not be independent. This is basically because $X$ is encrypted by a one-time pad using $U$ as its key.
	
	Equation~\eqref{eq:v_1_sec_key_rate} follows because the sum in the bracket goes over the entire alphabet sets $\cX$ and $\cY$. Notice that the sums do not depend on the choice of $U$ in the outer sum. The same holds for the sums defining the divergence term.
	
	An analogous argument holds for $V_2$ in Equation~\eqref{eq:v_2_sec_key_rate}.
\end{proof}

\begin{theorem}\label{th:finite_secrecy_conv}
	Let us assume that we use the same communication protocol as in the proof of Theorem~\ref{th:finite_secret_ach}.
	In this case for the secret key rate with average secrecy $\delta$ and average error proabability $\varepsilon$ it holds that
	\begin{equation}
		\widetilde{R}^*_{avg} \leq \widetilde{C}_S - \sqrt{\frac{V_c}{n}} Q^{-1}(\varepsilon + \delta) + \mathcal{O}\left( \frac{\log(n)}{n} \right)\enspace ,
	\end{equation}
	where
	\begin{equation}
		V_c = \sum_{\substack{x\in \cX \\ y\in \cY \\ z\in \cZ}} P_{XYZ}(x,y,z) \log_2^2\left(\frac{P_{XYZ}(x,y,z)}{P_{XZ}(x,z)P_{Y|Z}(y|z)}\right) - D(P_{XYZ}||P_{Y|Z}P_{XZ})^2 \enspace .
	\end{equation}
	This upper bound holds in particular for the secret sharing problem using only one-way communication over the public channel.
\end{theorem}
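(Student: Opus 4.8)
The plan is to recognise that Theorem~\ref{th:finite_secrecy_conv} is the converse counterpart of Theorem~\ref{th:finite_wiretap}, obtained through exactly the reduction that produced the achievability bound in Theorem~\ref{th:finite_secret_ach}. First I would recall the artificial discrete memoryless wiretap channel from that proof: input $U$ (with its distribution fixed by the encoder), legitimate output $(U\oplus X,Y)$, eavesdropper output $(U\oplus X,Z)$, where $(X,Y,Z)\sim P_{XYZ}$ is drawn independently of $U$. A one-way secret sharing protocol of the masking form — Alice encodes the key into $U^n$ and sends $U^n\oplus X^n$ over the public channel, Bob sees $(U^n\oplus X^n,Y^n)$, Eve sees $(U^n\oplus X^n,Z^n)$ — is literally an $(n,|\cK|,\varepsilon,\delta)_{avg}$ secrecy code for this channel: the key is the message, $\Pr(K_A\neq K_B)<\varepsilon$ is the average error condition, and the average secrecy condition~\eqref{eq:avg_security_secret_sharing} is precisely condition~\eqref{eq:avg_security} of Definition~\ref{def:secrecy_coding_strategy}. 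I would then simply invoke the converse bound of Theorem~\ref{th:finite_wiretap} for $R^*_{avg}$ applied to this channel, which yields $\widetilde R^*_{avg}(n,\varepsilon,\delta)\le C_S^{\mathrm{art}}-\sqrt{V_c^{\mathrm{art}}/n}\,Q^{-1}(\varepsilon+\delta)+\mathcal O(\log(n)/n)$, with $C_S^{\mathrm{art}}$ and $V_c^{\mathrm{art}}$ the secrecy capacity and the dispersion $V_c$ of the artificial channel; the remaining work is to evaluate these two quantities.

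For the capacity I would argue that, under $X\fooA Y\fooA Z$, the artificial channel is degraded (one also has $U\oplus X\fooA Y\fooA Z$, since $U$ is independent of $(X,Y,Z)$), so by Theorem~\ref{th:cap_dwtc} its secrecy capacity equals $\max_{P_U}\big(I(U;U\oplus X,Y)-I(U;U\oplus X,Z)\big)$. Using that a shift by a uniform element preserves entropies, and that for uniform $U$ the variable $U\oplus X$ is independent of $(Y,Z)$, the bracket collapses to $I(X;Y)-I(X;Z)$ at the uniform input, while the data-processing inequality for $U\oplus X\fooA Y\fooA Z$ rules out a larger value for any other input distribution; therefore $C_S^{\mathrm{art}}=I(X;Y)-I(X;Z)=\widetilde C_S$ by Theorem~\ref{th:secret_key_capacity}. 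For the dispersion I would substitute $A=U$, $B=(U\oplus X,Y)$, $E=(U\oplus X,Z)$ into the formula for $V_c$ in Theorem~\ref{th:finite_wiretap}. Since the first coordinates of $B$ and $E$ always agree, the relevant laws are $P_{BE|A}\big((\tilde x,y),(\tilde x,z)\mid u\big)=P_X(\tilde x-u)P_{YZ|X}(y,z\mid\tilde x-u)$, $P_{E|A}\big((\tilde x,z)\mid u\big)=P_X(\tilde x-u)P_{Z|X}(z\mid\tilde x-u)$, and, by independence of $U\oplus X$ from $(Y,Z)$, $P_{B|E}\big((\tilde x,y)\mid(\tilde x,z)\big)=P_{Y|Z}(y\mid z)$. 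The factor $P_X(\tilde x-u)$ cancels in the log-ratio, the substitution $x=\tilde x-u$ makes the inner sums range over $x\in\cX$ independently of $u$, and since $\sum_u P_U(u)=1$ what remains is exactly the displayed $V_c$, after rewriting the log argument as $P_{XYZ}(x,y,z)/\big(P_{XZ}(x,z)P_{Y|Z}(y\mid z)\big)$ and performing the analogous rewriting in the divergence term. Plugging $C_S^{\mathrm{art}}=\widetilde C_S$ and $V_c^{\mathrm{art}}=V_c$ into the bound above finishes the argument.

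The hard part will not be the algebra but making the reduction watertight at the level of definitions: one has to check that every one-way protocol of the masking form indeed yields a bona fide $(n,|\cK|,\varepsilon,\delta)_{avg}$ wiretap secrecy code (so that Theorem~\ref{th:finite_wiretap}, an upper bound on $R^*_{avg}$ over all codes, transfers), that the key can be identified with the wiretap message up to a near-uniformity gap absorbed into the $\mathcal O(\log(n)/n)$ term, and that the artificial channel is genuinely discrete and memoryless — the latter holding because $X,Y,Z$ take values in the finite alphabets $\cX,\cY,\cZ$. A secondary point needing a line of care is that the error term of Theorem~\ref{th:finite_wiretap} stays $\mathcal O(\log(n)/n)$ for this channel, i.e.\ that $V_c$ is finite and the channel non-degenerate, which holds in the regime of interest. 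Finally I would note, in parallel with the situation in Theorem~\ref{th:finite_wiretap}, that this converse and the achievability bound of Theorem~\ref{th:finite_secret_ach} do not coincide — the achievability penalty is $\sqrt{V_1/n}\,Q^{-1}(\varepsilon)+\sqrt{V_2/n}\,Q^{-1}(\delta)$ whereas the converse penalty is $\sqrt{V_c/n}\,Q^{-1}(\varepsilon+\delta)$ — so a matching characterisation of $\widetilde R^*_{avg}(n,\varepsilon,\delta)$ for one-way communication remains open.
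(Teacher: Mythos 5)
Your proposal is correct and follows essentially the same route as the paper: reduce the one-way masking protocol to a secrecy code for the artificial wiretap channel with input $U$ and outputs $(U\oplus X,Y)$, $(U\oplus X,Z)$, invoke the converse of Theorem~\ref{th:finite_wiretap}, and collapse the dispersion formula using the uniformity and independence of $U\oplus X$ so that the outer sum over $u$ disappears. You are in fact slightly more careful than the paper in explicitly identifying the artificial channel's secrecy capacity with $\widetilde{C}_S$ (a step the paper leaves implicit), though note that your degradedness argument uses $X\fooA Y\fooA Z$, whereas the paper intends the bound to apply without that Markov assumption as well.
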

\begin{proof}
	As mentioned in the theorem, our goal is again to use a wiretap code for the channel with uniformly sampled input $U$ and with outputs $(U+X,Y)$ and $(U+X,Z)$. Hence, an upper bound on the secrecy rate for a wiretap code of this channel is also an upper bound on the achievable secret key rate of the communication protocol for the secret sharing problem.
	
	Using the upper bound for wiretap codes in Theorem~\ref{th:finite_wiretap} it holds that
	\begin{align}\label{eq:conv_finite_help}
		V_c = \sum_{u\in \cX} P_U(u) \Bigg(&\sum_{\substack{\tilde{x}\in \cX \\ y\in \cY \\ z\in \cZ}} P_{U\oplus X,Y,Z|U}(\tilde{x},y,z|u) \log_2^2\left(\frac{P_{U\oplus X,Y,Z|U}(\tilde{x},y,z|u)}{P_{U\oplus X,Z|U}(\tilde{x},z|u) P_{U\oplus X,Y|U\oplus X,Z}(\tilde{x},y|\tilde{x},z)}\right) \nonumber \\
		&-D(P_{U\oplus X,Y,Z|U=u}||P_{U\oplus X,Y|U\oplus X,Z}P_{U\oplus X,Z|U=u})^2\Bigg) \enspace .
	\end{align}
	To simplify Equation~\eqref{eq:conv_finite_help} we observe
	\begin{align}
		P_{U\oplus X,Y,Z|U}(\tilde{x},y,z|u) &= P_{U\oplus X|U}(\tilde{x}|u)P_{Y|U\oplus X,U}(y|\tilde{x},u) P_{Z|Y,U\oplus X,U}(z|y,\tilde{x},u) \nonumber \\
		&= P_X(\tilde{x}-u)P_{Y|X}(y|\tilde{x}-u) P_{Z|Y,X}(z|y,\tilde{x}-u) \enspace ,
	\end{align}
	\begin{equation}
		P_{U\oplus X,Z|U}(\tilde{x},z|u) = P_{U\oplus X|U}(\tilde{x}|u) P_{Z|U\oplus X,U}(z|\tilde{x},u) = P_X(\tilde{x}-u) P_{Z|X}(z|\tilde{x}-u)
	\end{equation}
	and
	\begin{equation}
		P_{U\oplus X,Y|U\oplus X,Z}(\tilde{x},y|\tilde{x},z) = P_{Y|Z}(y|z) \enspace .
	\end{equation}
	The basic idea to show all of these simplifications is to use the facts that U is uniformly distributed over $\cX$ and independent of $X,Y,Z$ and that $U\oplus X$ is independent of $Y$ and $Z$ because $X$ is encrypted by a one-time pad using the uniformly distributed $U$ as a key.
	
	As in the proof of Theorem~\ref{th:finite_secret_ach} the inner sum in Equation~\eqref{eq:conv_finite_help} goes over the entire alphabet sets $\cX,\cY$ and $\cZ$. Hence, the sums do not depend on the specific value of $U$ since it only leads to an index shift within the sum. The same argument holds for the divergence term.
\end{proof}

\begin{corollary}\label{cor:tightness_secret_sharing}
	Let us consider a source of common randomness following a joint pmf $P_{XYZ}$ for which it holds that $X\fooA Y\fooA Z$. Then it holds that the upper bound on the maximal secrecy rate for wiretap codes in Theorem~\ref{th:finite_wiretap} is not tight for the degraded wiretap channel formed by the uniformly sampled input $U$ (over $\cX$), legitimate user output $(U\oplus X,Y)$ and eavesdropper output $(U\oplus X,Z)$.
\end{corollary}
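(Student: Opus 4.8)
The plan is to pit the converse of Theorem~\ref{th:finite_secrecy_conv} --- which is exactly the upper bound of Theorem~\ref{th:finite_wiretap} evaluated on the channel with (uniform) input $U\in\cX$, legitimate output $(U\oplus X,Y)$ and eavesdropper output $(U\oplus X,Z)$ --- against the \emph{exact} finite-length secret-key rate of Theorem~\ref{th:hayashi_secret_key}, and to exhibit a strictly positive gap between their second-order terms. First I would set up the reduction: by the construction in the proof of Theorem~\ref{th:finite_secret_ach}, every $(n,|\cM|,\varepsilon,\delta)$ wiretap code for that channel is a one-way secret-sharing protocol for $P_{XYZ}$ with the same parameters, and one-way protocols form a subclass of the protocols of Section~\ref{subsec:secret_sharing}; hence the maximal average-secrecy rate $R^*_{avg}(n,\varepsilon,\delta)$ of the channel is at most the maximal secret-key rate $\widetilde{R}^*_{avg}(n,\varepsilon,\delta)$ taken over all (in particular two-way) protocols. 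Since $X\fooA Y\fooA Z$, Theorem~\ref{th:hayashi_secret_key} evaluates the latter as $\widetilde{C}_S-\sqrt{V_c'/n}\,Q^{-1}(\varepsilon+\delta)+\mathcal{O}(\log n/n)$, whereas the bound whose tightness is in question reads $R^*_{avg}(n,\varepsilon,\delta)\le\widetilde{C}_S-\sqrt{V_c/n}\,Q^{-1}(\varepsilon+\delta)+\mathcal{O}(\log n/n)$ (Theorem~\ref{th:finite_secrecy_conv}), with the \emph{same} first-order term $\widetilde{C}_S=I(X;Y)-I(X;Z)$.

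Next I would compare the dispersions $V_c$ and $V_c'$. Writing $\iota(x,y,z):=\log_2\frac{P_{Y|XZ}(y|x,z)}{P_{Y|Z}(y|z)}$, one checks the identity $\frac{P_{XYZ}(x,y,z)}{P_{XZ}(x,z)P_{Y|Z}(y|z)}=\frac{P_{YZ|X}(y,z|x)}{P_{Z|X}(z|x)P_{Y|Z}(y|z)}=2^{\iota(x,y,z)}$, so that $V_c=\Var_{P_{XYZ}}(\iota)$ while $V_c'=\mathbb{E}_X\!\left[\Var(\iota\mid X)\right]$. The law of total variance then gives $V_c=V_c'+\Var_X\!\left(\mathbb{E}[\iota\mid X]\right)\ge V_c'$, and the excess term equals $\Var_X\!\left(D(P_{YZ|X=\cdot}\,\|\,P_{Z|X=\cdot}P_{Y|Z})\right)$, which is strictly positive unless this conditional divergence is the same for $P_X$-almost every $x$ --- a degenerate situation. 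I would therefore fix a source with $X\fooA Y\fooA Z$ for which $V_c>V_c'$ (e.g.\ one whose channel $P_{Y|X}$ is not symmetric; this is the generic case).

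Finally I would choose $\varepsilon,\delta$ with $\tfrac12<\varepsilon+\delta<1$, so that $Q^{-1}(\varepsilon+\delta)<0$ and hence $-\sqrt{V_c/n}\,Q^{-1}(\varepsilon+\delta)=\sqrt{V_c/n}\,\lvert Q^{-1}(\varepsilon+\delta)\rvert>\sqrt{V_c'/n}\,\lvert Q^{-1}(\varepsilon+\delta)\rvert$: in this regime a larger dispersion makes the converse bound larger, while the true rate uses the smaller dispersion $V_c'$. Thus the upper bound of Theorem~\ref{th:finite_wiretap}/Theorem~\ref{th:finite_secrecy_conv} exceeds $\widetilde{R}^*_{avg}(n,\varepsilon,\delta)\ge R^*_{avg}(n,\varepsilon,\delta)$ by at least $(\sqrt{V_c}-\sqrt{V_c'})\,\lvert Q^{-1}(\varepsilon+\delta)\rvert/\sqrt{n}$, a $\Theta(1/\sqrt{n})$ quantity that dominates the $\mathcal{O}(\log n/n)$ residuals of both statements; so for all sufficiently large $n$ the true maximal secrecy rate of the channel lies strictly below the claimed upper bound, i.e.\ that bound is not tight. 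The step I expect to be most delicate is exactly this last comparison: one has to keep the (distinct) $\mathcal{O}(\log n/n)$ error terms of the two theorems honest (they must genuinely be $o(1/\sqrt{n})$) and be sure the first-order constant is literally the same $\widetilde{C}_S$ in both bounds, so that the whole surviving gap is attributable to the dispersion mismatch $V_c>V_c'$; relatedly, the argument only delivers looseness when $V_c\ne V_c'$, so the corollary is properly read as a statement about such (generic) Markov sources.
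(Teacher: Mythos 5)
Your proposal follows the same overall strategy as the paper's proof: reduce any wiretap code for the constructed channel to a one-way secret-sharing protocol, so that the one-way converse of Theorem~\ref{th:finite_secrecy_conv} can be played off against the exact two-way rate of Theorem~\ref{th:hayashi_secret_key}, and then locate the entire gap in the dispersion terms $V_c$ versus $V_c'$. The difference lies in the key comparison step, and there your version is the correct one. Writing $D_x := D(P_{YZ|X=x}\Vert P_{Y|Z}P_{Z|X=x})$, the paper argues that strict convexity of $f(x)=x^2$ gives $D(P_{XYZ}\Vert P_{Y|Z}P_{XZ})^2 = \bigl(\mathbb{E}_X[D_x]\bigr)^2 > \mathbb{E}_X[D_x^2]$; but Jensen's inequality runs the other way, $\bigl(\mathbb{E}_X[D_x]\bigr)^2 \le \mathbb{E}_X[D_x^2]$, with equality iff $D_x$ is $P_X$-a.s.\ constant. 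Your law-of-total-variance identity $V_c = V_c' + \Var_X\bigl(\mathbb{E}[\iota\mid X]\bigr)$ is the clean way to see this and yields $V_c \ge V_c'$ --- the opposite of what the paper's proof needs as written. Consequently your extra step of choosing $\varepsilon+\delta>1/2$, so that $Q^{-1}(\varepsilon+\delta)<0$ and the larger dispersion makes the converse bound larger, is not a cosmetic refinement but essential: in the regime $\varepsilon+\delta<1/2$ that the paper implicitly works in, the corrected inequality makes the wiretap converse bound \emph{smaller} than Hayashi's two-way rate, and the comparison then says nothing about tightness. Your two further caveats (non-tightness is only established for sources with $\Var_X(D_x)>0$, and the $\Theta(1/\sqrt{n})$ gap must dominate the two $\mathcal{O}(\log n/n)$ residuals) are likewise needed for a complete argument; the paper's strictness claim silently relies on the same non-degeneracy. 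In short, your proof has the same skeleton as the paper's but repairs it, by reversing the direction of the convexity inequality and adding the parameter restriction that this reversal forces.
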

\begin{proof}
	As already emphasized in Remark~\ref{rem:secret_sharing} the communication protocol over the public channel achieving the secret key rate in Theorem~\ref{th:hayashi_secret_key} requires two-way communication over the public channel while the construction of the upper bound in Theorem~\ref{th:finite_secrecy_conv} only requires one-way communication. Hence, in order to prove the corollary, it suffices to show that the upper bound in Theorem~\ref{th:finite_secrecy_conv} exceeds the maximal secret key rate in Theorem~\ref{th:hayashi_secret_key}. We will show in the following that this indeed holds. 
	
	The difference between the rates in Theorems~\ref{th:finite_secrecy_conv} and \ref{th:hayashi_secret_key} lies in the channel dispersion terms $V_c$ and $V_c'$, respectively.
	
	Notice that the sums involving the $\log_2^2$ terms are equivalent. Hence, we focus on the divergence terms in both equations.
	We have
	\begin{align}\label{eq:corollary_proof_1}
		D(P_{XYZ}||P_{Y|Z}P_{XZ})^2 &= \left(\sum_{\substack{x\in \cX \\ y\in \cY \\ z\in \cZ}} P_{XYZ}(x,y,z) \log_2\left(\frac{P_{XYZ}(x,y,z)}{P_{Y|Z}(y|z)P_{XZ}(x,z)}\right)\right)^2 \nonumber \\
		&=\left( \sum_{x\in \cX} P_X(x) \sum_{\substack{y\in \cY \\ z\in \cZ}} P_{YZ|X}(y,z|x) \log_2\left(\frac{P_{YZ|X}(y,z|x)}{P_{Y|Z}(y|z)P_{Z|X}(z|x)}\right) \right)^2
	\end{align}
	and
	\begin{align}\label{eq:corollary_proof_2}
		&\sum_{x\in \cX} P_X(x) D(P_{YZ|X=x}||P_{Y|Z}P_{Z|X=x})^2 \nonumber \\ = &\sum_{x\in \cX} P_{X}(x) \left(\sum_{\substack{y\in \cY \\ z\in \cZ}} P_{YZ|X}(y,z|x) \log_2(\frac{P_{YZ|X}(y,z|x)}{P_{Y|Z}(y|z)P_{Z|X}(z|x)})\right)^2 \enspace .
	\end{align}
	
	Because $f(x)=x^2$ is a strictly convex function we have that
	\begin{equation}
		D(P_{XYZ}||P_{Y|Z}P_{XZ})^2 > \sum_{x\in \cX} P_{X}(x) D(P_{YZ|X=x}||P_{Y|Z}P_{Z|X=x})^2
	\end{equation}
	and hence the statement follows.
\end{proof}
\begin{remark} \label{rem:tightness_upper_hayashi}
	Corollary~\ref{cor:tightness_secret_sharing} shows for sources of common randomness $P_{XYZ}$ for which $X \fooA Y \fooA Z$ that the upper bound obtained in Theorem~\ref{th:finite_secrecy_conv} is strictly larger than the rate obtained in Theorem~\ref{th:hayashi_secret_key}. The secret key rate of Theorem~\ref{th:hayashi_secret_key} also provides an upper bound on the secret key rate for protocols with one-way communication. Notice though that Theorem~\ref{th:hayashi_secret_key} requires the random variables $X,Y,Z$ to be connected via a Markov chain while for Theorem~\ref{th:finite_secrecy_conv} can also be applied if this is not the case.
\end{remark}

Equipped with the knowledge of this section we next tackle the problem of securing the HDA against the attack scenarios mentioned in Section~\ref{subsec:attacker_model} for finite lengths.

\section{Achievability and Converse Bounds for Finite Lengths}\label{sec:finite}
The results presented in the previous section are for the asymptotic setting in the sense that they hold if the number of capacitive cells in the PUF goes to infinity and the fraction of the erasures for the attacker are determined by the erasure channel parameters $p_d$ and $p_a$ for the digital and analog attacker model. Of course, as a first order approximation one could compute the key capacity $C^q_{key}$ and estimate the dimension of the secrecy code for the resulting (degraded) wiretap channel by $C^q_{key} n$. As for the channel models discussed in Section~\ref{sec:finite_blocklength} to estimate the code dimension in that way is highly inaccurate if the blocklength is of small to moderate size. In general the achievable code dimension is of significantly smaller size. We note that the erasure channel assumption is only an approximation of the reality since for our attacker model the number of erasures is assumed to be constant and equal to $p_d n$ or $p_a n$ for the digital and the analog attacker model, respectively. Nevertheless, we consider the results obtaining from this to be rather accurate because these values are equal to the expectation for the erasure channels. Notice here that erasures in the digital attacker model imply erasures in the analog attacker model. The erasures for the analog attacker are then added on top.

Using Theorem~\ref{th:finite_secret_ach} to estimate the achievable code rate leads to a much more precise estimate for small to moderate blocklengths. Similarly, applying Theorem~\ref{th:finite_secrecy_conv} is much more precise to estimate a converse result on the secret key rate for the HDA. However, using Theorem~\ref{th:hayashi_secret_key} enables us to find an even tighter converse in case we only need to protect the design against the digital attacker.

The following observation is helpful and used implicitly throughout this section.
Empirically, we found that $I(S;\widetilde{S}|W) \approx I(S;\widetilde{S})$. Basically we observed no difference between these quantities even though we were not able to formally prove this statement. Notice that this does not mean that the helper data has not been utilized on the right hand side of the equation. Rather it means that the channel matrix on the right hand side is constructed by averaging over all possible values of $W$ rather than constructing the channel matrix for each realization $w$, computing $I(S;\widetilde{S}|W=w)$ and then averaging with respect to $f_W(w)$.
This effect is desirable as it shows that it is sufficient to construct a single codebook independent of the helper data $W$, rather than constructing different codebooks for different values of $W$. This behaviour makes perfect sense considering that $\widetilde{S}$ is an approximation of $S$ and $W$ is independent of $S$ due to the zero leakage condition (see Section~\ref{subsec:zero_leakage_hds}).

The potential inaccuracy by assuming $I(S;\widetilde{S}) = I(S;\widetilde{S}|W)$ only induces a rate penalty for the scheme. In terms of security this is not a problem as $W$ is still perfectly utilized and the codebook is chosen by the legitimate users. Furthermore, it makes the application of Theorem~\ref{th:finite_wiretap} significantly easier.

In the following, we are investigating which finite rates can be achieved by HDAs with different parameters for both attacker models. We are in particular interested in this rate as the code dimension in the HDA has to be at least as large as the targeted security level. A lower bound on the code rate (which also depends on the blocklength for finite $n$) leads to an upper bound on the required amount of capacitive cells while a lower bound can be used to prove impossibility results, i.e. it enables to show that a certain amount of cells is absolutely required for target values in terms of reliability and security of the PUF.

As in Section~\ref{sec:securing_foil} we first consider the HDA scheme only covering the digital attacker.

\subsection{Digital Attacker}
\begin{theorem}\label{th:finite_hda_ach}
	The maximal achievable rate $\widetilde{R}^{key,dig}_{max}(n,\varepsilon,\delta)$ for the HDA achieving a maximum secrecy level $\delta$ against the digital attacker described in Section~\ref{subsec:attacker_model} with error probability $\varepsilon$ during regular device operation is lower bounded by
	\begin{equation}\label{eq:finite_hda_ach}
		\widetilde{R}^{key,dig}_{max}(n,\varepsilon,\delta) \geq R^{q}_{asymp,dig} - \sqrt{\frac{V_1}{n}} Q^{-1}(\varepsilon) - \sqrt{\frac{V_2}{n}} Q^{-1}(\delta) + \mathcal{O}\left(\frac{\log(n)}{n}\right) \enspace ,
	\end{equation} 
	where
	\begin{equation}
		R^{q}_{asymp,dig} \defeq  I(S;\widetilde{S})p_d \enspace ,
	\end{equation}
	\begin{equation}
		V_1 = \sum_{\substack{s \in \cS \\ \tilde{s} \in \cS}} P_{S,\widetilde{S}}(s,\tilde{s}) \log_2^2\left(\frac{P_{S,\widetilde{S}}(s,\tilde{s})}{\frac{1}{|\cS|} P_{\widetilde{S}}(\tilde{s})}\right) - D(P_{S,\widetilde{S}}||P_\cS^{unif} P_{\widetilde{S}})^2
	\end{equation}
	and
	\begin{align}
		V_2 = &\sum_{\substack{s\in \cS \\ \tilde{s}\in \cS}} P_{S,\widetilde{S}}(s,\tilde{s}) (1-p_d) \log_2^2\left(\frac{P_{S,\widetilde{S}}(s,\tilde{s})}{\frac{1}{|\cS|}P_{\widetilde{S}}(\tilde{s})}\right) + \sum_{s\in \cS} P_S(s) p_d \log_2^2\left(\frac{P_S(s)}{\frac{1}{|\cS|}}\right) \nonumber \\
		&- \left[(1-p_d) D\left(P_{S,\widetilde{S}}||P_{\cS}^{unif}P_{\widetilde{S}}\right) + p_d D\left(P_S||P_{\cS}^{unif}\right)\right]^2 \enspace .
	\end{align}
\end{theorem}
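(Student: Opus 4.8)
The plan is to obtain \eqref{eq:finite_hda_ach} as a direct specialization of Theorem~\ref{th:finite_secret_ach} once the right source of common randomness is identified. First I would recall from Section~\ref{sec:securing_foil} that the HDA facing only the digital attacker realizes a secret sharing scheme from the source $P_{S,\widetilde{S},\widetilde{S}_d}$, where Alice holds $S^n$, Bob holds $\widetilde{S}^n$, and Eve holds $\widetilde{S}_d^n$, the latter being the output of an erasure channel $EC(p_d)$ fed with $\widetilde{S}$; in particular $S\fooA\widetilde{S}\fooA\widetilde{S}_d$ is a Markov chain. Since the construction behind Theorem~\ref{th:finite_secret_ach} needs only one transmission over the public channel (here the published helper data), it applies to the PUF, unlike the two-way protocol underlying Theorem~\ref{th:hayashi_secret_key}. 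Using the empirical identity $I(S;\widetilde{S}|W)=I(S;\widetilde{S})$ stated at the start of this section — which also justifies employing a single, $W$-independent codebook — Theorem~\ref{th:asymptotic_digital} gives that the asymptotic term equals $\widetilde{C}_S=I(S;\widetilde{S}|W)\,p_d=I(S;\widetilde{S})\,p_d=R^{q}_{asymp,dig}$.

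Next I would instantiate the dispersion terms $V_1$ and $V_2$ from \eqref{eq:v_1_sec_key_rate} and \eqref{eq:v_2_sec_key_rate} with $X\equiv S$, $Y\equiv\widetilde{S}$, $Z\equiv\widetilde{S}_d$, $\cX\equiv\cS$. The term $V_1$ depends only on the legitimate pair $(S,\widetilde{S})$, so it follows by straightforward substitution and already has the stated form. For $V_2$ the one point requiring care is that $Z=\widetilde{S}_d$ takes values in $\cS\cup\{E\}$, so the sum over $z$ must be split. On $z=\tilde{s}\in\cS$ we have $P_{S,\widetilde{S}_d}(s,\tilde{s})=(1-p_d)\,P_{S,\widetilde{S}}(s,\tilde{s})$ and $P_{\widetilde{S}_d}(\tilde{s})=(1-p_d)\,P_{\widetilde{S}}(\tilde{s})$, so the common factor $1-p_d$ cancels inside the $\log_2^2(\cdot)$ and survives only as a weight, producing the first sum in $V_2$; on $z=E$ we have $P_{S,\widetilde{S}_d}(s,E)=p_d\,P_S(s)$ and $P_{\widetilde{S}_d}(E)=p_d$, producing the second sum with argument $P_S(s)/\tfrac{1}{|\cS|}$.

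The relative-entropy term $D(P_{S,\widetilde{S}_d}\|P_\cS^{unif}P_{\widetilde{S}_d})$ splits in exactly the same way, and by linearity of the sum it factors as $(1-p_d)\,D(P_{S,\widetilde{S}}\|P_\cS^{unif}P_{\widetilde{S}})+p_d\,D(P_S\|P_\cS^{unif})$; squaring this yields the bracketed expression in the claimed $V_2$. Collecting the three contributions, together with the $\mathcal{O}(\log(n)/n)$ remainder carried over unchanged from Theorem~\ref{th:finite_secret_ach}, gives exactly \eqref{eq:finite_hda_ach}.

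I do not anticipate a serious obstacle: the argument is essentially bookkeeping layered on top of Theorem~\ref{th:finite_secret_ach}, the only delicate points being the explicit treatment of the erasure symbol $E$ in the expansion of $V_2$ and the reliance on the (empirically observed, not proven) equality $I(S;\widetilde{S}|W)=I(S;\widetilde{S})$, so that the bound is, strictly, a guarantee for the concrete $W$-independent construction rather than for $\widetilde{R}^{key,dig}_{max}$ in full generality — a caveat already flagged in the discussion preceding the theorem. It would also be worth verifying once more that the reliability/secrecy notions of Theorem~\ref{th:finite_secret_ach} (average error $\varepsilon$, maximum secrecy $\delta$) coincide with what Definition~\ref{def:secret_key_rate} requires here, which holds because the secrecy condition of the underlying wiretap code implies the secret-sharing secrecy condition.
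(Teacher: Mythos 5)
Your proposal is correct and follows essentially the same route as the paper: both apply Theorem~\ref{th:finite_secret_ach} with the identification $S\equiv X$, $\widetilde{S}\equiv Y$, $\widetilde{S}_d\equiv Z$, obtain $V_1$ by direct substitution, and compute $V_2$ by splitting the sum over the erasure alphabet using $P_{\widetilde{S}_d|S}(E|s)=p_d$ and $P_{\widetilde{S}_d|S}(\tilde{s}|s)=(1-p_d)P_{\widetilde{S}|S}(\tilde{s}|s)$, with the divergence factoring linearly into the bracketed term. Your explicit bookkeeping of the cancellation of $(1-p_d)$ inside the logarithm and your caveats about the empirical identity $I(S;\widetilde{S}|W)=I(S;\widetilde{S})$ and the secrecy definitions are consistent with (and somewhat more detailed than) the paper's own argument.
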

\begin{proof}
	The statement follows from Theorem~\ref{th:finite_secret_ach} by setting $S\equiv X$, $\widetilde{S} \equiv Y$ and $\widetilde{S}_d \equiv Z$. $V_1$ is directly obtained by plugging in the respective random variables. For the computation of $V_2$ we note that the probability that $P_{\widetilde{S}_d|S}(E|s)=p_d$ irrespective of $s$. In case no erasure occurs it holds that $P_{\widetilde{S}_d|S}(\tilde{s}|s) = P_{\widetilde{S}|S}(\tilde{s}|s) (1-p_d)$.
	The rest follows easily from the definition of $V_2$ in Theorem~\ref{th:finite_secret_ach}.
\end{proof}

\begin{theorem}\label{th:upper_bound_finite_hda}
	The secret key rate for a source of common randomness with distribution  $P_{S,\widetilde{S},\widetilde{S}_d}$ is
	\begin{equation}
		\widetilde{R}^*_{avg}(n,\varepsilon,\delta) = C_S - \sqrt{\frac{V_c'}{n}} Q^{-1}(\varepsilon + \delta) + \mathcal{O}\left( \frac{\log(n)}{n} \right) \enspace ,
	\end{equation}
	where
	\begin{equation}\label{eq:v_c'_digital}
		V_c' = \sum_{s\in \cS} P_S(s) \sum_{\tilde{s} \in \cS} P_{\widetilde{S}|S}(\tilde{s}|s) p_d \log_2^2\left(\frac{P_{\widetilde{S}|S}(\tilde{s}|s)}{P_{\widetilde{S}}(\tilde{s})}\right) - \sum_{s\in \cS} P_S(s) p_d^2 D\left(P_{\widetilde{S}|S=s}||P_{\widetilde{S}}\right)^2 \enspace .
	\end{equation}
	Furthermore, it holds that
	\begin{equation}
		\widetilde{R}^*_{avg}(n,\varepsilon,\delta) \geq \widetilde{R}^{key}_{avg}(n,\varepsilon,\delta) \enspace ,
	\end{equation}
	where $\widetilde{R}^{q}_{avg,dig}(n,\varepsilon,\delta)$ denotes the maximal achievable rate for the HDA achieving an average security level $\delta$ against the digital attacker 
\end{theorem}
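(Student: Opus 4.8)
The plan is to derive this as a direct specialization of Hayashi's finite-blocklength secret-key theorem (Theorem~\ref{th:hayashi_secret_key}) to the source $P_{S,\widetilde{S},\widetilde{S}_d}$, together with an inclusion-of-protocols argument for the final inequality. First I would observe that the digital attacker's observation $\widetilde{S}_d$ is obtained by passing the legitimate reconstruction $\widetilde{S}$ through an erasure channel $EC(p_d)$ whose erasure pattern is independent of the transmitted symbol. Hence the triple forms a Markov chain $S \fooA \widetilde{S} \fooA \widetilde{S}_d$, and Theorem~\ref{th:hayashi_secret_key} applies verbatim under the identification $X \equiv S$, $Y \equiv \widetilde{S}$, $Z \equiv \widetilde{S}_d$. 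This immediately gives the stated expansion, with $C_S = I(S;\widetilde{S}) - I(S;\widetilde{S}_d) = I(S;\widetilde{S})\,p_d$ (the last step because $\widetilde{S}_d$ is $\widetilde{S}$ erased independently with probability $p_d$), and with $V_c'$ equal to the dispersion term of that theorem evaluated on this source.

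The bulk of the work is then to collapse that dispersion term to \eqref{eq:v_c'_digital} using the erasure structure. The relevant facts are: $P_{\widetilde{S}_d|\widetilde{S}}(z|\tilde{s}) = p_d$ for $z=E$ and $=1-p_d$ for $z=\tilde{s}$; hence, via the Markov chain, $P_{\widetilde{S},\widetilde{S}_d|S}(\tilde{s},E|s) = p_d\,P_{\widetilde{S}|S}(\tilde{s}|s)$ and $P_{\widetilde{S}_d|S}(E|s) = p_d$; moreover $P_{\widetilde{S}|\widetilde{S}_d}(\tilde{s}|E) = P_{\widetilde{S}}(\tilde{s})$ because the erasure event carries no information about the value, whereas $P_{\widetilde{S}|\widetilde{S}_d}(\tilde{s}|z)$ equals $1$ if $\tilde{s}=z$ and $0$ otherwise for $z\neq E$. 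Substituting into the ratio $P_{YZ|X}(y,z|x) / (P_{Z|X}(z|x)P_{Y|Z}(y|z))$ shows that on the non-erased branch ($z\neq E$, necessarily $y=z$) this ratio is identically $1$, so both the $\log_2^2$ sum and the divergence $D(P_{YZ|X=x}\|P_{Y|Z}P_{Z|X=x})$ reduce to their $z=E$ contributions; the first becomes $\sum_{\tilde{s}} P_{\widetilde{S}|S}(\tilde{s}|s)\,p_d\,\log_2^2(P_{\widetilde{S}|S}(\tilde{s}|s)/P_{\widetilde{S}}(\tilde{s}))$, while the divergence becomes $p_d\,D(P_{\widetilde{S}|S=s}\|P_{\widetilde{S}})$, whose square contributes the $p_d^2$ factor. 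Averaging over $P_S(s)$ reproduces \eqref{eq:v_c'_digital} exactly, including the $p_d$ versus $p_d^2$ asymmetry between the two pieces. Note the helper data $W$ does not appear here: $W$ is independent of $S$ by zero leakage, and the empirical identity $I(S;\widetilde{S}|W)=I(S;\widetilde{S})$ lets us work with the $W$-averaged channel; alternatively, adjoining $W$ to every party's side information can only enlarge $\widetilde{R}^*_{avg}$, so the converse remains valid either way.

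For the inequality $\widetilde{R}^*_{avg}(n,\varepsilon,\delta) \geq \widetilde{R}^{key}_{avg}(n,\varepsilon,\delta)$ I would argue by inclusion of protocols. For a fixed input quantizer the HDA of Fig.~\ref{fig:keyGenOverview_analog} is one particular secret-key-agreement scheme for the source $P_{S,\widetilde{S},\widetilde{S}_d}$, using only one-way public communication (the published $W^n$ and $\widetilde{W}^n$), and, as set up in Section~\ref{subsec:connection_hda_secret_sharing}, its reliability and average-secrecy guarantees coincide with the secret-sharing definitions underlying $\widetilde{R}^*_{avg}$. Hence every $(n,|\cK|,\varepsilon,\delta)$ key attainable by the HDA is attainable by an admissible (in particular, two-way-capable) secret-sharing protocol, so the HDA rate cannot exceed the optimum $\widetilde{R}^*_{avg}(n,\varepsilon,\delta)$ computed in the first part.

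I expect the main obstacle to be the bookkeeping in the dispersion simplification — in particular making the "non-erased branch contributes nothing" step airtight, since it rests on $P_{Y|Z}(\cdot\,|z)$ being the point mass at $z$ so that the log-ratio is \emph{exactly} $\log_2 1 = 0$ rather than merely small, and on tracking carefully which conditional distribution is which so that the surviving $z=E$ terms give precisely \eqref{eq:v_c'_digital}. A secondary subtlety is justifying that $W$ may be suppressed from the source description without weakening the bound, for which the zero-leakage independence of $W$ and $S$ is the essential ingredient.
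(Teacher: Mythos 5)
Your proposal is correct and follows essentially the same route as the paper: apply Theorem~\ref{th:hayashi_secret_key} under the identification $S\equiv X$, $\widetilde{S}\equiv Y$, $\widetilde{S}_d\equiv Z$ (justified by the Markov chain $S \fooA \widetilde{S} \fooA \widetilde{S}_d$), observe that the non-erased terms vanish because the log-ratio is exactly $1$ there, and collect the $\widetilde{S}_d=E$ terms to obtain \eqref{eq:v_c'_digital}. Your write-up is in fact more explicit than the paper's on the branch-by-branch evaluation of the dispersion and on the inclusion-of-protocols argument for the final inequality, which the paper leaves implicit.
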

\begin{proof}
	The statement follows from Theorem~\ref{th:hayashi_secret_key} by setting $S\equiv X$, $\widetilde{S} \equiv Y$ and $\widetilde{S}_d \equiv Z$. Notice that $S,\widetilde{S},\widetilde{S}_d$ form the Markov chain $S \fooA \widetilde{S} \fooA \widetilde{S}_d$ which is necessary for Theorem~\ref{th:hayashi_secret_key} to be applicable.
	
	By the definition of $\widetilde{S}_d$ we have that $\widetilde{S}_d$ is either equal to $\widetilde{S}$ or the erasure event $E$. We have that
	\begin{align}
		P_{S,\widetilde{S},\widetilde{S}_d}(s,\tilde{s},\tilde{s}) &= P_S(s)P_{\widetilde{S}|S}(\tilde{s}|s)(1-p_d) \nonumber \\ P_{S,\widetilde{S},\widetilde{S}_d}(s,\tilde{s},E) &= P_S(s)P_{\widetilde{S}|S}(\tilde{s}|s) p_d.
	\end{align}
	
	Furthermore, it holds that $P_{\widetilde{S}|\widetilde{S}_d}(\tilde{s}|\tilde{s}) = 1$ and $P_{\widetilde{S}|\widetilde{S}_d}(\tilde{s}|E)=P_{\widetilde{S}}(\tilde{s})$.
	
	Using these statements and applying them to Theorem~\ref{th:hayashi_secret_key} we observe that all terms for which $\widetilde{S}_d\neq E$ go to zero.
	
	Equation~\eqref{eq:v_c'_digital} follows then by considering the remaining terms for which $\widetilde{S}_d=E$.
\end{proof}

By Remark~\ref{rem:tightness_upper_hayashi} we know that the upper bound given in Theorem~\ref{th:upper_bound_finite_hda} is tighter than applying Theorem~\ref{th:finite_secrecy_conv}.

Using Theorem~\ref{th:finite_hda_ach} and Theorem~\ref{th:upper_bound_finite_hda} we obtain the achievability and converse results presented in Tables~\ref{tab:finite_rates_0.1_digital}, \ref{tab:achievable_rates_0.18_digital}, \ref{tab:achievable_rates_0.1_e9} and \ref{tab:achievable_rates_0.2_e9}. When we speak of a $\lambda$ Bit security level we mean that the security parameter $\delta$ satisfies $\delta \leq 2^{-\lambda}$.

\begin{table}[H]
	\caption{Achievability (ach.) and converse (conv.) results on the number of necessary capacitive PUF cells for $p_d=0.1$, PUF reliability $\varepsilon=10^{-6}$ and security levels 128, 192 and 256 bit, digital attacker, equiprobable input quantization, $\sigma_P = 2241, \sigma_N = 129$}
	\centering
	\begin{tabular}{|c|c|c|c|c|c|c|}
		\hline
		quantizer & ach.128b & conv.128b & ach.192b & conv.192b & ach.256b & conv.256b\\
		\hline
		2 & 3645 & 1902 & 5499 & 2655 & 7354 & 3391\\
		4 & 2664 & 1117 & 4025 & 1516 & 5386 & 1902\\
		8 & 3178 & 850 & 4635 & 1128 & 6072 & 1398\\
		16 & 5502 & 779 & 7768 & 1019 & 9977 & 1250\\
		32 & 5390 & 744 & 7609 & 970 & 9773 & 1187\\
		64 & 5509 & 726 & 7768 & 943 & 9968 & 1152\\
		\hline
	\end{tabular}
	\label{tab:finite_rates_0.1_digital}
\end{table}
\begin{table}[H]
	\caption{Achievability (ach.) and converse (conv.) results on the number of necessary capacitive PUF cells for $p_d=0.18$, PUF reliability $\varepsilon=10^{-6}$ and security levels 128, 192 and 256 bit, digital attacker, equiprobable input quantization, $\sigma_P = 2241, \sigma_N = 129$}
	\centering
	\begin{tabular}{|c|c|c|c|c|c|c|}
		\hline
		quantizer & ach.128b & conv.128b & ach.192b & conv.192b & ach.256b & conv.256b\\
		\hline
		2 & 1938 & 1038 & 2923 & 1454 & 3909 & 1860\\
		4 & 1399 & 606 & 2113 & 825 & 2828 & 1038\\
		8 & 1508 & 459 & 2216 & 612 & 2916 & 760\\
		16 & 2194 & 420 & 3128 & 552 & 4042 & 680\\
		32 & 2150 & 401 & 3064 & 525 & 3959 & 644\\
		64 & 2179 & 390 & 3102 & 510 & 4004 & 625\\
		\hline
	\end{tabular}
	\label{tab:achievable_rates_0.18_digital}
\end{table}
\begin{table}[H]
\caption{Achievability (ach.) and converse (conv.) results on the number of necessary capacitive PUF cells for $p=0.1$, PUF reliability $\varepsilon=10^{-9}$ and security levels 128, 192 and 256 bit, digital attacker, equiprobable input quantization, $\sigma_P = 2241, \sigma_N = 129$}
	\centering
	\begin{tabular}{|c|c|c|c|c|c|c|}
		\hline
		quantizer & ach.128b & conv.128b & ach.192b & conv.192b & ach.256b & conv.256b\\
		\hline
		2 & 3645 & 2106 & 5499 & 2887 & 7354 & 3647\\
		4 & 2665 & 1286 & 4026 & 1703 & 5388 & 2106\\
		8 & 3345 & 1006 & 4834 & 1300 & 6299 & 1582\\
		16 & 6050 & 940 & 8414 & 1194 & 10705 & 1437\\
		32 & 5927 & 903 & 8243 & 1142 & 10488 & 1370\\
		64 & 6068 & 884 & 8427 & 1114 & 10712 & 1335\\
		\hline
	\end{tabular}
	\label{tab:achievable_rates_0.1_e9}
\end{table}

\begin{table}[H]
	\caption{Achievability (ach.) and converse (conv.) results on the number of necessary capacitive PUF cells for $p=0.18$, PUF reliability $\varepsilon=10^{-9}$ and security levels 128, 192 and 256 bit, digital attacker, equiprobable input quantization, $\sigma_P = 2241, \sigma_N = 129$}
	\centering
	\begin{tabular}{|c|c|c|c|c|c|c|}
		\hline
		quantizer & ach.128b & conv.128b & ach.192b & conv.192b & ach.256b & conv.256b\\
		\hline
		2 & 1938 & 1145 & 2923 & 1575 & 3909 & 1994\\
		4 & 1400 & 693 & 2114 & 923 & 2828 & 1145\\
		8 & 1571 & 539 & 2291 & 701 & 3002 & 856\\
		16 & 2382 & 504 & 3350 & 643 & 4293 & 777\\
		32 & 2334 & 483 & 3282 & 614 & 4205 & 740\\
		64 & 2370 & 472 & 3327 & 599 & 4259 & 720\\
		\hline
	\end{tabular}
	\label{tab:achievable_rates_0.2_e9}
\end{table}

The results show that different amounts of input quantization intervals can significantly change the achievable rates for the foil PUF. Notice that for the achievability bounds, increasing the amount of quantization intervals does not necessarily decrease the required number of PUF cells. 2 Bit quantization is optimal from that perspective, e.g. for the setting examined in Table~\ref{tab:finite_rates_0.1_digital} the results show that $2664$ PUF cells suffice for a security level of 128 Bit. The converse bounds decrease with an increasing the amount of quantization intervals though. From the tables we observe that an increased demand in PUF reliability only has a small influence on the required number of cells, whereas the erasure probability has a much larger impact. We also observe that higher security levels require more PUF cells as was to be expected.

\subsection{Analog Attacker}
In this section we basically perform a similar analysis for the analog attacker as we did for the digital attacker.

\begin{theorem}\label{th:ach_analog_finite}
	The maximal achievable rate $\widetilde{R}^{key,ana}_{max}(n,\varepsilon,\delta)$ for the HDA achieving a maximum secrecy level $\delta$ against the analog attacker described in Section~\ref{subsec:attacker_model} with error probability $\varepsilon$ during regular device operation is lower bounded by
	\begin{equation}\label{eq:finite_hda_ach_analog}
		\widetilde{R}^{key,ana}_{max}(n,\varepsilon,\delta) \geq R^{q,lower}_{asymp,ana} - \sqrt{\frac{V_1}{n}} Q^{-1}(\varepsilon) - \sqrt{\frac{V_2}{n}} Q^{-1}(\delta) + \mathcal{O}\left(\frac{\log(n)}{n}\right) \enspace ,
	\end{equation} 
	where
	\begin{equation}
		R^{q,lower}_{asymp,ana} \defeq I(S;\widetilde{S}|W) (1-p_a+p_d) - H(S)(1-p_a) \enspace ,
	\end{equation}
	\begin{equation}
		V_1 = \sum_{\substack{s \in \cS \\ \tilde{s} \in \cS}} P_{S,\widetilde{S}}(s,\tilde{s}) \log_2^2\left(\frac{P_{S,\widetilde{S}}(s,\tilde{s})}{\frac{1}{|\cS|} P_{\widetilde{S}}(\tilde{s})}\right) - D(P_{S,\widetilde{S}}||P_\cS^{unif} P_{\widetilde{S}})^2
	\end{equation}
	and
	\begin{align}
		V_2 = & p_d \sum_{s\in \cS} P_S(s) \log_2^2\left(\frac{P_S(s)}{\frac{1}{|\cS|}}\right) + (p_a-p_d) \sum_{\substack{s\in \cS \\ \tilde{s}\in \cS}} P_{S,\widetilde{S}}(s,\tilde{s}) \log_2^2\left(\frac{P_S(s)P_{\widetilde{S}|S}(\tilde{s}|s)}{\frac{1}{|\cS|}P_{\widetilde{S}}(\tilde{s})}\right) \nonumber \\
		&+ (1-p_a) \log_2^2(|\cS|) - \left[\log_2(|\cS|) + I(S;\widetilde{S})(p_a-p_d) - p_a H(S)\right]^2
	\end{align}
\end{theorem}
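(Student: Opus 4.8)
The plan is to obtain this bound from Theorem~\ref{th:finite_secret_ach} in exactly the way Theorem~\ref{th:finite_hda_ach} was obtained for the digital attacker, the only change being the eavesdropper's observation. First I would set up the source of common randomness by identifying $X\equiv S$, $Y\equiv\widetilde{S}$ and $Z\equiv(\widetilde{S}_d,\widetilde{S}_a)$, and then invoke the one-way achievability result of Theorem~\ref{th:finite_secret_ach} (which is itself built from the one-time-pad construction feeding into Theorem~\ref{th:finite_wiretap}); no Markov condition is needed for that step, so it applies verbatim. Since the dispersion term $V_1$ in Theorem~\ref{th:finite_secret_ach} involves only the legitimate channel $P_{XY}=P_{S,\widetilde{S}}$, it is literally the same expression as in the digital case and can be copied without further work. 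The asymptotic leading term is $\widetilde{C}_S=I(S;\widetilde{S})-I\big(S;(\widetilde{S}_d,\widetilde{S}_a)\big)$.

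The substantive step is to characterise $P_{S,(\widetilde{S}_d,\widetilde{S}_a)}$, and for this I would split each cell's outcome into the three exhaustive events already used implicitly in the proof of Theorem~\ref{th:asymptotic_analog}: (i)~with probability $p_d$ the cell sits under the drill hole, so $\widetilde{S}_d=E$ and $\widetilde{S}_a=E$; (ii)~with probability $p_a-p_d$ the cell survives the hole but is hit by the extra analog erasures, so $\widetilde{S}_d=\widetilde{S}$ takes a value and $\widetilde{S}_a=E$; and (iii)~with probability $1-p_a$ the cell is untouched, so $\widetilde{S}_d=\widetilde{S}$ and, because the analog attacker averages away the measurement noise over unlimited measurements, $\widetilde{S}_a=S$ is a perfect copy of the secret. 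Reading off the joint and marginal laws case by case gives $P_{S,Z}(s,(E,E))=P_S(s)p_d$, $P_Z((E,E))=p_d$; $P_{S,Z}(s,(\tilde{s},E))=P_{S,\widetilde{S}}(s,\tilde{s})(p_a-p_d)$, $P_Z((\tilde{s},E))=P_{\widetilde{S}}(\tilde{s})(p_a-p_d)$; and, because $\widetilde{S}_a=S$ forces $S=s$, $P_{S,Z}(s,(\tilde{s},s))=P_Z((\tilde{s},s))=P_{S,\widetilde{S}}(s,\tilde{s})(1-p_a)$.

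Substituting these into the $V_2$ formula of Theorem~\ref{th:finite_secret_ach} makes the mixing weights $p_d$, $p_a-p_d$, $1-p_a$ cancel inside each log-ratio, so the $\log_2^2$ part separates into the three stated summands: $p_d\sum_s P_S(s)\log_2^2(|\cS|P_S(s))$ from event~(i), $(p_a-p_d)\sum_{s,\tilde{s}}P_{S,\widetilde{S}}(s,\tilde{s})\log_2^2\big(\frac{|\cS|P_S(s)P_{\widetilde{S}|S}(\tilde{s}|s)}{P_{\widetilde{S}}(\tilde{s})}\big)$ from event~(ii), and $(1-p_a)\log_2^2(|\cS|)$ from event~(iii) (the ratio there degenerates to the constant $|\cS|$). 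For the divergence term I would use $D(P_{S,Z}\|P_{\cS}^{unif}P_Z)=\log_2|\cS|-H(S|\widetilde{S}_d,\widetilde{S}_a)$ together with $H(S|\widetilde{S}_d,\widetilde{S}_a)=p_dH(S)+(p_a-p_d)H(S|\widetilde{S})$ and $H(S|\widetilde{S})=H(S)-I(S;\widetilde{S})$, which collapses it to $\log_2|\cS|+I(S;\widetilde{S})(p_a-p_d)-p_aH(S)$, whose square is exactly the quantity subtracted in the claimed $V_2$. Applying the same weighted recombination of the three events to $I\big(S;(\widetilde{S}_d,\widetilde{S}_a)\big)=(1-p_a)H(S)+(p_a-p_d)I(S;\widetilde{S})$ turns the leading term into $I(S;\widetilde{S})(1-p_a+p_d)-H(S)(1-p_a)$, which coincides with $R^{q,lower}_{asymp,ana}$ under the approximation $I(S;\widetilde{S})\approx I(S;\widetilde{S}|W)$ adopted throughout Section~\ref{sec:finite}.

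I expect the only genuine obstacle to be the bookkeeping for event~(iii): one must notice that $\widetilde{S}_a=S$ makes the attacker's observation determine the secret, so $P_{S,Z}$ and $P_Z$ agree on that event and the corresponding log-ratio degenerates to the constant $|\cS|$; overlooking this would corrupt both the $(1-p_a)\log_2^2(|\cS|)$ summand and the $-p_aH(S)$ contribution to the divergence. Everything else is the same routine marginalisation already carried out for the digital attacker in Theorem~\ref{th:finite_hda_ach}.
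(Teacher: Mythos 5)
Your proposal is correct and follows essentially the same route as the paper: identify $X\equiv S$, $Y\equiv\widetilde{S}$, $Z\equiv(\widetilde{S}_d,\widetilde{S}_a)$, reuse $V_1$ from the digital case, and split the eavesdropper's observation into the three events of probability $p_d$, $p_a-p_d$ and $1-p_a$ with exactly the joint and marginal laws the paper writes down. Your explicit evaluation of the three $\log_2^2$ contributions and of the divergence term via $D=\log_2|\cS|-H(S\mid \widetilde{S}_d,\widetilde{S}_a)$ is just the "elementary algebraic steps" the paper leaves implicit, so there is nothing to add.
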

\begin{proof}
	To prove the statement we use Theorem~\ref{th:finite_secret_ach} and set $S\equiv X$, $\widetilde{S} \equiv Y$ and $(\widetilde{S}_d,\widetilde{S}_a) \equiv Z$.
	The channel for the legitimate user does not change compared to the digital attacker. Hence, the dispersion term $V_1$ does not change compared to Theorem~\ref{th:finite_hda_ach}.
	
	Therefore, we only need to provide a proof for $V_2$.
	For $\widetilde{S}_d \neq E$ and $\widetilde{S}_a \neq E$, we have that
	\begin{align}
		P_{S,\widetilde{S}_d,\widetilde{S}_a}(s,\tilde{s},s) &= P_{S,\widetilde{S}}(s,\tilde{s}) (1-p_a) \nonumber \\
		P_{\widetilde{S}_d,\widetilde{S}_a}(\tilde{s},s) &= P_S(s) P_{\widetilde{S}|S}(\tilde{s}|s) (1-p_a) \enspace .
	\end{align}
	
	Furthermore, for $\widetilde{S}_d = E$ and $\widetilde{S}_a = E$
	\begin{align}
		P_{S,\widetilde{S}_d,\widetilde{S}_a}(s,E,E) &= P_{S}(s) p_d \nonumber \\
		P_{\widetilde{S}_d,\widetilde{S}_a}(E,E) &= p_d
	\end{align}
	
	and for $\widetilde{S}_d \neq E$ and $\widetilde{S}_a = E$
	\begin{align}
		P_{S,\widetilde{S}_d,\widetilde{S}_a}(s,\tilde{s},E) &= P_{S}(s) P_{\widetilde{S}|S}(\tilde{s}|s) (p_a-p_d) \nonumber \\
		P_{\widetilde{S}_d,\widetilde{S}_a}(\tilde{s},E) &= P_{\widetilde{S}}(\tilde{s}) (p_a-p_d)
	\end{align}
	
	Using these identities and Theorem~\ref{th:finite_secret_ach} we obtain
	\begin{align}
		V_2 = & p_d \sum_{s\in \cS} P_S(s) \log_2^2\left(\frac{P_S(s)}{\frac{1}{|\cS|}}\right) + (p_a-p_d) \sum_{\substack{s\in \cS \\ \tilde{s}\in \cS}} P_{S,\widetilde{S}}(s,\tilde{s}) \log_2^2\left(\frac{P_S(s)P_{\widetilde{S}|S}(\tilde{s}|s)}{\frac{1}{|\cS|}P_{\widetilde{S}}(\tilde{s})}\right) \nonumber \\
		&+ (1-p_a) \log_2^2(|\cS|) - \left[\log_2(|\cS|) + I(S;\widetilde{S})(p_a-p_d) - p_a H(S)\right]^2
	\end{align}
	after some elementary algebraic steps.
\end{proof}

\begin{theorem}\label{th:conv_analog_finite}
	The maximal achievable rate $\widetilde{R}^{key,ana}_{avg}(n,\varepsilon,\delta)$ for the HDA achieving a average secrecy level $\delta$ against the analog attacker described in Section~\ref{subsec:attacker_model} with error probability $\varepsilon$ during regular device operation is upper bounded by
	\begin{equation}
		\widetilde{R}^{key,ana}_{avg}(n,\varepsilon,\delta) \leq R^{q,upper}_{asymp,ana} - \sqrt{\frac{V_c}{n}} Q^{-1}(\varepsilon + \delta) + \mathcal{O}\left(\frac{\log(n)}{n}\right)
	\end{equation}
	with
	\begin{equation}
		R^{q,upper}_{asymp,ana} \defeq I(S;\widetilde{S}|W) p_d
	\end{equation}
	and
	\begin{equation}
		V_c = p_d \sum_{\substack{s\in \cS \\ \tilde{s} \in \cS}} P_{S,\widetilde{S}}(s,\tilde{s}) \log_2^2\left(\frac{P_{\widetilde{S}|S}(\tilde{s}|s)}{P_{\widetilde{S}}(\tilde{s})}\right) - p_d^2 I(S;\widetilde{S})^2 \enspace .
	\end{equation}
\end{theorem}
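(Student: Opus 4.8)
The plan is to prove this converse in the same spirit as Theorem~\ref{th:upper_bound_finite_hda}, but invoking Theorem~\ref{th:finite_secrecy_conv} in place of Theorem~\ref{th:hayashi_secret_key}. The reason the digital-attacker argument does not carry over verbatim is that for the analog attacker the Markov chain $S \fooA \widetilde{S} \fooA (\widetilde{S}_d,\widetilde{S}_a)$ \emph{fails} — the component $\widetilde{S}_a$ equals $S$ on the non-erased positions and is therefore not a degraded version of $\widetilde{S}$ — so Theorem~\ref{th:hayashi_secret_key} is unavailable, whereas Theorem~\ref{th:finite_secrecy_conv} needs no Markov assumption. Concretely, I would set $X\equiv S$, $Y\equiv\widetilde{S}$, $Z\equiv(\widetilde{S}_d,\widetilde{S}_a)$, observe that the HDA against the analog attacker is exactly the one-way secret-sharing protocol analysed in Theorem~\ref{th:finite_secret_ach}/Theorem~\ref{th:finite_secrecy_conv} applied to this source, and then separately (a) bound the leading constant and (b) evaluate the dispersion $V_c$ of Theorem~\ref{th:finite_secrecy_conv} for this particular source.

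For the leading term: whichever way one reads the leading constant of Theorem~\ref{th:finite_secrecy_conv} — as the secret key capacity $\widetilde{C}_S$ of the source, or as the one-time-pad rate $I(S;\widetilde{S})-I(S;\widetilde{S}_d,\widetilde{S}_a)$ — it is at most $R^{q,upper}_{asymp,ana}=I(S;\widetilde{S}\mid W)\,p_d$. Indeed, by Theorem~\ref{th:secret_key_capacity} the capacity is at most $I(S;\widetilde{S}\mid \widetilde{S}_d,\widetilde{S}_a)$, and the computation already carried out in the proof of Theorem~\ref{th:asymptotic_analog} (using $H(S\mid\widetilde{S},\widetilde{S}_d,\widetilde{S}_a,W)=p_a H(S\mid\widetilde{S})$ and the zero-leakage identity $H(S\mid W)=H(S)$) shows this equals $I(S;\widetilde{S}\mid W)\,p_d$; alternatively, $I(S;\widetilde{S})-I(S;\widetilde{S}_d,\widetilde{S}_a)=I(S;\widetilde{S})p_d-(1-p_a)H(S\mid\widetilde{S})\le I(S;\widetilde{S})p_d\le I(S;\widetilde{S}\mid W)p_d$, the last inequality again by zero leakage. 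Since the capacity enters the bound with a positive sign, relaxing it to $R^{q,upper}_{asymp,ana}$ preserves the inequality while the $\mathcal{O}(\log n/n)$ remainder is untouched.

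The substance of the proof is the evaluation of $V_c$. I would decompose $P_{S,\widetilde{S},\widetilde{S}_d,\widetilde{S}_a}$ into the three disjoint events used in the proof of Theorem~\ref{th:ach_analog_finite}: (i) nothing erased, weight $1-p_a$, with $\widetilde{S}_d=\widetilde{S}$ and $\widetilde{S}_a=S$; (ii) only the extra analog hole, weight $p_a-p_d$, with $\widetilde{S}_d=\widetilde{S}$ and $\widetilde{S}_a=E$; (iii) the digital hole (hence also analog), weight $p_d$, with $\widetilde{S}_d=\widetilde{S}_a=E$. The crucial observation is that on events (i) and (ii) the realization of $Z=(\widetilde{S}_d,\widetilde{S}_a)$ determines $\widetilde{S}$ (it equals $\widetilde{S}_d$), so $P_{\widetilde{S}\mid Z}$ is a point mass and $P_{S,\widetilde{S},Z}=P_{S,Z}\,P_{\widetilde{S}\mid Z}$; hence every corresponding summand of $\log_2^2\big(P_{XYZ}/(P_{XZ}P_{Y\mid Z})\big)$ and of the divergence in the definition of $V_c$ vanishes. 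Only event (iii) survives, where $Z=(E,E)$ is uninformative: there $P_{S,\widetilde{S},Z}=p_d\,P_{S,\widetilde{S}}(s,\tilde{s})$, $P_{S,Z}=p_d\,P_S(s)$, $P_{\widetilde{S}\mid Z}=P_{\widetilde{S}}$, so the log-ratio collapses to $\log_2\big(P_{\widetilde{S}\mid S}(\tilde{s}\mid s)/P_{\widetilde{S}}(\tilde{s})\big)$. Summing with weight $p_d$ yields the first term of $V_c$, and the divergence collapses to $p_d\,I(S;\widetilde{S})$, whose square gives $p_d^2 I(S;\widetilde{S})^2$ — exactly the claimed expression.

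I expect the only real friction to be the careful handling of the mixed discrete/erasure alphabet of $Z$: one has to be scrupulous that the "unerased" summands genuinely cancel (rather than contributing a nonzero but small amount) and that $P_{\widetilde{S}\mid Z}$ is read off correctly on each of the three events. There is no new information-theoretic idea beyond what already appears in Theorems~\ref{th:asymptotic_analog}, \ref{th:finite_secrecy_conv}, and \ref{th:ach_analog_finite}; once those are in hand the argument is essentially a bookkeeping exercise.
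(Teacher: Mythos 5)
Your proposal is correct and follows essentially the same route as the paper's proof: identify $X\equiv S$, $Y\equiv\widetilde{S}$, $Z\equiv(\widetilde{S}_d,\widetilde{S}_a)$, apply Theorem~\ref{th:finite_secrecy_conv}, and evaluate the dispersion by splitting into the three erasure events, of which only the doubly-erased one contributes. You are in fact more explicit than the paper on two points it leaves as ``elementary steps'' --- why the unerased events cancel exactly (the log-ratio is identically zero because $Z$ determines $\widetilde{S}$ there) and why the leading constant may be relaxed to $I(S;\widetilde{S}\mid W)\,p_d$ --- so no gap remains.
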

\begin{proof}
	To prove the statement we use Theorem~\ref{th:finite_secrecy_conv} and set $S \equiv X$, $\widetilde{S} \equiv Y$ and $(\widetilde{S}_d,\widetilde{S}_a) \equiv Z$.
	For $\widetilde{S}_d \neq E$ and $\widetilde{S}_a \neq E$ we have that
	\begin{align}
		P_{S,\widetilde{S},\widetilde{S}_d,\widetilde{S}_a}(s,\tilde{s},\tilde{s},s) &= P_{S,\widetilde{S}}(s,\tilde{s}) (1-p_a) \nonumber \\
		P_{S,\widetilde{S}_d,\widetilde{S}_a}(s,\tilde{s},s) &= P_S(s) P_{\widetilde{S}|S}(\tilde{s}|s)(1-p_a) \nonumber \\
		P_{\widetilde{S}|\widetilde{S}_d,\widetilde{S}_a}(\tilde{s}|\tilde{s},s) &= 1 \enspace ,
	\end{align}
	for $\widetilde{S}_d = E$ and $\widetilde{S}_a = E$ we have that
	\begin{align}
		P_{S,\widetilde{S},\widetilde{S}_d,\widetilde{S}_a}(s,\tilde{s},E,E) &= P_{S,\widetilde{S}}(s,\tilde{s}) p_d \nonumber \\
		P_{S,\widetilde{S}_d,\widetilde{S}_a}(s,E,E) &= P_S(s) p_d \nonumber \\
		P_{\widetilde{S}|\widetilde{S}_d,\widetilde{S}_a}(\tilde{s}|E,E) &= P_{\widetilde{S}}(\tilde{s})
	\end{align}
	and for $\widetilde{S}_d \neq E$ and $\widetilde{S}_a = E$ it holds that
	\begin{align}
		P_{S,\widetilde{S},\widetilde{S}_d,\widetilde{S}_a}(s,\tilde{s},\tilde{s},E) &= P_S(s) P_{\widetilde{S}|S}(\tilde{s}|s) (p_a-p_d) \nonumber \\
		P_{S,\widetilde{S}_d,\widetilde{S}_a}(s,\tilde{s},s) &= P_S(s) p_d \nonumber \\
		P_{\widetilde{S}|\widetilde{S}_d,\widetilde{S}_a}(\tilde{s}|\tilde{s},s) &= P_{\widetilde{S}}(\tilde{s})
	\end{align}
	Using Theorem~\ref{th:finite_secrecy_conv} and the identities outlined above directly gives the desired result after some elementary steps.
\end{proof}
\begin{remark}
	Notice that it is to be assumed that the converse bound is of poor quality as the analog erasure probability $p_a$ has no influence on the bound. The approach taken for the digital attacker to use Theorem~\ref{th:hayashi_secret_key} to find an upper bound on the maximal rate of the HDA cannot be done for the analog attacker in a straightforward manner. The reason is that Theorem~\ref{th:hayashi_secret_key} is only applicable if $S$, $\widetilde{S}$ and $(\widetilde{S}_d,\widetilde{S}_a)$ form a Markov chain, which is not the case.
\end{remark}

The results for applying Theorem~\ref{th:ach_analog_finite} and Theorem~\ref{th:conv_analog_finite} to the HDA in the analog attacker scenario are presented in
Table~\ref{tab:achievable_rates_0.18_e6_analog_equiprob} and Table~\ref{tab:achievable_rates_0.18_e6_analog_equidist}. Empty spots in the tables signal that the amount of required PUF cells is above 20000 and the respective parameter sets therefore have been considered impractical for implementations due to better alternatives. Achievability as well as converse results on the required number of PUF cells are given. Comparing the achievability results to the results for the digital attacker, we observe that for a small amount of quantization levels are almost identical. Even though we cannot say the same for the converse results this is an interesting observation. Our results suggest that it is better to use coarse input quantizers rather than fine ones. Furthermore, for a very low number of quantization intervals equiprobable quantization performs better than equidistant one. We observe that $4$ quantization levels with equiprobable input quantization give the best achievability bounds for all security levels in this case. As equiprobable quantization has the benefit of leaking no information about the secret via the reconstruction helper data $\widetilde{W}^n$ (do not confuse this with the quantization helper data $W^n$) this is particularly interesting. Aside from requiring less PUF cells from an achievability bound perspective quantizers with fewer levels are cheaper and easier to build. Also notice that the output quantizer levels need to be adjusted according to the helper data. In this respect this observation becomes even more important than in the enrollment phase.

\begin{table}[t]
	\caption{Achievability (ach.) and converse (conv.) results on the number of necessary capacitive PUF cells for $p_d=0.18$ and $p_a=0.36$, PUF reliability $\varepsilon=10^{-6}$ and security levels 128, 192 and 256 bit, analog attacker, equiprobable quantization, $\sigma_P = 2241, \sigma_N = 129$}
	\centering
	\begin{tabular}{|c|c|c|c|c|c|c|}
		\hline
		quantizer & ach.128b & conv.128b & ach.192b & conv.192b & ach.256b & conv.256b\\
		\hline
		2 & 1938 & 902 & 2923 & 1295 & 3909 & 1683\\
		4 & 1399 & 417 & 2113 & 607 & 2828 & 795\\
		8 & 1511 & 239 & 2219 & 358 & 2918 & 478\\
		16 & 5983 & 201 & 8470 & 301 & 10897 & 401\\
		32 & - & 187 & - & 280 & - & 373\\
		64 & - & 179 & - & 269 & - & 358\\
		\hline
	\end{tabular}
	\label{tab:achievable_rates_0.18_e6_analog_equiprob}
\end{table}

\begin{table}[t]
	\caption{Achievability (ach.) and converse (conv.) results on the number of necessary capacitive PUF cells for $p=0.18$ and $p_a=0.36$, PUF reliability $\varepsilon=10^{-6}$ and security levels 128, 192 and 256 bit, analog attacker, equidistant quantization, $\sigma_P = 2241, \sigma_N = 129$}
	\centering
	\begin{tabular}{|c|c|c|c|c|c|c|}
		\hline
		quantizer & ach.128b & conv.128b & ach.192b & conv.192b & ach.256b & conv.256b\\
		\hline
		2 & 1938 & 902 & 2923 & 1295 & 3909 & 1683\\
		4 & 2305 & 740 & 3482 & 1070 & 4659 & 1396\\
		8 & 1679 & 363 & 2537 & 545 & 3397 & 726\\
		16 & 1355 & 245 & 2044 & 367 & 2734 & 490\\
		32 & 2184 & 190 & 3141 & 284 & 4081 & 379\\
		64 & - & 183 & - & 275 & - & 366\\
		\hline
	\end{tabular}
	\label{tab:achievable_rates_0.18_e6_analog_equidist}
\end{table}
\section{Applying Converse Results in Security Analysis}\label{sec:converse_application}
In previous wiretap work \cite{GXKF22}, the complexity of the attacker has been estimated by $H_{att} = - \sum_{i}^{n_s} p_{s,i} \log_2(p_{s,i})$, where $n_s$ is the dimension of the code and $p_{s,i}$ denotes the symbol error rate of an attack on the information symbols of the polar code. For the parameters in the paper this leads to a claimed attack complexity of $100$ bit. However, going to the physical channel, only $n  p_d$ symbols are destroyed. For the discussed use case with $128$ PUF cells, $23$ destroyed nodes and $8$ quantization intervals, this leads to a missing information of $69$ bits depending on the quantization level. When also considering frozen bits in the polar code and additional quantization leakage, the resulting security level is notably under the claimed $100$ bit.

However, there is also another way to show that this scheme cannot achieve the required security level of $100$ bit in the digital attacker scenario. We can simply use our converse bound (Theorem~\ref{th:upper_bound_finite_hda}) and observe that for a security level of $100$ bit and error probability $10^{-6}$ we would require at least 389 PUF cells. Thereby, this construction breaks the converse bound and hence the construction cannot reach the desired security level without substantially leaking information about the secret through the helper data.

\section{Summary and Outlook}\label{sec:conclusion_pufs}

Over the last years, several practical papers for key generation for PUF-based tamper protection were introduced. In this work, we analyzed the problem with information theoretical tools for additional insights and contribute to understanding and quantifying theoretical and practical limits.

We have given achievable rates under the constraint that the quantization helper data leaks no information about the quantized PUF response $S^n$ obtained during the enrollment phase both in the asymptotic as well as the non-asymptotic setting for two different attacker models. Asymptotically we observed that for the digital attacker equiprobable input quantization performs better than equidistant input quantization in terms of achievable code rates. In case the analog attacker of concern to designer equiprobable quantization can be beneficial for larger quantization alphabets. The same observation holds in the finite length regime. Our results have practical merit since they show that coarse quantization is not only easier to implement but also often requires less capacitive cells on the foil. Furthermore, we presented converse bounds that showed that existing implementations cannot achieve a security level of 100 bits if leakage of information via the public quantization helper data is prevented. The achievability results allow to state an upper bound on the required number of capacitive PUF cells. Thereby, they permit to set design goals for PUFs with respect to the required security and reliability levels.

Our results show that we can achieve the number of required capacitive cells with an optimized implementation, e.g. through a feasible feature shrink. Consequently, by quantifying the fundamental limits of this PUF architecture, the results presented in this work contribute to guiding practical work towards theoretically secure and certifiable future implementations.

\subsection*{Acknowledgements}
The authors thank Wei Yang for helpful conversations on his work on wiretap channels in the finite blocklength regime. They would also like to thank Frans Willems for pointing out several references on zero leakage quantization that proved very helpful in the context of this work. Further thanks goes to Antonia Wachter-Zeh for helpful discussions. We would like to thank Frans Willems 
Georg Maringer would like to thank Giuseppe Durisi for hosting him twice for research visits at Chalmers University and in particular for introducing him into the field of finite blocklength information theory on those occasions.
The authors especially thank the anonymous reviewers for their comments that significantly improved the quality of this manuscript.

This research presented in this work was supported by the Bavarian Ministry of Economic Affairs, Regional Development and Energy, in the project Trusted Electronics Bavaria Center.
\bibliographystyle{alpha}
\bibliography{main}

\newcommand{\etalchar}[1]{$^{#1}$}
\begin{thebibliography}{dGSdVL16}

\bibitem[AC93]{ahlswede1993common}
Rudolf Ahlswede and Imre Csisz{\'a}r.
\newblock Common randomness in information theory and cryptography. i. secret
  sharing.
\newblock {\em IEEE Transactions on Information Theory}, 39(4):1121--1132,
  1993.

\bibitem[BB11]{bloch2011physical}
Matthieu Bloch and Joao Barros.
\newblock {\em Physical-layer security: from information theory to security
  engineering}.
\newblock Cambridge University Press, 2011.

\bibitem[BDH{\etalchar{+}}10]{BDH+10}
Ileana Buhan, Jeroen Doumen, Pieter Hartel, Qian Tang, and Raymond Veldhuis.
\newblock Embedding renewable cryptographic keys into noisy data.
\newblock {\em International Journal of Information Security}, 9:193--208,
  2010.

\bibitem[BSI24]{keylength}
BSI.
\newblock {TR-02102-1: "Cryptographic Mechanisms: Recommendations and Key
  Lengths"}.
\newblock
  \url{https://www.bsi.bund.de/SharedDocs/Downloads/EN/BSI/Publications/TechGuidelines/TG02102/BSI-TR-02102-1.html},
  2024.
\newblock [Version 2024-1].

\bibitem[BW13]{boche2013secret}
Holger Boche and Rafael~F Wyrembelski.
\newblock Secret key generation using compound sources-optimal key-rates and
  communication costs.
\newblock In {\em SCC 2013; 9th International ITG Conference on Systems,
  Communication and Coding}, pages 1--6. VDE, 2013.

\bibitem[CIW{\etalchar{+}}17]{CIW+17}
Bin Chen, Tanya Ignatenko, Frans M~J. Willems, Roel Maes, Erik van~der Sluis,
  and Georgios Selimis.
\newblock A robust {SRAM-PUF} key generation scheme based on polar codes.
\newblock In {\em IEEE Global Communications Conference (GLOBECOM)}, 2017.

\bibitem[CK78]{csiszar1978broadcast}
Imre Csisz{\'a}r and Janos Korner.
\newblock Broadcast channels with confidential messages.
\newblock {\em IEEE transactions on information theory}, 24(3):339--348, 1978.

\bibitem[CT99]{cover1999elements}
Thomas~M. Cover and Joy~A. Thomas.
\newblock {\em Elements of information theory}.
\newblock John Wiley \& Sons, 1999.

\bibitem[dGSdVL16]{GSV+16}
Joep de~Groot, Boris Skoric, Niels de~Vreede, and Jean-Paul Linnartz.
\newblock Quantization in zero leakage helper data schemes.
\newblock {\em EURASIP Journal on Advances in Signal Processing}, 2016(1):54,
  2016.

\bibitem[DGSV15]{DGSV15}
Jeroen Delvaux, Dawu Gu, Dries Schellekens, and Ingrid Verbauwhede.
\newblock Helper data algorithms for {PUF}-based key generation: Overview and
  analysis.
\newblock {\em IEEE Transactions on Computer-Aided Design of Integrated
  Circuits and Systems}, 34(6):889--902, 2015.

\bibitem[DGV{\etalchar{+}}16]{DGV+16}
Jeroen Delvaux, Dawu Gu, Ingrid Verbauwhede, Matthias Hiller, and Mandel Yu.
\newblock Efficient fuzzy extraction of {PUF}-induced secrets: Theory and
  applications.
\newblock In Benedikt Gierlichs and Axel Poschmann, editors, {\em Conference on
  Cryptographic Hardware and Embedded Systems (CHES)}, volume 9813 of {\em
  LNCS}, pages 412--431. Springer Berlin / Heidelberg, 2016.

\bibitem[DKM{\etalchar{+}}07]{draper2007using}
Stark~C Draper, Ashish Khisti, Emin Martinian, Anthony Vetro, and Jonathan~S
  Yedidia.
\newblock Using distributed source coding to secure fingerprint biometrics.
\newblock In {\em 2007 IEEE International Conference on Acoustics, Speech and
  Signal Processing-ICASSP'07}, volume~2, pages II--129. IEEE, 2007.

\bibitem[DRS04a]{Dodis2004}
Yevgeniy Dodis, Leonid Reyzin, and Adam Smith.
\newblock Fuzzy extractors: How to generate strong keys from biometrics and
  other noisy data.
\newblock In {\em International conference on the theory and applications of
  cryptographic techniques}, pages 523--540. Springer, 2004.

\bibitem[DRS04b]{DRS04}
Yevgeniy Dodis, Leonid Reyzin, and Adam Smith.
\newblock Fuzzy extractors: How to generate strong keys from biometrics and
  other noisy data.
\newblock In Christian Cachin and Jan~L. Camenisch, editors, {\em Advances in
  Cryptology (EUROCRYPT)}, volume 3027 of {\em LNCS}, pages 523--540. Springer
  Berlin / Heidelberg, 2004.

\bibitem[EFK{\etalchar{+}}12]{esbach2012new}
Thomas Esbach, Walter Fumy, Olga Kulikovska, Dominik Merli, Dieter Schuster,
  and Frederic Stumpf.
\newblock A new security architecture for smartcards utilizing {PUFs}.
\newblock In {\em ISSE 2012 Securing Electronic Business Processes: Highlights
  of the Information Security Solutions Europe 2012 Conference}, pages
  180--194. Springer, 2012.

\bibitem[FWHP23]{frisch2023practical}
Christoph Frisch, Florian Wilde, Thomas Holzner, and Michael Pehl.
\newblock A practical approach to estimate the min-entropy in {PUFs}.
\newblock {\em Journal of Hardware and Systems Security}, pages 1--9, 2023.

\bibitem[GGV17]{geis2017}
Michael Geis, Karen Gettings, and Michael Vai.
\newblock Optical physical unclonable function.
\newblock In {\em 2017 IEEE 60th International Midwest Symposium on Circuits
  and Systems (MWSCAS)}, pages 1248--1251, 2017.

\bibitem[G{\.I}K15]{gunlu2015reliable}
Onur G{\"u}nl{\"u}, Onurcan {\.I}scan, and Gerhard Kramer.
\newblock Reliable secret key generation from physical unclonable functions
  under varying environmental conditions.
\newblock In {\em 2015 IEEE International Workshop on Information Forensics and
  Security (WIFS)}, pages 1--6. IEEE, 2015.

\bibitem[G{\.I}SK19]{gunlu2019code}
Onur G{\"u}nl{\"u}, Onurcan {\.I}scan, Vladimir Sidorenko, and Gerhard Kramer.
\newblock Code constructions for physical unclonable functions and biometric
  secrecy systems.
\newblock {\em IEEE Transactions on Information Forensics and Security},
  14(11):2848--2858, 2019.

\bibitem[GOFK21]{GOFK21}
Kathrin Garb, Johannes Obermaier, Elischa Ferres, and Martin K\"onig.
\newblock {FORTRESS}: Fortified tamper-resistant envelope with embedded
  security sensor.
\newblock In {\em International Conference on Privacy, Security and Trust
  (PST)}, pages 1--12, 2021.

\bibitem[GS20]{gunlu2020optimality}
Onur G{\"u}nl{\"u} and Rafael~F Schaefer.
\newblock An optimality summary: Secret key agreement with physical unclonable
  functions.
\newblock {\em Entropy}, 23(1):16, 2020.

\bibitem[GS22]{GS22}
Jan~Sebastian Götte and Björn Scheuermann.
\newblock Can’t touch this: Inertial {HSM}s thwart advanced physical attacks.
\newblock {\em IACR Transactions on Cryptographic Hardware and Embedded
  Systems}, (1):69--93, 2022.

\bibitem[GSHO21]{GSHO21}
Kathrin Garb, Marc Schink, Matthias Hiller, and Johannes Obermaier.
\newblock Attacks and countermeasures for capacitive {PUF}-based security
  enclosures.
\newblock In {\em IEEE Physical Assurance and Inspection of Electronics
  (PAINE)}, pages 1--8, 2021.

\bibitem[GXKF22]{GXKF22}
Kathrin Garb, Marvin Xhemrishi, Ludwig K\"urzinger, and Christoph Frisch.
\newblock The wiretap channel for capacitive {PUF}-based security enclosures.
\newblock {\em IACR Transactions on Cryptographic Hardware and Embedded
  Systems}, 2022(3):165--191, 2022.

\bibitem[HKS20]{HKS20}
Matthias Hiller, Ludwig K\"urzinger, and Georg Sigl.
\newblock Review of error correction for {PUFs} and evaluation on
  state-of-the-art {FPGAs}.
\newblock {\em Journal of Cryptographic Engineering}, 2020.

\bibitem[HO17]{HO17}
Matthias Hiller and Aysun~Gurur \"Onalan.
\newblock Hiding secrecy leakage in leaky helper data.
\newblock In Wieland Fischer and Naofumi Homma, editors, {\em Conference on
  Cryptographic Hardware and Embedded Systems}, volume 10529 of {\em LNCS},
  pages 601--619. Springer Berlin / Heidelberg, 2017.

\bibitem[HTW16]{hayashi2016}
Masahito Hayashi, Himanshu Tyagi, and Shun Watanabe.
\newblock Secret key agreement: General capacity and second-order asymptotics.
\newblock {\em IEEE Transactions on Information Theory}, 62(7):3796--3810,
  2016.

\bibitem[HYKD14]{HYKD14}
Charles Herder, Mandel Yu, Farinaz Koushanfar, and Srinivas Devadas.
\newblock Physical unclonable functions and applications: A tutorial.
\newblock {\em Proceedings of the IEEE}, 102(8):1126--1141, 2014.

\bibitem[IHKS16]{IHKS16}
Vincent Immler, Maxim Hennig, Ludwig K\"urzinger, and Georg Sigl.
\newblock Practical aspects of quantization and tamper-sensitivity for
  physically obfuscated keys.
\newblock In {\em Workshop on Cryptography and Security in Computing Systems
  (CS2)}, pages 13--18. ACM, 2016.

\bibitem[IHL{\etalchar{+}}18]{IHL+18}
Vincent Immler, Matthias Hiller, Qinzhi Liu, Andreas Lenz, and Antonia
  Wachter-Zeh.
\newblock Variable-length bit mapping and error correcting codes for
  higher-order alphabet {PUFs}.
\newblock {\em Journal of Hardware and Systems Security (HASS)}, 2(4), 2018.

\bibitem[IMJFC13]{IMFC13}
Phil Isaacs, Thomas Morris~Jr, Michael~J. Fisher, and Keith Cuthbert.
\newblock Tamper proof, tamper evident encryption technology.
\newblock In {\em Pan Pacific Symposium}. SMTA, 2013.

\bibitem[Imm19]{Imm19}
Vincent Immler.
\newblock {\em Higher-Order Alphabet Physical Unclonable Functions}.
\newblock Dissertation, 2019.

\bibitem[IOK{\etalchar{+}}18]{IOK+18}
Vincent Immler, Johannes Obermaier, Martin K\"onig, Matthias Hiller, and Georg
  Sigl.
\newblock {B-TREPID}: Batteryless tamper-resistant envelope with a {PUF} and
  integrity detection.
\newblock In {\em IEEE International Symposium on Hardware Oriented Security
  and Trust (HOST)}, pages 49--56, 2018.

\bibitem[ION{\etalchar{+}}19]{ION+19}
Vincent Immler, Johannes Obermaier, Kuan~Kuan Ng, Fei~Xiang Ke, Jin~Ju Lee,
  Yak~Peng Lim, Wei~Koon Oh, Keng~Hoong Wee, and Georg Sigl.
\newblock Secure physical enclosures from covers with tamper-resistance.
\newblock {\em IACR Transactions on Cryptographic Hardware and Embedded
  Systems}, 2019(1), 2019.

\bibitem[IU19]{IU19}
Vincent Immler and Karthik Uppund.
\newblock New insights to key derivation for tamper-evident physical unclonable
  functions.
\newblock {\em IACR Transactions on Cryptographic Hardware and Embedded
  Systems}, 2019(3):30--65, 2019.

\bibitem[IW09]{ignatenko2009biometric}
Tanya Ignatenko and Frans~MJ Willems.
\newblock Biometric systems: Privacy and secrecy aspects.
\newblock {\em IEEE Transactions on Information Forensics and security},
  4(4):956--973, 2009.

\bibitem[LKP09]{liang2009compound}
Yingbin Liang, Gerhard Kramer, and H~Vincent Poor.
\newblock Compound wiretap channels.
\newblock {\em EURASIP Journal on Wireless Communications and Networking},
  2009:1--12, 2009.

\bibitem[Mau93]{maurer1993}
U.M. Maurer.
\newblock Secret key agreement by public discussion from common information.
\newblock {\em IEEE Transactions on Information Theory}, 39(3):733--742, 1993.

\bibitem[MGS13]{maiti2013systematic}
Abhranil Maiti, Vikash Gunreddy, and Patrick Schaumont.
\newblock A systematic method to evaluate and compare the performance of
  physical unclonable functions.
\newblock {\em Embedded systems design with FPGAs}, pages 245--267, 2013.

\bibitem[MS77]{MS77}
Florence~Jessie MacWilliams and Neil James~Alexander Sloane.
\newblock {\em The theory of error-correcting codes}.
\newblock North-Holland, 1977.

\bibitem[OI18]{OI18b}
Johannes Obermaier and Vincent Immler.
\newblock The past, present, and future of physical security enclosures: From
  battery-backed monitoring to {PUF}-based inherent security and beyond.
\newblock {\em Journal of Hardware and Systems Security (HASS)}, 2018.

\bibitem[OIHS18]{OIHS18}
Johannes Obermaier, Vincent Immler, Matthias Hiller, and Georg Sigl.
\newblock A measurement system for capacitive {PUF}-based security enclosures.
\newblock In {\em ACM/IEEE Design Automation Conference (DAC)}, 2018.

\bibitem[RFB{\etalchar{+}}23]{riehm2023structured}
Carl Riehm, Christoph Frisch, Florin Burcea, Matthias Hiller, Michael Pehl, and
  Ralf Brederlow.
\newblock Structured design and evaluation of a resistor-based {PUF} robust
  against {PVT}-variations.
\newblock In {\em International Symposium on Design and Diagnostics of
  Electronic Circuits and Systems (DDECS)}, pages 93--98. IEEE, 2023.

\bibitem[SAS17]{stanko2017optimized}
Taras Stanko, Fitria~Nur Andini, and Boris Skoric.
\newblock Optimized quantization in zero leakage helper data systems.
\newblock {\em IEEE Transactions on Information Forensics and Security},
  12(8):1957--1966, 2017.

\bibitem[SI23]{jilHDSB}
SOG-IS.
\newblock {Joint Interpretation Library -- Application of Attack Potential to
  Hardware Devices with Security Boxes}.
\newblock
  \url{https://www.sogis.eu/documents/cc/domains/hardware_devices/JIL-Application-of-Attack-Potential-to-Hardware-Devices-with-Security-Boxes-v3.1.pdf},
  2023.
\newblock [Version 3.1].

\bibitem[SMKT06]{SMKT06}
Boris Skoric, S.~Maubach, Tom A.~M. Kevenaar, and Pim Tuyls.
\newblock Information-theoretic analysis of capacitive physical unclonable
  functions.
\newblock {\em Journal of Applied Physics}, 100(2):024902--024902--11, 2006.

\bibitem[STZP22]{STZP22}
Paul Staat, Johannes Tobisch, Christian Zenger, and Christof Paar.
\newblock Anti-tamper radio: System-level tamper detection for computing
  systems.
\newblock In {\em IEEE Symposium on Security and Privacy (S\&P)}, pages
  1722--1736, 2022.

\bibitem[TSS{\etalchar{+}}06a]{tuyls2006read}
Pim Tuyls, Geert-Jan Schrijen, Boris Skoric, Jan Van~Geloven, Nynke Verhaegh,
  and Rob Wolters.
\newblock Read-proof hardware from protective coatings.
\newblock In {\em Cryptographic Hardware and Embedded Systems-CHES 2006: 8th
  International Workshop, Yokohama, Japan, October 10-13, 2006. Proceedings 8},
  pages 369--383. Springer, 2006.

\bibitem[TSS{\etalchar{+}}06b]{TSS+06}
Pim Tuyls, Geert-Jan Schrijen, Boris Skoric, Jan van Geloven, Nynke Verhaegh,
  and Rob Wolters.
\newblock Read-proof hardware from protective coatings.
\newblock In Louis Goubin and Mitsuru Matsui, editors, {\em Workshop on
  Cryptographic Hardware and Embedded Systems (CHES)}, volume 4249 of {\em
  LNCS}, pages 369--383. Springer Berlin Heidelberg, 2006.

\bibitem[VDRY09]{vetro2009securing}
Anthony Vetro, Stark~C Draper, Shantanu Rane, and Jonathan Yedidia.
\newblock Securing biometric data, 2009.

\bibitem[VNK{\etalchar{+}}15]{vai2015secure}
Michael Vai, Ben Nahill, Josh Kramer, Michael Geis, Dan Utin, David Whelihan,
  and Roger Khazan.
\newblock Secure architecture for embedded systems.
\newblock In {\em 2015 IEEE High Performance Extreme Computing Conference
  (HPEC)}, pages 1--5. IEEE, 2015.

\bibitem[VTO{\etalchar{+}}10]{verbitskiy2010key}
Evgeny~A Verbitskiy, Pim Tuyls, Chibuzo Obi, Berry Schoenmakers, and Boris
  Skoric.
\newblock Key extraction from general nondiscrete signals.
\newblock {\em IEEE Transactions on Information Forensics and Security},
  5(2):269--279, 2010.

\bibitem[VWN{\etalchar{+}}16]{vai2016secure}
Michael Vai, David~J Whelihan, Benjamin~R Nahill, Daniil~M Utin, Sean~R
  O’Melia, and Roger~I Khazan.
\newblock Secure embedded systems.
\newblock {\em Lincoln Laboratory Journal}, 22(1):110--122, 2016.

\bibitem[Wei00]{weingart2000physical}
Steve~H Weingart.
\newblock Physical security devices for computer subsystems: A survey of
  attacks and defenses.
\newblock In {\em International Workshop on Cryptographic Hardware and Embedded
  Systems}, pages 302--317. Springer, 2000.

\bibitem[WGP18]{wilde2018spatial}
Florian Wilde, Berndt~M Gammel, and Michael Pehl.
\newblock Spatial correlation analysis on physical unclonable functions.
\newblock {\em IEEE Transactions on Information Forensics and Security},
  13(6):1468--1480, 2018.

\bibitem[Wyn75]{wyner1975wire}
Aaron~D Wyner.
\newblock The wire-tap channel.
\newblock {\em Bell system technical journal}, 54(8):1355--1387, 1975.

\bibitem[YSP16]{yang2016finite}
Wei Yang, Rafael~F Schaefer, and H~Vincent Poor.
\newblock Finite-blocklength bounds for wiretap channels.
\newblock In {\em 2016 IEEE International Symposium on Information Theory
  (ISIT)}, pages 3087--3091. IEEE, 2016.

\bibitem[YSP19]{yang2019wiretap}
Wei Yang, Rafael~F Schaefer, and H~Vincent Poor.
\newblock Wiretap channels: Nonasymptotic fundamental limits.
\newblock {\em IEEE Transactions on Information Theory}, 65(7):4069--4093,
  2019.

\end{thebibliography}

\appendix
\section{Basic Notions of Information and Coding Theory}\label{app:basics_info_coding}
Several notions introduced in this section are standard in information theory. Hence, they can be found in textbooks like \cite{cover1999elements}.

\begin{definition}[Memoryless channel]
	We say that a channel is memoryless if it holds that
	\begin{equation}
		P_{Y^n|X^n}(\mathbf{y}^n|\mathbf{x}^n) = \prod_{i=1}^n P_{Y_i|X_i}(y_i|x_i) \enspace .
	\end{equation}
\end{definition}

\begin{definition}
We say that a channel created an error at the decoder's input if an input symbol $x$ has been mapped to an output $y\neq x \in \cX$. We define a symbol erasure to be the event that an input symbol $x$ has been erased and hence this symbol gives the receiver no information about the channel input. Notice that the synchronization between channel input and channel output is not lost in this case as it is when symbols are deleted. If an erasure occurs at the $i$-th symbol we denote this by setting the random variable $Y_i=E$, where the event $E$ denotes that an erasure occurred.
\end{definition}

\begin{definition}
	We denote the average block error probability by
	\begin{equation}
		\overline{P}_e \coloneqq \frac{1}{|\cM|} \sum_{m \in \cM} P_e(m)\enspace ,
	\end{equation}
	where
	\begin{equation}
		P_e(m) = P_{Y^*|\cM}(\cD(Y^*)\neq m|m)
	\end{equation}
	denotes the probability that an error occurs if the message $m$ is transmitted.
	We denote the maximal error probability by
	\begin{equation}
		P_{e,max} = \max_{m\in \cM} P_e(m) \enspace .
	\end{equation}
\end{definition}

\begin{definition}
We define the \textbf{rate} $R$ of a code of cardinality $M$ and blocklength $n$ to be
\begin{equation}
    R \coloneqq \frac{1}{n} \log_2(M) \enspace .
	\end{equation}
\end{definition}

\begin{definition}
	We define the support of a probability mass function $P_X$ for a random variable $X$ taking values in a set $\cX$ by
	\begin{equation}
		\supp(P_X) \defeq \{x \in \cX: P_X(x)>0 \} \enspace . 
	\end{equation}
\end{definition}
\begin{definition}\label{def:disc_entropy}
	We define the \textbf{entropy} of a discrete RV $X$ to be
	\begin{equation}
		H(X) \coloneqq -\sum_{a \in \supp(P_X)} P_X(a) \log_2 (P_X(a))
	\end{equation}
	and the conditional entropy of $X$ given that some random variable $Z=z$ to be
	\begin{equation}
		H(X|Z=z) \coloneqq -\sum_{a \in \supp(P_{X|Z=z})} P_{X|Z} (a|z) \log_2(P_{X|Z}(a|z)) \enspace .
	\end{equation}
	Furthermore, we denote the conditional entropy of $X$ given $Z$ to be
	\begin{equation}\label{eq:cond_entropy}
		H(X|Z) \coloneqq -\sum_{(a,b) \in \supp(P_{XZ})} P_{XZ}(a,b) \log_2(P_{X|Z}(a|b)) = \sum_{b \in \supp(Z)} P_Z(b) H(X|Z=b) \enspace .
	\end{equation}
\end{definition}

\begin{definition}
	We define the \textbf{differential entropy} of a continuous RV $X$ with probability density function (pdf) $f_X$ by
	\begin{equation}
		h(X) \coloneqq -\int_{a \in \supp{f_X}} f_X(a) \log_2 (f_X(a)) \diff a \enspace .
	\end{equation}
	The definition of the conditional differential entropy of $X$ given some random variable $Z$ is analogous to eq.~\eqref{eq:cond_entropy} in Definition~\ref{def:disc_entropy}.
\end{definition}

\begin{definition}
	We define the \textbf{variational distance} $d(P,Q)$ between two probability mass functions (pmfs) $P$ and $Q$ defined over the same domain $\mathcal{X}$ by
	\begin{equation}
		d(P,Q) = \frac{1}{2} \sum_{x \in \mathcal{X}} \lvert P(x) - Q(x) \rvert \enspace .
	\end{equation}
\end{definition}
\begin{definition}
	We specify the Kullback-Leibler divergence between two pmfs $P$ and $Q$ over the same domain by
	\begin{equation}
		D(P||Q) = \sum_{x \in \mathcal{X}} P(x) \log_2\left(\frac{P(x)}{Q(x)}\right) \enspace ,
	\end{equation}
	where we define $0 \log_2(0):=0$.
\end{definition}
\begin{definition}
	The mutual information between two random variables $X$ and $Y$ is specified by
	\begin{align}
		I(X;Y) &\coloneqq D(P_{XY}||P_X P_Y) \nonumber \\
		&= \sum_{(a,b) \in \supp(P_{XY})} P_{XY}(a,b) \log_2 \left( \frac{P_{XY}(a,b)}{P_X(a)P_Y(b)} \right) = H(Y) - H(Y|X) \enspace .
	\end{align}
	If the random variable $Y$ is continuous it holds that
	\begin{equation}
		I(X;Y) \coloneqq h(Y) - h(Y|X) \enspace .
	\end{equation}
\end{definition}

\begin{definition}
	The mutual information $I(X;Y|Z=c)$ between two random variables $X$ and $Y$ conditioned on a discrete random variable $Z$ taking the value $c$ is defined by
	\begin{equation}
		I(X;Y|Z=c) \defeq H(X|Z=c) - H(X|Y,Z=c)
	\end{equation}
	if $X$ is discrete and
	\begin{equation}
		I(X;Y|Z=c) \defeq h(X|Z=c) - h(X|Y,Z=c)
	\end{equation}
	if $X$ is continuous.
	Similar to Definition~\ref{def:disc_entropy} we define
	\begin{equation}
		I(X;Y|Z)\defeq \sum_{c\in \supp(P_Z)} P_Z(c)\, I(X;Y|Z=c) \enspace .
	\end{equation}
	For continuous $Z$ with pdf $f_Z$ we define
	\begin{equation}
		I(X;Y|Z) = \int_{c\in \supp(f_Z)} f_Z(c)\,  I(X;Y|Z=c) \diff c \enspace .
	\end{equation}
\end{definition}

\end{document}